\documentclass[a4paper, 11pt]{article}
\usepackage[usenames]{xcolor}
\usepackage{xifthen}
\usepackage{algorithmic}
\usepackage{comment}
\usepackage{tcolorbox}
\usepackage{relsize}
\usepackage[font=footnotesize]{caption}
\usepackage[normalem]{ulem}
\usepackage{bm}

\def\fontsettingup{2} 

\usepackage{amsmath, amsthm, amssymb}
\usepackage[T1]{fontenc}
\ifthenelse{\fontsettingup = 1}{ \usepackage{eulerpx, eucal, tgpagella}   }{}
\ifthenelse{\fontsettingup = 2}{ \usepackage{mathpazo, eufrak, tgpagella} }{}

\usepackage{hyperref}
\hypersetup{
  colorlinks=true,
  linkcolor=blue,
  filecolor=magenta,
  urlcolor=cyan,
  citecolor=blue
}

\usepackage[linesnumbered,boxed,ruled,vlined]{algorithm2e}
\usepackage{tikz}
\usepackage{cleveref}
\usepackage{makecell}

\usepackage[a4paper]{geometry}
\newgeometry{
textheight=9in,
textwidth=6.5in,
top=1in,
headheight=14pt,
headsep=25pt,
footskip=30pt
}

\newtheorem{theorem}{Theorem}

\newtheorem*{claim*}{Claim}

\newtheorem{fact}[theorem]{Fact}
\newtheorem{lemma}[theorem]{Lemma}

\theoremstyle{definition}

\newtheorem{definition}[theorem]{Definition}
\newtheorem{remark}[theorem]{Remark}
\newtheorem*{remark*}{Remark}


\ifthenelse{\fontsettingup = 1}{
  \def\*#1{\mathbf{#1}} 
  \def\+#1{\mathcal{#1}} 
  \def\-#1{\mathrm{#1}} 
  \def\^#1{\mathbb{#1}} 
  \def\!#1{\mathfrak{#1}} 
}{}

\ifthenelse{\fontsettingup = 2}{
  \def\*#1{\boldsymbol{#1}} 
  \def\+#1{\mathcal{#1}} 
  \def\-#1{\mathrm{#1}} 
  \def\^#1{\mathbb{#1}} 
  \def\!#1{\mathfrak{#1}} 
}{}


\def\oPr{\mathbf{Pr}}
\renewcommand{\Pr}[2][]{ \ifthenelse{\isempty{#1}}
  {\oPr\left[#2\right]}
  {\oPr_{#1}\left[#2\right]} } 

\def\oE{\mathbb{E}}
\newcommand{\E}[2][]{ \ifthenelse{\isempty{#1}}
  {\oE\left[#2\right]}
  {\oE_{#1}\left[#2\right]} }

\DeclareMathOperator*{\oVar}{\mathbf{Var}}
\newcommand{\Var}[2][]{ \ifthenelse{\isempty{#1}}
  {\oVar\left[#2\right]}
  {\oVar_{#1}\left[#2\right]} }

\def\oEnt{\mathbf{Ent}}
\newcommand{\Ent}[2][]{ \ifthenelse{\isempty{#1}}
  {\oEnt\left[#2\right]}
  {\oEnt_{#1}\left[#2\right]} }


\renewcommand{\epsilon}{\varepsilon}

\newcommand{\norm}[1]{\left\Vert#1\right\Vert}

\newcommand{\lap}{\texttt{Lap}}
\newcommand{\maxwt}{W_{\mathsf{max}}}



\newcommand{\Paren}[1]{\left(#1\right)}








\newcommand{\Erdos}{Erd\H{o}s\xspace}
\newcommand{\Renyi}{R\'enyi\xspace}

\let\epsilon=\varepsilon

\title{Optimal Bounds on Private Graph Approximation}
\date{}
\author{Jingcheng Liu\thanks{State Key Laboratory for Novel Software Technology, Nanjing University, 163 Xianlin Avenue, Nanjing, Jiangsu Province, China. \textnormal{E-mail: \url{liu@nju.edu.cn}, \url{zou.zongrui@smail.nju.edu.cn}}.}
\and 
Jalaj Upadhyay\thanks{Rutgers University, USA. \textnormal{E-mail: \url{jalaj.upadhyay@rutgers.edu}}. Research supported in part by Rutgers Decanal Grant no. 302918.}
\and 
Zongrui Zou\footnotemark[1]
}

\begin{document}

\maketitle

\begin{abstract}
  
   We propose an efficient $\epsilon$-differentially private algorithm, that given a simple {\em weighted} $n$-vertex, $m$-edge graph $G$ with a \emph{maximum unweighted} degree $\Delta(G) \leq n-1$, outputs a synthetic graph which approximates the spectrum with $\widetilde{O}(\min\{\Delta(G), \sqrt{n}\})$ bound on the purely additive error. To the best of our knowledge, this is the first $\epsilon$-differentially private algorithm with a non-trivial additive error for approximating the spectrum of the graph. One of the subroutines of our algorithm also precisely simulates the exponential mechanism over a non-convex set,  which could be of independent interest given the recent interest in sampling from a {\em log-concave distribution} defined over a convex set. As a direct application of our result, we give the first non-trivial bound on approximating all-pairs {\em effective resistances} by a synthetic graph, which also implies approximating {\em hitting/commute time} and {\em cover time} of random walks on the graph. Given the significance of effective resistance in understanding the statistical properties of a graph, we believe our result would have further implications.

   Spectral approximation also allows us to approximate all possible $(S,T)$-cuts, but it incurs an error that depends on the maximum degree, $\Delta(G)$. We further show that using our sampler, we can also output a synthetic graph that approximates the sizes of all $(S,T)$-cuts on $n$ vertices weighted graph $G$ with $m$ edges while preserving $(\epsilon,\delta)$-differential privacy and an additive error of $\widetilde{O}(\sqrt{mn}/\epsilon)$. We also give a matching lower bound (with respect to all the parameters) on the private cut approximation for weighted graphs. This removes the gap of $\sqrt{W_{\mathsf{avg}}}$ in the upper and lower bound in Eli{\'a}{\v{s}}, Kapralov, Kulkarni, and Lee (SODA 2020), where $W_{\mathsf{avg}}$ is the average edge weight. 
\end{abstract}

\thispagestyle{empty}

\clearpage
\pagenumbering{arabic}

\tableofcontents

\newpage

\section{Introduction}

Graph data provides a flexible and intuitive framework to represent and analyze complex relationships, making it ideal for modeling the intricate interactions between individuals. As a widely used example, social networks serve as real-world manifestations of graph data, where individuals or entities are represented as vertices, and the relationships or interactions between them are represented as edges. These data are inherently sensitive and confidential, which makes the publication of such graph data, or releasing information on the relationships between entities, seen as a threat to user's privacy~\cite{hay2009accurate, abawajy2016privacy, sun2019analyzing}.
So, it is imperative to perform such analysis under a robust privacy guarantee. One such privacy guarantee is  {\em differential privacy}  (DP)~\cite{dwork2006calibrating}, which recently has not only spurred extensive research in both theoretical and practical aspects of privacy-preserving techniques but also found application in the real world~\cite{abowd2021geographic, Apple, ding2017collecting, rappor, jiang2021applications, kenthapadi2018pripearl, rogers2020linkedin, wang2022privlbs}.
\begin{definition}[$(\epsilon,\delta)$-differential privacy]\label{def.dp}
    Let $\mathcal{A}:\mathcal{D}\rightarrow \mathcal{R}$ be a randomized algorithm, where $\mathcal{R}$ is the output domain. For fixed $\epsilon >0$ and $\delta\in [0,1)$, we say that $\mathcal{A}$ preserves $(\epsilon,\delta)$-differential privacy if for any measurable set $S\subset \mathcal{R}$ and any pair of neighboring datasets $x,y\in \mathcal{D}$, it holds that 
    $$\mathsf{Pr}[\mathcal{A}(x)\in S] \leq \mathsf{Pr}[\mathcal{A}(y)\in S]\cdot e^{\epsilon} + \delta.$$
    If $\delta = 0$, we also say $\mathcal{A}$ preserves pure differential privacy (denoted by $\epsilon$-DP).
\end{definition}

In this paper, we study the problem of generating a synthetic graph that maintains certain algebraic (i.e., spectrum) and combinatorial properties (i.e., cut function) of the input graph while ensuring differential privacy. We identify an $n$ vertices graph $G=(V,E,w)$ by its weight vector  $w \in \mathbb{R}_{+}^{N}$ that encodes the weight on the edges of the graph. 
Here, $\mathbb{R}_{+}$ denotes the set of non-negative real numbers and $N={n\choose 2}$ is the total number of possible edges. 
Every coordinate of $w$, denoted by $w_e$, corresponds to an edge $e=\{u,v\}$ identified by its endpoints $u,v\in V$. With this notation, the sum of weights of all edges in the graph is simply its $\ell_1$-norm, $\|w\|_1$, the number of edges is $m=\|w\|_0$, and the maximum edge weight is the $\ell_\infty$-norm, $\maxwt = \| w\|_\infty$. For the ease of readers, we present all our notations in Table~\ref{tab:notation}. 

The definition of differential privacy relies on the notion of neighboring datasets. We use the standard notion of differential privacy on graphs known as {\em edge-level differential privacy}: two graphs $G=(V,E,w)$ and $G'=(V,E',w')$ with $w,w' \in \mathbb R_{+}^{N}$ are {\em neighboring} if $\| w - w'\|_0 \leq 1$ and $\| w- w'\|_\infty \leq 1$. That is, $G$ and $G'$ differ in one edge by weight at most {\em one}. 
Equipped with these definitions, we give an overview of our contributions on private graph approximations followed by an overview of our techniques.

\paragraph{Contribution I: Approximating Graph Spectrum}

Algebraic graph theory is a subarea of mathematics that studies how combinatorial properties of graphs are related to algebraic properties of associated matrices (adjacency or Laplacian matrices). For example,  
the graph Laplacian encodes many important graph properties~\cite{biggs1993algebraic, godsil2001algebraic, spielman2007spectral}  and unsurprisingly estimating it accurately has been studied extensively~\cite{batson2012twice, cohen2018approximating, kelner2013simple, peng2014efficient,spielman2011spectral}. The question can be posed as follows: 
given a positively weighted $n$-vertex graph $G$, output a synthetic graph $\widehat{G}$ such that $\|L_G - L_{\widehat{G}}\|_2$ is minimized, where $\| A \|_2$ denote the spectral norm of matrix $A$. Here, $L_G\in \mathbb{R}^{n\times n}$ denotes the Laplacian matrix of $G$.
Let $D_G\in \mathbb{R}^{n\times n}$ be a diagonal matrix with weighted degrees on the diagonal: $D_G[u,u] = \sum_{e=\{u,v\}\in E} w_e$ for every $u\in [n]$, and $A_G$ be the adjacency matrix of $G$, then $L_G := D_G - A_G$ (see also Definition~\ref{defn.laplacian}).

There has been some work on differentially private estimation of graph spectrum with the best-known result ensuring that  $\|L_G - L_{\widehat{G}}\|_2 =\widetilde{O}(\sqrt{n}/\epsilon)$ with $(\epsilon,\delta)$-differential privacy~\cite{dwork2014analyze}. This is meaningful only for dense graphs ($i.e., m = \Omega(n^{3/2})$). 
In contrast, large-scale graphs, such as social networks, are typically sparse, with vertex degrees often negligible compared to the scale of $n$ (see \cite{goswami2021sparsity} and references therein). This raises the following question:
\begin{quote}
  {\em Is there a private algorithm that achieves a better utility (than $O(\sqrt{n})$) on the spectral approximation for sparse graphs, or graphs with a bounded (unweighted) degree, i.e., maximum number of edges incident on any vertex?}
\end{quote}

We answer this question in the affirmative. In particular, we show the following:

\begin{theorem}
[Informal version of Theorem~\ref{t.spectral}]
\label{t.intro_spectrum}
  Let $n\in \mathbb N$. Fix any $\epsilon>0$. There is a polynomial time $\epsilon$-differentially private algorithm such that, on input an $n$-vertex graph $G$ with maximum unweighted degree $\Delta(G)$, it output a synthetic graph $\widehat{G}$ such that with high probability,
  $$\left\|L_G - L_{\widehat{G}}\right\|_2 = O\left(\frac{\Delta(G)\log^2(n)}{\epsilon}\right).$$
\end{theorem}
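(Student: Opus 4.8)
The plan is to exploit a simple but decisive structural fact: if $G$ has \emph{unweighted} maximum degree $\Delta(G)$, then $A_G$ has at most $\Delta(G)$ nonzero entries per row, so the Laplacian is ``$\Delta(G)$-row-sparse.'' Hence, if $\widehat A$ is any non-negative symmetric matrix whose support is contained in that of $A_G$ and which is entrywise within $\eta$ of $A_G$, then the error matrix $\widehat A - A_G$ has at most $\Delta(G)$ nonzeros per row, each of magnitude $\le \eta$, so by the bound $\|M\|_2 \le \max_u \sum_v |M_{uv}|$ for symmetric $M$ we get $\|\widehat A - A_G\|_2 \le \Delta(G)\,\eta$, and likewise for the induced degree diagonals $\widehat D := \mathrm{diag}(\widehat A \mathbf{1})$ and $D_G$. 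Therefore, writing $\widehat G$ for the weighted graph with adjacency matrix $\widehat A$,
\[
  \left\|L_G - L_{\widehat G}\right\|_2 \;\le\; \left\|\widehat D - D_G\right\|_2 + \left\|\widehat A - A_G\right\|_2 \;\le\; 2\,\Delta(G)\cdot\eta .
\]
So it suffices to privately estimate all edge weights with $\ell_\infty$-error $\eta = \widetilde O(1/\epsilon)$ \emph{while keeping the support from blowing up}, and then declare $\widehat G$ to be the resulting weighted graph. Notice the algorithm will not need to know $\Delta(G)$; it only appears in the analysis.

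For the estimation step I would use that, under edge-level privacy, the edge-weight vector $w\in\mathbb{R}_{+}^{N}$ ($N=\binom n2$) has $\ell_1$-sensitivity exactly $1$ — a neighboring change perturbs a single coordinate by at most $1$ — so it behaves like a histogram. Apply the Laplace mechanism coordinatewise, $\widehat w_e = w_e + \mathrm{Lap}(1/\epsilon)$, and then threshold at $\tau = \Theta(\log(N)/\epsilon) = \Theta(\log n/\epsilon)$: keep $\widehat w_e$ only when $\widehat w_e > \tau$, and set it to $0$ otherwise. Releasing the perturbed counts is $\epsilon$-differentially private, and thresholding together with forming $\widehat G$ (and $\widehat D = \mathrm{diag}(\widehat A\mathbf{1})$) is post-processing, so the whole procedure is $\epsilon$-DP. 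A union bound over the $N$ Laplace variables gives that with probability $\ge 1-1/\mathrm{poly}(n)$ we have $\max_e |\widehat w_e - w_e| \le \tau$; this yields two consequences at once. First, no edge with $w_e = 0$ survives the threshold, so $\mathrm{supp}(\widehat w)\subseteq \mathrm{supp}(w)$ and $\widehat G$ again has unweighted degree $\le \Delta(G)$. Second, every edge is recovered with additive error $O(\tau)$: surviving edges directly, and killed edges because $\widehat w_e \le \tau$ together with $|\widehat w_e - w_e|\le\tau$ forces $w_e \le 2\tau$. Thus we may take $\eta = O(\tau) = O(\log n/\epsilon)$.

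Plugging this into the displayed inequality gives $\|L_G - L_{\widehat G}\|_2 = O\!\big(\Delta(G)\log(n)/\epsilon\big)$ with high probability; the extra logarithm in the stated bound leaves room for the more refined exponential-mechanism sampler the paper builds over the non-convex family of bounded-degree graphs, which additionally yields the $\min\{\Delta(G),\sqrt n\}$ refinement (when $\Delta(G)$ is large one can instead add Laplace noise directly to $A_G$ and bound the spectral norm of the random symmetric noise matrix by $\widetilde O(\sqrt n/\epsilon)$, taking whichever algorithm is better), the exact spectral control needed for the effective-resistance corollary, and the cut results. The only genuinely delicate point in the elementary argument is the behavior near the threshold — arguing that a failed edge must have had small true weight and that a $\mathrm{Lap}(1/\epsilon)$ variable conditioned to exceed $\tau$ is still $O(\tau)$ — which is routine once $\tau$ is a large enough multiple of $\log(n)/\epsilon$; in the paper's exponential-mechanism route the corresponding obstacle is instead showing that the sampler over the non-convex domain can be simulated \emph{exactly} in polynomial time and that its volume/packing-based utility analysis does not pay for the ambient (quadratic) dimension.
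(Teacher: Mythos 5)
Your proposal is correct, and it is a genuinely different — and in fact more elementary — route than the one in the paper. The paper's Algorithm~\ref{alg1} privately releases a noisy edge count $\widehat m$, uses the exponential mechanism over $\binom{[N]}{\widehat m}$ (the ``topology sampler'' $\texttt{TS}_\epsilon$, realized by the marginal sampler of Section~\ref{s.sampler}) to pick a new edge set $\widehat E$, and then reweighs the chosen edges with Laplace noise; the utility analysis then shows (Lemmas~\ref{l.3},~\ref{l.1},~\ref{l.5}) that $\Delta(\widehat G) = O(\Delta(G)\log n)$ w.h.p.\ via an \Erdos--\Renyi comparison argument, shows (Lemma~\ref{l.4}) that every edge weight in $E\cup\widehat E$ is reproduced to within $O(\log n/\epsilon)$, and finally appeals to Bilu--Linial (Lemma~\ref{l.spectral_bilu}). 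Your approach replaces all of this with a single Laplace-plus-threshold step: add $\mathrm{Lap}(1/\epsilon)$ to every coordinate of $w$ (correct, since $\mathsf{sens}_1 = 1$ under the paper's neighboring relation), threshold at $\tau = \Theta(\log n/\epsilon)$, and observe that on the high-probability event $\max_e |Z_e| \le \tau$ the support of $\widehat w$ is contained in $E$ while every entrywise error is $O(\tau)$; the symmetric row-sum bound $\|M\|_2 \le \max_u \sum_v |M_{uv}|$ then closes the argument. Two remarks on the comparison. First, because your thresholded graph never gains edges, your bound on $\Delta(\widehat G)$ has no extra $\log n$ (whereas the paper's topology sampler blows the degree up by a $\log n$ factor), and since in the paper's own proof the Bilu--Linial quotient $\alpha$ is already of the same order as the row sum $\ell$ (so that $\log(\ell/\alpha)=O(1)$, and Bilu--Linial offers no improvement over the row-sum bound there), your analysis in fact yields $O(\Delta(G)\log n/\epsilon)$, a $\log n$ improvement over the paper's stated $O(\Delta(G)\log^2 n/\epsilon)$ — you underrate this when you suggest the extra log ``leaves room'' for the paper's sampler. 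Second, what the paper's machinery buys that yours does not is elsewhere: the exact sampler for the sparsity-constrained Gibbs distribution is the novel object of independent interest, it is reused in Algorithm~\ref{alg2} for the cut result (where the key property is that \emph{un}selected edges have weight $O(\log n/\epsilon)$ and where the number of published edges is controlled at $\widehat m$), and it produces a synthetic graph with a prescribed edge count rather than the sparser graph your thresholding produces. Your $\min\{\Delta(G),\sqrt n\}$ aside and the ``near-threshold'' point you flag are handled exactly as you indicate (by the union bound over $N$ Laplace variables), so the argument is complete.
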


\begin{table}[t!]
  \centering
  \caption{The comparison of existing differentially private algorithms on spectral approximation.}\label{table.spectral}
  \begin{tabular}{|c|c|c|c|c|}
      
      \hline
      \textbf{Method} &  \makecell[c]{\textbf{Additive error on} \\\textbf{spectral approximation}} &  \makecell[c]{$\delta = 0$\textbf{?}} & \makecell[c]{\textbf{Releasing a}\\\textbf{synthetic graph?}} & \makecell[c]{\textbf{Additive error on}\\\textbf{commute/cover time}} \\
      \hline 
      \makecell[c]{JL mechanism \\ \cite{blocki2012johnson}} & ${{O}}\left( \frac{\sqrt{n}\log^{1.5}(n/\delta)}{\epsilon}\right)$ & No  &  No & $\widetilde{O}\left(\|w\|_1\cdot \frac{1}{\lambda^2(G)}\right)$ \\
      \hline
      \makecell[c]{Analyze Gauss \\\cite{dwork2014analyze}} & ${{O}}\left( \frac{\sqrt{n}\log(n)\log(1/\delta)}{\epsilon}\right)$ & No  & Yes & $\widetilde{O}\left(\|w\|_1\cdot \frac{\sqrt{n}}{\lambda^2(G)}\right)$ \\
      \hline 

      {This paper} & {${O}\left(\frac{\Delta(G)\log^2(n)}{\epsilon}\right)$} &  \textbf{Yes} & \textbf{Yes} & $\widetilde{O}\left(\|w\|_1\cdot \frac{\Delta(G)}{\lambda^2(G)}\right)$\\
      \hline 
  \end{tabular}
  \vspace{-3mm}
\end{table}
\noindent We summarize a comparison of our bounds on spectral approximation and related graph parameters with previous works in Table \ref{table.spectral}\footnote{In contrast with our result, the JL mechanism~\cite{blocki2012johnson} introduces a multiplicative error.}. In the table, $\widetilde{O}(\cdot)$ also hides the $\epsilon$ and $\log(1/\delta)$ term (for $\delta \neq 0$). For spectral approximation, the second column of \Cref{table.spectral} clearly indicates that we achieve better accuracy when $\Delta(G) = o(\sqrt{n})$. Notably, this is the first $\epsilon$-DP algorithm for graph spectral approximation with a non-trivial accuracy guarantee\footnote{One can use the low-rank approximation algorithm~\cite{kapralov2013differentially} to output a spectral approximation of a graph while preserving $\epsilon$-differential privacy. This would result in an additive error $\beta \lambda_1(G)$ as long as $\lambda_n(G) = \Omega(n^4/\epsilon \beta)$, where $\lambda_n(G)$ is the largest eigenvalue of $L_G$.}.

\paragraph{Applications of Estimating Graph Spectrum.}

The spectrum of a graph captures many important graph parameters and statistics. In particular,   
through the connection between random walks and electrical networks, preserving the graph spectrum means that many statistics of random walks are also preserved: this includes hitting time, commute time, and cover time
~\cite{chandra1989electrical,tetali1991random}. These statistics play important roles in the analysis of graph structure, especially in random walk-based algorithms~\cite{von2014hitting}. Further, the all-pairs effective resistances of a graph can also define an $\ell_2$ metric and has been found useful in giving a deterministic constant-factor approximation algorithm for the cover time~\cite{ding2011cover}. 
In Section \ref{s.application}, we present a direct application of Theorem \ref{t.intro_spectrum}, demonstrating how to approximate these critical graph parameters for \textit{connected} graphs. A comparison with existing approaches is also shown in Table \ref{table.spectral}. In the fifth column of Table \ref{table.spectral}, we use $\lambda(G)$ to denote the spectral gap, namely the smallest non-zero eigenvalue of the graph Laplacian matrix $L_G$.  

To achieve these bounds, our synthetic graph approximates all-pairs effective resistances (Definition~\ref{defn.effective_resistance}) within $\widetilde{O}(\Delta(G)/\lambda^2(G))$ error. 
We show in \Cref{s.application} that the sensitivity of one-pair effective resistance is already $\Theta(1/\lambda^2(G))$, so a naive approach would incur an error $\Theta(n^2/\lambda^2(G))$ while ensuring $\epsilon$-DP. In contrast, we show how to release a synthetic graph that approximates all-pairs effective resistances while ensuring $\epsilon$-DP by paying only an extra factor of $\Delta(G)$.

\paragraph{Contribution II: Cut Queries on a Graph.} One of the most prominent problems studied in private graph analysis is preserving cut queries~\cite{arora2019differentially, gupta2012iterative, blocki2012johnson, dwork2014analyze, eliavs2020differentially}: given a graph $G$, the goal is to output a synthetic graph $\widehat G$ that preserves the sizes of all cuts in $G$, i.e.,  minimize 
\[
\alpha = \max_{S,T \subseteq V, \atop S\cap T  = \varnothing}\left\vert \Phi_G(S,T) - \Phi_{\widehat G}(S,T)  \right\vert, \quad \text{where} \quad \Phi_G(S,T) := \sum_{e=\{s,t\}, s \in S, t \in T} w_e
\]
denote the size of $(S,T)$-cut in $G$. When $T=V\backslash S$, we write $\Phi_G(S)$ instead of $\Phi_G(S,V\backslash S).$

Using the Analyze Gauss algorithm~\cite{dwork2014analyze}, one can achieve this goal while incurring an additive error, $\alpha = \widetilde{O}(n^{3/2}/\epsilon)$, where $\widetilde{O}(\cdot)$ hides the $\text{poly}(\log(n))$ factor. As mentioned earlier, this is nontrivial only for dense graphs (i.e., $\|w\|_0 \gg n^{3/2}$). This can be improved using Theorem~\ref{t.intro_spectrum} when $\Delta(G)=o(\sqrt{n})$. Recently, Eli{\'a}{\v{s}} et al.~\cite{eliavs2020differentially} showed the following:  
\begin{itemize}
    \item There is an efficient algorithm that achieves $\alpha = \widetilde{O}\Paren{\sqrt{\|w\|_1n \over \epsilon}}$~\cite[Theorem 1.1]{eliavs2020differentially}.
    
    \item To complement their upper bound, they showed the following lower bound on achievable accuracy:
    \begin{enumerate}
    \item [(i)] for any $(\epsilon,\delta)$-differentially private algorithm, $\alpha = \Omega\Paren{(1-c)\sqrt{\| w\|_0n  \over\epsilon}}$ for an unweighted graph (where $c = {9\delta (e-1) \over e^\epsilon-1}$); and
    \item [(ii)] for any $\epsilon$-differentially private algorithm, $\alpha =\Omega\Paren{\sqrt{\|w\|_0n} \over \epsilon}$ for graphs with  $\|w\|_1=\Theta(\|w\|_0/\epsilon)$. 
    \end{enumerate} 
    \end{itemize}

So, for unweighted graphs, $\alpha = \widetilde \Theta\Paren{\sqrt{\|w\|_0 n/\epsilon}}$ for any $(\epsilon,\delta)$-differentially private algorithm; however, most real-world graphs are weighted~\cite{liu2009privacy, garlaschelli2009weighted, das2010anonymizing, pacheco2020uncovering} and $\|w\|_0$ and $\|w\|_1$ do not coincide (even asymptotically). In fact, a common assumption made in graph theoretic algorithms is that weights are some polynomial $f(n)$. 
In this case, $\|w\|_0$ and $\|w\|_1$  can differ by a factor of $f(n)$. Even from a purely theoretical point of view, this discrepancy between upper and lower bounds for weighted graphs is unsatisfying. This motivates us to study the question:
\begin{quote}
  {\em A weighted graph has many parameters: number of vertices, number of edges, maximum weights on any edge, and sum of weights, etc. Which of these parameter(s) characterizes the accuracy guarantee of differentially private cut approximation?}
\end{quote}

We resolve this question. In particular, we show the correct dependency of $\alpha$ with respect to the number of edges, $\|w\|_0$, and the privacy parameter, $\epsilon$: 

\begin{theorem}
[Informal version of Theorem~\ref{t.optimal_error} and \ref{t.lowerbound_cut}]
\label{t.intro_cut}
  Let $n \in \mathbb N$ and $N = {n \choose 2}$. Fix $\frac{1}{n}\leq \epsilon<\frac{1}{2}$ and $0<\delta<\frac{1}{2}$. There is a $(\epsilon,\delta)$-differentially private algorithm that, given  an $n$ vertices weighted graph $G = (V,E,w)$ with $w\in \mathbb{R}_{+}^{N}$ as input, outputs a graph $\widehat{G}$ such that with probability at least $1-o(1)$, for any disjoint $S,T\subseteq V$,
  $$|\Phi_G(S,T) - \Phi_{\widehat{G}}(S,T)| =  O\left(\frac{\sqrt{\|w\|_0 n}}{\epsilon}\log^3\left(\frac{n}{\delta}\right)\right).$$
  Further, there is no $(\epsilon,\delta)$-differentially private algorithm that, given a (weighted) graph, answers all $(S, T)$-cuts with $\alpha = o\left({(1-c)\over \epsilon}\sqrt{\|w\|_0 n} \right)$, where $c = {9\delta (e-1) \over e^\epsilon-1}$.

\end{theorem}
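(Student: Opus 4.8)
This statement packages two essentially independent results, and I would prove them separately. For the \emph{upper bound}, the whole game is to make the error scale with $\|w\|_0$ rather than with $\|w\|_1$, i.e.\ to avoid treating a weight-$w$ edge as $w$ parallel unit edges (the source of the stray $\sqrt{\|w\|_1}$ in Eli\'a\v{s} et al.). A clean way to see why $\|w\|_0$ is the right parameter --- and a warm-up reduction I would carry out first --- is a dyadic bucketing by weight: write $G=\sum_j G_j$ with $G_j$ supported on the $m_j:=\|G_j\|_0$ edges of weight in $[2^j,2^{j+1})$. An edge-level neighbor of $G$ perturbs essentially a single bucket (up to a boundary-stability gadget for edges whose weight crosses a power of two), so by parallel composition it suffices to $(\epsilon,\delta)$-privatize each bucket; and rescaling $G_j$ by $2^{-j}$ turns a weight-$1$ perturbation into a weight-$2^{-j}$ one, i.e.\ bucket $j$ behaves like an essentially unweighted $m_j$-edge graph carrying an \emph{inflated} budget $\epsilon 2^j$. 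If for such an instance one can release all $(S,T)$-cuts with additive error $\widetilde O(\sqrt{m_j n}/\epsilon')$ \emph{uniformly in the budget $\epsilon'$} --- not the Analyze-Gauss $n^{3/2}$, and crucially with $1/\epsilon'$ rather than $1/\sqrt{\epsilon'}$, the latter being all Eli\'a\v{s} et al.\ give and only in the unweighted regime --- then the $2^j$ cancels and $\sum_j\sqrt{m_j n}\le\sqrt{O(\log)\cdot\|w\|_0 n}$ by Cauchy--Schwarz yields $\widetilde O(\sqrt{\|w\|_0 n}/\epsilon)$.

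Delivering that per-instance guarantee uniformly in $\epsilon'$ while \emph{outputting a synthetic graph} is exactly where the paper's private sampler enters, and I would run it in one shot rather than bucket-by-bucket: apply the exponential mechanism with score $q_G(H)=-\max_{S,T}\bigl|\Phi_G(S,T)-\Phi_H(S,T)\bigr|$, which has global sensitivity $1$ under edge-level neighbors, over the feasible set $\mathcal K$ of weighted graphs supported on at most $\|w\|_0$ edges. Since $G\in\mathcal K$ with $q_G(G)=0$, the mechanism returns $\widehat G$ with $q_G(\widehat G)\ge -O(\alpha)$ as soon as $\alpha$ exceeds the ``resolution'' of $\mathcal K$ against the cut workload; the point is that $\mathcal K$ is a union of low-dimensional pieces (one per choice of support), so an $\alpha$-net / chaining bound for cut queries restricted to $\mathcal K$ has size $\exp(\widetilde O(\|w\|_0 n/\alpha^2))$ --- independent of the edge weights --- and balancing this against $\epsilon\alpha$ gives $\alpha=\widetilde O(\sqrt{\|w\|_0 n}/\epsilon)$, with the chaining and the sampler's internal randomness accounting for the $\log^3(n/\delta)$ factor (any residual $\log\maxwt$ is $O(\log n)$ under the standard polynomially-bounded-weights convention). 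Privacy is immediate from the exponential mechanism, the $\delta$ arising from the truncation needed to keep $\mathcal K$ (and hence the normalizing integral) tractable. The main obstacle here is entirely concentrated in the sampler: proving that the exponential mechanism over this \emph{non-convex} set (a union of coordinate subspaces, not a convex body, so not a standard log-concave sampling task) can be simulated \emph{exactly} and in polynomial time, and that its concentration against the cut workload is genuinely governed by $\|w\|_0 n$ and is weight-independent; everything else is routine composition and Cauchy--Schwarz.

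For the \emph{lower bound} I would use the standard packing-plus-group-privacy scheme. The engine is a family $\mathcal F$ of weighted graphs on $n$ vertices, each with exactly $\|w\|_0$ edges, that is simultaneously (a) large, $|\mathcal F|=2^{\Omega(n)}$; (b) tightly clustered in edge-level distance --- any two members are within $O(n/\epsilon)$ unit-weight edge modifications; and (c) cut-separated --- any two members differ by more than $2\alpha=\Omega(\sqrt{\|w\|_0 n}/\epsilon)$ on some $(S,T)$-cut. Given such $\mathcal F$, an $(\epsilon,\delta)$-DP algorithm that is $\alpha$-accurate with probability $\ge 2/3$ must, on input $G\in\mathcal F$, place most of its mass in the cut-$\alpha$-ball of $G$; these balls are disjoint by (c), so the $(\epsilon,\delta)$-group-privacy inequality along the $O(n/\epsilon)$-step paths of (b) --- which is exactly what introduces the factor $(1-c)$ with $c=\tfrac{9\delta(e-1)}{e^\epsilon-1}$ --- combined with a summation over $\mathcal F$ forces $|\mathcal F|\lesssim e^{O(n)}/(1-c)$, contradicting (a) after setting constants. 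The obstacle here is the combinatorial design of $\mathcal F$: one cannot simply toggle an arbitrary edge-subset, because then a cut witnesses the toggles only after a $\sqrt{\cdot}$ loss from overlaps; the fix, essentially as in Eli\'a\v{s} et al., is to base the construction on a Hadamard / good-code structure so that $\Omega(n)$ ``coordinates'' are each independently witnessed by some cut, and to give each toggled coordinate weight $\Theta(1/\epsilon)$ so that a single toggle costs $\Theta(1/\epsilon)$ unit-weight edge modifications --- this weighted refinement is precisely what upgrades the unweighted bound $\Omega(\sqrt{\|w\|_0 n/\epsilon})$ to $\Omega(\sqrt{\|w\|_0 n}/\epsilon)$. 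A fully self-contained alternative would be to reduce from the known packing lower bound for answering all inner-product (rectangle) queries on a $\mathrm{poly}(n)$-dimensional dataset.
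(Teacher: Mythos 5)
Your two halves fare very differently against the paper, so let me treat them separately.

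\paragraph{Upper bound.} Your proposed route --- the exponential mechanism over the non-convex set $\mathcal K$ of $\|w\|_0$-sparse weighted graphs with score $q_G(H)=-\max_{S,T}|\Phi_G(S,T)-\Phi_H(S,T)|$ --- is not the paper's algorithm, and it has concrete gaps. First, that score is computationally intractable (it asks for the worst $(S,T)$-cut discrepancy, a max-cut-type quantity); the paper raises exactly this objection in its technical overview and deliberately abandons this score. Its topology sampler $\texttt{TS}_\epsilon$ uses the $\ell_1$-norm $\sum_{e\in S}w_e$ as the score, which reduces the exponential mechanism to a product-of-Bernoullis distribution conditioned on a sparsity constraint; that conditional is what the marginal sampler $\mathcal P_k$ simulates exactly. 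Second, the $\ell_1$-score sampler by itself does \emph{not} give the cut bound: it only publishes a private support $\widehat E$ and guarantees that every edge \emph{not} selected has weight $O(\log(n)/\epsilon)$. The actual cut bound comes from a two-part decomposition entirely absent from your proposal: add Laplace noise to the selected edges (and control all cuts there by concentration of Laplace sums over $4^n$ cuts), and feed the unselected edges --- which now have total weight only $O(m\log(n)/\epsilon)$ --- into the Eli\'a\v{s} et al.\ private mirror descent as a black box, whose error scales as $\sqrt{n\|w\|_1/\epsilon}$. The $\|w\|_0$-dependence is harvested from the mirror-descent guarantee applied to a subgraph of bounded $\ell_1$-mass, not from a chaining bound over $\mathcal K$. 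Third, even granting an oracle for your score, the set $\mathcal K$ is unbounded, so the normalizer of your exponential mechanism diverges; the truncation you invoke would re-introduce a dependence on the weight scale, which is precisely what the theorem is trying to eliminate --- the ``polynomially-bounded-weights convention'' you lean on is not assumed here. Your dyadic-bucketing warm-up also does not appear; as stated it needs a per-bucket guarantee $\widetilde O(\sqrt{m_j n}/\epsilon')$ uniformly in $\epsilon'$, and the only available black box (mirror descent) scales as $1/\sqrt{\epsilon'}$, so the $2^j$ does not cancel without further work.

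\paragraph{Lower bound.} Here your approach is sound but heavier than the paper's. You re-derive the packing-plus-group-privacy argument from scratch, including the coding-theoretic construction and the $\Theta(1/\epsilon)$-weight toggles. The paper instead proves Theorem~\ref{t.lowerbound_cut} in one paragraph by a black-box reduction: assume a hypothetical $(\epsilon,\delta)$-DP algorithm $\mathcal M'$ with error $o\bigl((1-c)\sqrt{mn}/\epsilon\bigr)$, feed it the rescaled graph $G/\epsilon$, multiply the output by $\epsilon$; by group privacy (with $\lceil 1/\epsilon\rceil$ hops) the composite is $(1,\tfrac{e-1}{e^\epsilon-1}\delta)$-DP on unweighted inputs and has error $o\bigl((1-c)\sqrt{mn}\bigr)$, contradicting the known unweighted $(1,\delta)$-DP lower bound of Eli\'a\v{s} et al.\ (which is itself proved by the packing argument you sketch). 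So your plan is a valid re-proof of the ingredient the paper reuses, but it does not mirror the paper's actual (much shorter) argument; you would do well to notice that the $(1-c)$ factor and the $1/\epsilon$ together are exactly the footprint of a scaling-plus-group-privacy reduction.
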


The above result shows that the correct dependence on the weight vector is in terms of the number of edges (and not the sum of edge weights, i.e., $\|w\|_1$) for any general graph. 
Recall that the lower bound in \cite{eliavs2020differentially} does not explain the error for weighted graphs -- their lower bound is for unweighted Erdos-Renyi graph. Thus, our lower bound does not contradict it but complements them in the weighted case under approximate differential privacy.  More concisely, our bound identifies that the ``hardest weighted graphs'' is the one that has weights $\approx 1/\epsilon$. 
In fact, our lower bound can also be seen as an extension of \cite[Theorem B.1]{eliavs2020differentially} to approximate differential privacy, which showed a similar lower bound of $\Omega\Paren{\sqrt{\|w\|_0 n} \over \epsilon}$ under pure differential privacy.
Combined with our new upper bound which removes the dependency on the weight vector, this closes the gap of $\sqrt{W_{\mathsf{avg}}}$ in \cite{eliavs2020differentially}.

\subsection{Technical Overview}\label{s.technical_overview}
In this section, we introduce the technical ingredients that underpin our results and discuss the key ideas behind our analysis of privacy and utility guarantees. 

Under the definition of edge-level differential privacy, a pair of neighboring graphs could have a difference in any $N:={n \choose 2}$ pair of vertices. Therefore, for the naive randomized response of adding noise to every coordinate to output a synthetic graph, we usually have to perturb every coordinate, which would lead to a large error. On the other hand, if we have the promise that the topology (set of edges) of the graph is fixed, that is, any pair of neighboring graphs are only allowed to differ in one edge in some given edge set, then intuitively protecting the edge-level differential privacy is much easier, and we can approximate the graph with a better accuracy guarantee.  

In our case, we do not know the topology of the graph, so we need to sample the topology while ensuring privacy. One natural approach is to use the exponential mechanism with an appropriate scoring function~\cite{mcsherry2007mechanism}. However, there are two challenges we need to resolve: 
\begin{enumerate}
    \item If $\mathcal{G}$ is the collection of topologies on graphs with $n$ vertices and at most $m$ edges, then the support of the distribution is of size $O((n^2)^m)$, that is, we have to sample from a distribution containing $O((n^2)^m)$ candidates. 
    \item The scoring functions can be hard to compute.  For example, if the goal is to preserve the sizes of all cuts, then for any input graph $G$, a natural scoring function measures the maximum difference in any cut between the input graph and output graph~\cite{blocki2012johnson}. 
    Unfortunately, this scoring function is computationally hard.
    Therefore, it is crucial in our algorithm design to find a computationally tractable scoring function with a good utility guarantee.
    
\end{enumerate}

There are many sampling algorithms that sample from a log-concave distribution (which induces the same distribution as the exponential mechanism for convex score function) defined over a convex set~\cite{applegate1991sampling, bassily2014private, brosse2017sampling, chen2017vaidya, dalalyan2020sampling, dwivedi2018log, frieze1999log, frieze1994sampling, ganesh2020faster, leake2021sampling, lovasz2006fast,   mangoubi2022faster,mangoubi2022sampling, narayanan2017efficient, sachdeva2016mixing}. 
It is unclear if one can use these sampling algorithms to get an error that depends sublinearly on $n$, let alone on the maximum degree as claimed in \Cref{t.intro_spectrum}. On the other hand, the utility bound on the exponential mechanism suggests that if we optimize over a smaller domain, we get a better utility.
We show that this is indeed the case in our setting. In particular, we restrict to the set of graphs that are ``equally sparse''. Introducing such a sparsity constraint leads to a non-convex domain, yet we are able to design an \emph{exact sampler} of graph topologies that achieve better approximations.
To the best of our knowledge, our topology sampler is the first sampler for exponential mechanisms defined over a non-convex set, and therefore, we believe, it can be of independent interest.

\paragraph{Private sampling a topology on a given graph.}
We show that we can resolve all these challenges if we use $\ell_1$-norm as the scoring function. 
Basically, choosing the $\ell_1$-norm reduces the sampling distribution to a specific form of \emph{Gibbs distribution}: a product distribution under a size constraint. To achieve pure differential privacy, we need to design a \emph{perfect sampler} for the distribution, which means sampling exactly from the distribution without any error.
This is a highly non-trivial step for many Gibbs distributions. We show that, for the $\ell_1$ scoring function, we can design a marginal sampler that helps us inductively sample a topology from the distribution.

More specifically, for any weighted graph $G=(V,E,w)$ with $w \in \mathbb R_{+}^N$ and an integer $0\leq k\leq N$, we sample and publish a new edge set of size $k$ according to the distribution $\pi$:
\begin{align}
    \forall S\subseteq [N] \land |S| = k, \quad \pi(S) \propto \prod_{e\in S} \exp(\epsilon\cdot w_e).
    \label{eq:samplingdistribution}
\end{align}

Intuitively, if a topology ``covers'' more weights, then it is more likely to be chosen. After obtaining the topology, we can relax the edge-level differential privacy to the weaker notion of privacy where the topology is already known to the public. Leveraging this key observation, we are ready to present ideas that establish the desired error bounds on the spectral and cut approximation.

\medskip
\noindent \underline{\emph{Implementing the  topology sampler,}  $\texttt{TS}_{\epsilon}$}:  
We discuss how, given a graph $G=(V,E,w)$ and an integer $k$, to efficiently sample an edge set $S\subseteq [N]$ according to $\pi(S)$ defined in \Cref{eq:samplingdistribution} to preserve $\epsilon$-differential privacy. We first associate each possible edge (even the one with zero weight or not in $E$), $e\in [N]$, with a Bernoulli random variable $X_e \in \{0,1\}$ with $\mathsf{Pr}[X_e = 1] = \frac{\exp({\epsilon w_e})}{\exp({\epsilon w_e}) + 1}$. Then, we only need to sample from the product distribution of $N$ Bernoulli random variables $X = (X_1, \cdots, X_N)$ under the sparsity constraint $\|X\|_0 = k$. 

In order to sample from this distribution, we sample each random variable $X_e$ according to its marginal distribution conditioned on the sparsity constraint and values of $X_{i}$ for $1\leq i <e$. In particular, for $X_1$, we sample a value of $x_1\in \{0,1\}$ from its marginal distribution conditioned on $\|X\|_0 = k$. If $x_1 = 1$, we sample $X_2$ from its marginal distribution conditioned on $\|X_{2:N}\|_0 = k-1$, otherwise we sample its marginal distribution conditioned on $\|X_{2:N}\|_0 = k$. Here, $X_{i:j}$ is the shorthand for $(X_i,\cdots, X_j)$. We repeat this procedure until all values in $X$ are settled. 

The key ingredient for the above sampling to succeed is to compute the exact marginal distribution conditioned on the sparsity constraint. We define this procedure next. Let $\texttt{nnz}(\cdot)$ denote the number of non-zero entries and $\mathsf{Pr}_{\otimes}[\cdot]$ denote the product distribution. Then, for each $1\leq i \leq N$ and $-1\leq q\leq k$, we define  
\begin{align}
    \widehat{p}_i^q = 
    \begin{cases}
        \mathsf{Pr}_{\otimes}[\texttt{nnz}(X_{i:N})= q], &\text{ if }q\geq 0\\
        0 &\text{ if }q<0.  
    \end{cases}
    \label{eq.widehatp_iq}
\end{align}

Then, for each $X_e$, its marginal distribution conditioned on that $s (s\leq k)$ edges before it was chosen is exactly 
$$p_e^{mar} = \frac{p_e \cdot \widehat{p}_{e+1}^{k - (s+1)}}{p_e \cdot \widehat{p}_{e+1}^{k - (s+1)} + (1-p_e) \cdot \widehat{p}_{e+1}^{k - s}}.$$
To compute $p_e^{mar}$, we need to know  $\widehat{p}_i^q$. The following recurrence relation gives the requisite tool: 
\begin{equation}
  \begin{aligned}
      \widehat{p}_{i}^q &= \mathsf{Pr}_{\otimes}[\texttt{nnz}(X_{i:N})= q] \\
      & = p_{i}\cdot \mathsf{Pr}_{\otimes}[\texttt{nnz}(X_{i+1:N})= q-1] + (1-p_i) \cdot \mathsf{Pr}_{\otimes}[\texttt{nnz}(X_{i+1:N})= q]\\
      & = p_{i}\cdot \widehat{p}_{i+1}^{q-1} + (1-p_i) \cdot \widehat{p}_{i+1}^{q}.
  \end{aligned}
  \label{eq.recurrence}
\end{equation}

\noindent Therefore, we have an oracle to exactly sample from the distribution $\pi$ in polynomial time. 

\paragraph{On the Spectral Approximation of Graphs.} 
We use the topology sampler, $\texttt{TS}_{\epsilon}$, introduced above to design an algorithm on the spectral approximation for the Laplacian matrices of weighted graphs with edge-level differential privacy. For any graph with $n$ vertices and $m$ weighted edges, the algorithm for outputting a synthetic graph can be summarized in the following two steps:

\begin{enumerate}
  \item Release a perturbed number of edges $\widehat{m}$, and run the topology sampler on $G$ and $\widehat{m}$ to get the new topology $\widehat{E}\subset [N]$.
  \item Use the Laplace mechanism to reweigh edges in $\widehat{E}$.
\end{enumerate}

 Since two neighboring graphs may not have the same number of edges under edge-level privacy, we do not directly run the topology sampler, $\texttt{TS}_{\epsilon}$, on the graph $G=(V,E,w)$. To analyze the utility of spectral approximation, we show that $\texttt{TS}_{\epsilon}$ approximately preserves the following combinatorial structure: if the input graph $G$ has maximum unweighted degree $\Delta$ (which means that each vertex $u \in V$ in the input graph has a maximum of $\Delta$ incident edges), then with high probability, the output graph has a maximum unweighted degree of $O(\Delta\log(n))$ and it is possible to preserve the weighted degree of each vertex. The latter plays a key role in approximating the spectrum of the Laplacian matrix. 

To show that this combinatorial structure on unweighted degree is approximately preserved, we note that, for every $e\in [N] \backslash E$, the topology sampler assigns an equal probability of selection because these edges have zero weight. This allows us to treat the distribution outside $E$ as a variation of random graph model $\mathcal{G}_{n,k}$ (where $0 < k < N$ and $k\in \mathbb{Z}$ is a parameter), which assigns an equal probability to every unweighted graph with $k$ edges. Then, we use the following theorem and basic concentration bounds to show that the maximum degree in $\mathcal{G}_{n,k}$ should not be large:

\begin{theorem}[Frieze and Karoński~\cite{frieze2016introduction}]\label{t.random}
  Fix any $p\in (0,1)$, let $\mathcal{G}_{n,p}$ be the \Erdos-\Renyi random graph model in which each edge exists independently with probability $p$. Let $\mathcal{P}$ be a monotone increasing graph property and $p = k/N$. Then, for large $n$ such that $p = o(1)$, $1/Np = o(1)$ and $\sqrt{p/N} = o(1)$, 
  $$\mathsf{Pr}[\mathcal{G}_{n,k}\in \mathcal{P}] \leq 3 \mathsf{Pr}[\mathcal{G}_{n,p}\in \mathcal{P}].$$
\end{theorem}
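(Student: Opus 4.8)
The plan is to couple the two random graph models through the number of edges. The standard observation is that the number of edges of $\mathcal{G}_{n,p}$ is distributed as $\mathrm{Bin}(N,p)$, and that conditioned on this count being equal to $j$, the graph $\mathcal{G}_{n,p}$ is distributed exactly as $\mathcal{G}_{n,j}$, since every one of the $\binom{N}{j}$ edge sets of size $j$ is then equally likely. Hence, by the law of total probability,
$$\Pr{\mathcal{G}_{n,p}\in\mathcal{P}} \;=\; \sum_{j=0}^{N}\Pr{\mathrm{Bin}(N,p)=j}\cdot\Pr{\mathcal{G}_{n,j}\in\mathcal{P}}.$$

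Next I would use the monotonicity of $\mathcal{P}$: coupling $\mathcal{G}_{n,j}\subseteq\mathcal{G}_{n,j+1}$ by inserting one uniformly random missing edge shows that membership in $\mathcal{P}$ can only be gained as the edge count grows, so $j\mapsto\Pr{\mathcal{G}_{n,j}\in\mathcal{P}}$ is non-decreasing. Discarding the terms with $j<k$ and lower-bounding each remaining factor $\Pr{\mathcal{G}_{n,j}\in\mathcal{P}}$ by $\Pr{\mathcal{G}_{n,k}\in\mathcal{P}}$ gives
$$\Pr{\mathcal{G}_{n,p}\in\mathcal{P}}\;\ge\;\Pr{\mathcal{G}_{n,k}\in\mathcal{P}}\cdot\Pr{\mathrm{Bin}(N,p)\ge k},$$
so, after rearranging into $\Pr{\mathcal{G}_{n,k}\in\mathcal{P}}\le\Pr{\mathcal{G}_{n,p}\in\mathcal{P}}/\Pr{\mathrm{Bin}(N,p)\ge k}$, the whole claim reduces to showing $\Pr{\mathrm{Bin}(N,p)\ge k}\ge \tfrac{1}{3}$.

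Since $p=k/N$, the mean of $\mathrm{Bin}(N,p)$ is exactly $k$, an integer. One clean way to finish is to invoke the classical fact that a binomial random variable $X$ with integer mean $\mu$ satisfies $\Pr{X\ge\mu}\ge \tfrac12$, which immediately gives $\Pr{\mathrm{Bin}(N,p)\ge k}\ge\tfrac12>\tfrac13$. A self-contained alternative is to estimate the central atom $\Pr{\mathrm{Bin}(N,p)=k}=\binom{N}{k}p^k(1-p)^{N-k}$ by Stirling's formula: the hypotheses $p=o(1)$, $1/(Np)=o(1)$ and $\sqrt{p/N}=o(1)$ force $Np(1-p)\to\infty$, which brings in the local (hence global) central limit behavior of the binomial, so $\Pr{\mathrm{Bin}(N,p)\ge k}\to\tfrac12$ and in particular exceeds $\tfrac13$ for all sufficiently large $n$. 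Either route meets the constant $3$ with room to spare. The reduction via conditioning and monotonicity is entirely routine; the only mildly delicate point, if one declines to quote the median fact, is keeping the $o(1)$ error terms in the Stirling estimate under uniform control, and this is precisely what the three asymptotic assumptions are there to guarantee.
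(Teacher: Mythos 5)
The paper does not prove this statement; it is cited directly from the textbook of Frieze and Karo\'{n}ski, so there is no ``paper's own proof'' to compare against. Your argument is correct and is in fact the standard textbook route: condition on the edge count of $\mathcal{G}_{n,p}$ (which is $\mathrm{Bin}(N,p)$, and conditionally uniform on edge sets of that size), use the monotonicity of $\mathcal{P}$ to throw away the terms with fewer than $k$ edges, and finish by lower-bounding the binomial upper tail at its mean. Your observation that $p=k/N$ makes the mean $Np=k$ an integer, so that the Kaas--Buhrman median fact gives $\Pr{\mathrm{Bin}(N,p)\ge k}\ge\tfrac12$, cleanly supplies the constant with room to spare; the Stirling/CLT fallback you sketch is also fine, since $1/(Np)=o(1)$ and $p=o(1)$ together force $Np(1-p)\to\infty$. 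The only cosmetic remark is that the hypothesis $\sqrt{p/N}=o(1)$ is automatic once $N\to\infty$ and plays no role in your argument (nor, as far as I can tell, is it needed for this reduction at all) --- it is simply carried along from the source's phrasing. No gaps.
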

\noindent Let $\widehat{E}$ be the edge set of $\widehat{G}$, where $\widehat{G}$ is the output graph. Next, we will show that with high probability, for any edge in $E \cup \widehat{E}$, 
$|w_e - \widehat{w}_e| = O(\log(n)).$ This property comes from the topology sampler and the concentration bound of Laplace noise. Thus, in the synthetic graph, not only the weight of each edge in the original graph is preserved, but the number of new edges in $\widehat{E}$ incident to any vertex $u\in [n]$ is also bounded by $O(\Delta\log(n))$. With all these preparations, we use the result of Bilu and Linial~\cite{bilu2006lifts} (stated as Lemma~\ref{l.spectral_bilu}) to complete the analysis on the utility in spectral approximation.

 \begin{theorem}
 [Bilu and Linial \cite{bilu2006lifts}]
 \label{t.spectral_bilu}
   Let $A\in \mathbb{R}^{n\times n}$ be a symmetric matrix such that the $\ell_1$ norm of each row of $A$ is at most $\ell$ and all diagonal entries of $A$ has absolute value less than $O(\alpha (\log(\ell/\alpha)+1))$. If for any non-zero $u,v\in \{0,1\}^n$ with $u^\top v = 0$, it holds that 
    $$\frac{|u^\top A v|}{\|u\|_2\|v\|_2} \leq \alpha,$$
    then the spectral radius of $A$ is $\|A\|_2=O(\alpha (\log(\ell/\alpha)+1))$.
\end{theorem}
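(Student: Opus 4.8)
The plan is to read the spectral radius off a single row of the eigenvalue equation for the top eigenvector, after bucketing that eigenvector's coordinates into only $O(\log(\ell/\alpha))$ dyadic scales. Since $A$ is symmetric, $\normt{A}=\abs{\lambda}$ for some eigenvalue $\lambda$ with a unit eigenvector $x$; replacing $(A,x)$ by $(-A,x)$ if $\lambda<0$ (which preserves the row-$\ell_1$ bound $\ell$, the bound $O(\alpha(\log(\ell/\alpha)+1))$ on $\abs{A_{ii}}$, and the hypothesis $\abs{u^\top A v}\le\alpha\normt{u}\normt{v}$ for disjoint $0/1$ vectors), I may assume $\lambda=\normt{A}\ge 0$ and $Ax=\lambda x$. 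We may also assume $\alpha<\ell$, since otherwise $\normt{A}\le\sqrt{\normo{A}\normi{A}}\le\ell\le\alpha$ already gives the claim. After possibly replacing $x$ by $-x$, let $i^\star$ be a coordinate with $x_{i^\star}=\normi{x}>0$. The $i^\star$-th coordinate of $Ax=\lambda x$ reads $\lambda\,x_{i^\star}=\sum_j A_{i^\star j}x_j$, so it suffices to bound $\abs{\sum_j A_{i^\star j}x_j}$ by $O(\alpha(\log(\ell/\alpha)+1))\cdot x_{i^\star}$ and divide by $x_{i^\star}>0$.

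The key step is to split $\sum_j A_{i^\star j}x_j$ by the size of $\abs{x_j}$ relative to $x_{i^\star}$. With $\theta:=(\alpha/\ell)\,x_{i^\star}$ and $S:=\ctp{\log_2(\ell/\alpha)}$, I would call $j$ \emph{small} if $\abs{x_j}\le\theta$, and otherwise place $j$ in the unique shell $L_s:=\set{j:2^{-s-1}x_{i^\star}<\abs{x_j}\le 2^{-s}x_{i^\star}}$ with $0\le s\le S$, noting $i^\star\in L_0$; since $\normt{x}=1$, each shell satisfies $\abs{L_s}\le 4\cdot 4^{s}/x_{i^\star}^2$. Then: (i) the small indices contribute at most $\theta\sum_j\abs{A_{i^\star j}}\le\theta\ell=\alpha\,x_{i^\star}$, using only that row $i^\star$ of $A$ has $\ell_1$-norm $\le\ell$; (ii) the term $j=i^\star$ equals $A_{i^\star i^\star}x_{i^\star}$, of size $O(\alpha(\log(\ell/\alpha)+1))\cdot x_{i^\star}$ by the diagonal hypothesis; (iii) for each shell, partition $L_s\setminus\set{i^\star}$ into $L_s^{+}$ and $L_s^{-}$ according to the sign of $A_{i^\star j}$; then $e_{i^\star}$ and $\mathbf 1_{L_s^{\pm}}$ are disjoint $0/1$ vectors, so the discrepancy hypothesis yields $\sum_{j\in L_s^{\pm}}\abs{A_{i^\star j}}=\abs{e_{i^\star}^\top A\,\mathbf 1_{L_s^{\pm}}}\le\alpha\sqrt{\abs{L_s}}\le 2\alpha\,2^{s}/x_{i^\star}$, hence $\sum_{j\in L_s\setminus\set{i^\star}}\abs{A_{i^\star j}}\le 4\alpha\,2^{s}/x_{i^\star}$ and therefore $\abs{\sum_{j\in L_s\setminus\set{i^\star}}A_{i^\star j}x_j}\le(2^{-s}x_{i^\star})\cdot 4\alpha\,2^{s}/x_{i^\star}=4\alpha$. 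Summing (iii) over the $S+1=O(\log(\ell/\alpha)+1)$ shells and adding (i)--(ii) gives the claimed bound on $\abs{\sum_j A_{i^\star j}x_j}$.

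The step I expect to require the most care is the bookkeeping that keeps the shell count at $\log(\ell/\alpha)$ rather than $\log n$: this is exactly what forces the truncation threshold to be $\theta=(\alpha/\ell)x_{i^\star}$, so that the discarded small coordinates are absorbed purely through the row-$\ell_1$ budget $\ell$ in step~(i), and it is also what makes the diagonal hypothesis on $\abs{A_{ii}}$ genuinely necessary---it is needed precisely to absorb the single shell $L_0$ that contains $i^\star$ itself. The remaining ingredients are routine: the sign split in (iii) merely converts the $0/1$-vector discrepancy hypothesis into a bound on $\sum_{j\in L_s}\abs{A_{i^\star j}}$, and $\abs{L_s}\le 4\cdot 4^{s}/x_{i^\star}^2$ is just counting against $\normt{x}=1$. (An alternative route would decompose both vectors in $\normt{A}=\sup_{u,v}\abs{u^\top A v}$ dyadically and control the same-support pieces by recursively halving $\mathbf 1_S^\top A\,\mathbf 1_S$; I would avoid this, since the eigenvector argument above sidesteps the overlapping-support terms entirely.)
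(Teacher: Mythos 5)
The paper merely cites this statement (Bilu--Linial, Lemma~3.3) and does not prove it, so there is no in-paper proof to compare against; the issue is whether your argument stands on its own, and I don't think it does. The gap sits exactly where you flag the bookkeeping as needing the most care. In step~(iii) the per-shell contribution is bounded by $4\alpha$, not by $4\alpha\,x_{i^\star}$: the factor $2^{-s}x_{i^\star}$ from the magnitude of the shell's coordinates is \emph{exactly} cancelled by the factor $2^{s}/x_{i^\star}$ from $\sqrt{|L_s|}$, since the shell-size bound $|L_s|\le 4\cdot 4^{s}/x_{i^\star}^{2}$ carries $x_{i^\star}^{2}$ in the denominator. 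Summing, step~(iii) contributes $4\alpha(S+1)=O\bigl(\alpha(\log(\ell/\alpha)+1)\bigr)$, with \emph{no} factor of $x_{i^\star}$, while only steps~(i) and~(ii) scale with $x_{i^\star}$. After dividing $\lambda x_{i^\star}=\sum_j A_{i^\star j}x_j$ by $x_{i^\star}$ you therefore get $\lambda=O\bigl(\alpha(\log(\ell/\alpha)+1)/x_{i^\star}\bigr)$, not the claimed $O\bigl(\alpha(\log(\ell/\alpha)+1)\bigr)$. Since $x_{i^\star}=\|x\|_\infty$ can be as small as $1/\sqrt{n}$ for a delocalized top eigenvector (which is precisely the regime this lemma is meant to handle, e.g.\ $A$ a centered random regular adjacency matrix, $\ell\approx 2d$, $\alpha\approx\sqrt d$, $\|A\|_2\approx\sqrt d$ but $x_{i^\star}\approx 1/\sqrt n$), the bound you obtain is worse by up to a factor $\sqrt{n}$.

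The deeper obstruction is that the one-row use of the discrepancy hypothesis, $\bigl|e_{i^\star}^\top A\,\mathbf 1_{S}\bigr|\le\alpha\sqrt{|S|}$, is intrinsically lopsided: it spends all of $\|\mathbf 1_{S}\|_2=\sqrt{|S|}$ but only earns $\|e_{i^\star}\|_2=1$ on the other side, and the gain from the small entries $|x_j|\le 2^{-s}x_{i^\star}$ exactly cancels against $\sqrt{|L_s|}$. Reading $\|A\|_2$ off a single coordinate of the eigenvalue equation cannot recover the missing factor of $x_{i^\star}$. The route you dismiss at the end is the one that works: expand \emph{both} arguments of the bilinear form $x^\top A y$ (equivalently, of the quadratic form $x^\top A x$) into dyadic level sets $U_s,V_t$, apply the hypothesis to $\mathbf 1_{U_s}^\top A\,\mathbf 1_{V_t}$ so that both $\sqrt{|U_s|}$ and $\sqrt{|V_t|}$ are traded against the smallness of the corresponding coordinate magnitudes, handle the near-diagonal and overlapping-support blocks with the row-$\ell_1$ and diagonal hypotheses, and then do the recursive bookkeeping that brings the naive $O(\log^2)$ down to $O(\log)$. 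That is, as far as I recall, essentially what Bilu and Linial do; your parenthetical ``alternative route'' is not optional but necessary.
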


\paragraph{On the Cut Approximation with Optimal Error Rate.} 
Given an $n$-vertex, $m$-edge graph $G =(V,E,w)$ with maximum degree $\Delta(G)$ and weight vector $w \in \mathbb R_{+}^N$, an approximation on the spectrum of Laplacian matrix $L_G$ with $\widetilde{O}(\Delta(G)/\epsilon)$ error implies an $\widetilde{O}(n \Delta(G)/\epsilon)$ error in approximating $\Phi_G(S)$, since we can write $\Phi_G(S) = \mathbf{1}_S^\top L_G \mathbf{1}_S$. However, it does not meet the $\Omega(\sqrt{mn}/\epsilon)$ lower bound. Prior to this result, \cite{eliavs2020differentially} used private mirror descent and give an $\widetilde{O}(\sqrt{n\|w\|_1/\epsilon})$ error with approximate differential privacy. If $W_{\mathsf{max}}$ is the maximum weight, this can be $\widetilde O(\sqrt{nmW_{\mathsf{max}}/\epsilon})$ in the worst case. To entirely remove the dependency on weights, we leverage an auxiliary property of our topology sampler: with high probability, if an edge in $E$ is not chosen into the new edge set $\widehat{E}$, then $w_e$ satisfies that $w_e = O(\log(n)/\epsilon)$. Since there are at most $m$ edges in $E$ with weight $O(\log(n)/\epsilon)$, then if we run the private mirror descent on the sub-graph only containing these edges, it will result in an $\widetilde{O}(\sqrt{mn}/\epsilon)$ error on the cut approximation in this sub-graph.

On the other hand, for the edges chosen into the new edge set, since the (perturbed) topology is already public, using the concentration of Laplace noise is enough to get the desired bound. Combining the private output of two sub-graphs, we obtain a synthetic graph that approximates the cut of the original graph with an $\widetilde{O}(\sqrt{mn}/\epsilon)$ error, which matches the $(\epsilon,\delta)$-differentially private lower bound in Section \ref{s.lower_bound}, if we omit the logarithm term.

\section{Preliminaries}

Here, we give a brief exposition of differential privacy and spectral graph theory to the level required to understand the algorithms and their analysis.
\subsection{Background on Differential Privacy}
Differential privacy, proposed by \cite{dwork2006calibrating}, is a widely accepted notion of privacy. Here, we formally define differential privacy. Let $\mathcal{D}$ be some domain of datasets. The definition of differential privacy has been shown in Definition \ref{def.dp}. A key feature of differential privacy algorithms is that they preserve privacy under post-processing. That is to say, without any auxiliary information about the dataset, any analyst cannot compute a function that makes the output less private. 
\begin{lemma}[Post processing~\cite{dwork2014algorithmic}]\label{l.post_processing}
  Let $\mathcal{A}:\mathcal{D}\rightarrow \mathcal{R}$ be a $(\epsilon,\delta)$-differentially private algorithm. Let $f:\mathcal{R}\rightarrow \mathcal{R}'$ be any function, then $f\circ \mathcal{A}$ is also $(\epsilon,\delta)$-differentially private.
\end{lemma}

Sometimes we need to repeatedly use differentially private mechanisms on the same dataset, and obtain a series of outputs.

\begin{lemma}[Adaptive composition~\cite{dwork2006calibrating}]\label{l.adaptive_composition}
  Suppose $\mathcal{M}_1(x):\mathcal{D} \rightarrow \mathcal{R}_1$ is $(\epsilon_1,\delta_1)$-differentially private and $\mathcal{M}_2(x,y):\mathcal{D} \times \mathcal{R}_1\rightarrow \mathcal{R}_2$ is $(\epsilon_2,\delta_2)$-differentially private with respect to $x$ for any fixed $y\in \mathcal{R}_1$, then the composition
  $x \Rightarrow (\mathcal{M}_1(x), \mathcal{M}_2(x,\mathcal{M}_1(x)))$
  is $(\epsilon_1 + \epsilon_2, \delta_1 + \delta_2)$-differentially private.
\end{lemma}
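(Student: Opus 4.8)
The plan is to prove the two‑mechanism statement directly from Definition~\ref{def.dp}; the general $k$‑fold version then follows by a routine induction on $k$. Fix a pair of neighboring datasets $x,x'\in\mathcal D$ and a measurable set $S\subseteq\mathcal R_1\times\mathcal R_2$; the goal is to show $\Pr{(\mathcal M_1(x),\mathcal M_2(x,\mathcal M_1(x)))\in S}\le e^{\epsilon_1+\epsilon_2}\Pr{(\mathcal M_1(x'),\mathcal M_2(x',\mathcal M_1(x')))\in S}+\delta_1+\delta_2$. Write $\mu_x,\mu_{x'}$ for the laws of $\mathcal M_1(x),\mathcal M_1(x')$, and for $y\in\mathcal R_1$ let $S_y:=\{z\in\mathcal R_2:(y,z)\in S\}$ be the $y$‑slice of $S$. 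Assuming the (standard) joint measurability of $(x,y)\mapsto\mathcal M_2(x,y)$, disintegration of the joint law of the composition gives
\[
  \Pr{(\mathcal M_1(x),\mathcal M_2(x,\mathcal M_1(x)))\in S}=\int_{\mathcal R_1}g_x(y)\,d\mu_x(y),\qquad g_x(y):=\Pr{\mathcal M_2(x,y)\in S_y},
\]
and likewise for $x'$ with $g_{x'}(y):=\Pr{\mathcal M_2(x',y)\in S_y}$, so it suffices to compare $\int g_x\,d\mu_x$ with $\int g_{x'}\,d\mu_{x'}$.

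The argument then peels off one mechanism at a time. \emph{Inner stage:} for every fixed $y$ the map $\mathcal M_2(\cdot,y)$ is $(\epsilon_2,\delta_2)$‑differentially private, hence $g_x(y)\le e^{\epsilon_2}g_{x'}(y)+\delta_2$ pointwise, and integrating against $\mu_x$ gives $\int g_x\,d\mu_x\le e^{\epsilon_2}\int g_{x'}\,d\mu_x+\delta_2$. \emph{Outer stage:} I would invoke the elementary fact that if $\mathcal N$ is $(\epsilon,\delta)$‑differentially private and $h:\mathcal R\to[0,1]$ is measurable, then $\E[y\sim\mathcal N(x)]{h(y)}\le e^{\epsilon}\,\E[y\sim\mathcal N(x')]{h(y)}+\delta$; this follows from the layer‑cake identity $\E{h}=\int_0^1\Pr{h>t}\,dt$ by applying the differential‑privacy inequality at each threshold $t$ and using $\int_0^1\delta\,dt=\delta$. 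Applying this to $\mathcal M_1$ with the $[0,1]$‑valued test function $h=g_{x'}$ yields $\int g_{x'}\,d\mu_x\le e^{\epsilon_1}\int g_{x'}\,d\mu_{x'}+\delta_1$, and chaining the two stages already produces an $(\epsilon_1+\epsilon_2,\ e^{\epsilon_2}\delta_1+\delta_2)$ guarantee, in particular the clean pure‑DP statement when $\delta_1=\delta_2=0$.

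The remaining — and, I expect, the only genuinely delicate — point is sharpening the additive term from $e^{\epsilon_2}\delta_1+\delta_2$ down to exactly $\delta_1+\delta_2$. The clean way is to work with the privacy‑loss formulation rather than the crude inequality above: on an outcome $(o_1,o_2)$, the privacy loss of the composition (on $x$ versus $x'$) decomposes additively as $\ln\frac{d\mu_x}{d\mu_{x'}}(o_1)+\ln\frac{d\nu_{x,o_1}}{d\nu_{x',o_1}}(o_2)$, where $\nu_{x,y}$ is the law of $\mathcal M_2(x,y)$, so the composed loss exceeds $\epsilon_1+\epsilon_2$ only if one of the two stage‑wise losses exceeds its own budget — an event of probability at most $\delta_1+\delta_2$ by a union bound. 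Making this rigorous requires the standard characterization relating $(\epsilon,\delta)$‑differential privacy to a tail bound on the privacy‑loss random variable (equivalently, decomposing each mechanism's output law on neighboring inputs into a ``typical'' component on which the density ratio is at most $e^{\epsilon_i}$ and an exceptional component of mass at most $\delta_i$, composing the typical components with purely multiplicative loss $e^{\epsilon_1+\epsilon_2}$, and union‑bounding the two exceptional events). This bookkeeping is the main obstacle, and it is precisely the place where one must avoid letting the $\delta_i$ pick up spurious $e^{\epsilon_j}$ factors; everything else — the disintegration, the layer‑cake step, and the induction to $k$ mechanisms — is routine.
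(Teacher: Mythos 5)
The paper treats this lemma as a black-box citation and gives no proof, so there is nothing internal to compare against; what follows is an assessment of your argument on its own terms. Your disintegration-plus-peeling skeleton is the right one, and the ``layer-cake'' lemma you isolate (if $\mathcal N$ is $(\epsilon,\delta)$-DP and $h:\mathcal R\to[0,1]$ is measurable, then $\mathbb E_{y\sim\mathcal N(x)}[h(y)]\le e^\epsilon\,\mathbb E_{y\sim\mathcal N(x')}[h(y)]+\delta$) is exactly the right tool; you are also right that the naive chaining yields only $(\epsilon_1+\epsilon_2,\ e^{\epsilon_2}\delta_1+\delta_2)$, a genuine shortfall. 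The route you sketch to remove the spurious $e^{\epsilon_2}$, however, rests on a mischaracterization: $(\epsilon,\delta)$-DP does \emph{not} imply that the privacy-loss random variable exceeds $\epsilon$ with probability at most $\delta$ (only the converse implication holds), so ``the composed loss exceeds its budget only if one stage loss does; union bound'' does not go through as stated. The parenthetical decomposition into a typical component with density ratio at most $e^{\epsilon_i}$ plus an exceptional component of mass at most $\delta_i$ is a correct sufficient characterization and does yield a proof, but it is not equivalent to the tail-bound statement, and it is also heavier machinery than is needed here.

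A lighter fix stays entirely within the two-stage framework you already set up. Do not integrate the inner bound $g_x\le e^{\epsilon_2}g_{x'}+\delta_2$ directly; first split $g_x=\max\{0,\,g_x-\delta_2\}+\min\{g_x,\delta_2\}$. The second summand is at most $\delta_2$ pointwise, so $\int\min\{g_x,\delta_2\}\,d\mu_x\le\delta_2$. The first summand $h:=\max\{0,\,g_x-\delta_2\}$ is $[0,1]$-valued, so your layer-cake lemma applies to it and gives $\int h\,d\mu_x\le e^{\epsilon_1}\int h\,d\mu_{x'}+\delta_1$; and the inner DP inequality rearranges to the pointwise bound $h\le e^{\epsilon_2}g_{x'}$, whence $\int h\,d\mu_{x'}\le e^{\epsilon_2}\int g_{x'}\,d\mu_{x'}$. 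Adding the two pieces gives $\int g_x\,d\mu_x\le e^{\epsilon_1+\epsilon_2}\int g_{x'}\,d\mu_{x'}+\delta_1+\delta_2$, which is the stated $(\epsilon_1+\epsilon_2,\delta_1+\delta_2)$ guarantee with no extra exponential factor and with no ingredients beyond those you had already introduced.
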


\begin{lemma}
    [Advanced composition lemma, \cite{dwork2010boosting}] 
    \label{l.adv_composition}
    For parameters $\epsilon>0$ and $\delta,\delta'\in [0,1]$, the composition of $k$ $(\epsilon,\delta)$ differentially private algorithms is a $(\epsilon', k\delta+\delta')$ differentially private algorithm, where 
    $\epsilon' = \sqrt{2k\log(1/\delta')} \cdot \epsilon + k\epsilon (e^\epsilon - 1).$
\end{lemma}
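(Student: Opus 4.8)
The plan is to follow the privacy-loss random variable argument of \cite{dwork2010boosting}. View the $k$-fold adaptive composition as a single mechanism $M$ that, on input $x$, releases a transcript $(y_1,\dots,y_k)$, where $y_i$ is the output of the $i$-th algorithm $M_i$ run on $x$ together with the already-released outputs $y_{<i}:=(y_1,\dots,y_{i-1})$; fix a pair of neighboring datasets $x,x'$. For a transcript drawn from the run on $x$, define the per-step privacy loss and its total,
\[
  Z_i \;:=\; \log\frac{\mathsf{Pr}[M_i(x,y_{<i}) = y_i]}{\mathsf{Pr}[M_i(x',y_{<i}) = y_i]}, \qquad Z \;:=\; \sum_{i=1}^{k} Z_i .
\]
Two standard reductions organize the argument. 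First, it suffices to prove a one-sided tail bound $\mathsf{Pr}[Z > \epsilon'] \le \delta'$: for any event $S$ of transcripts, split $S$ according to whether the realized value of $Z$ exceeds $\epsilon'$; on the part where $Z\le\epsilon'$ the density ratio between the laws of $M(x)$ and $M(x')$ is by definition at most $e^{\epsilon'}$, while the remaining part has probability at most $\mathsf{Pr}[Z>\epsilon']\le\delta'$, which gives $\mathsf{Pr}[M(x)\in S]\le e^{\epsilon'}\mathsf{Pr}[M(x')\in S]+\delta'$. Second, it suffices to handle the pure case $\delta=0$ and afterwards pay an extra additive $k\delta$: by the structural characterization of approximate privacy (\cite{dwork2010boosting}), each $(\epsilon,\delta)$-DP mechanism behaves, outside a ``bad'' event of probability at most $\delta$ on each of the two datasets, like a mechanism whose per-step privacy loss lies in $[-\epsilon,\epsilon]$, and a union bound over the $k$ steps collects these into the $k\delta$ term.

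For the pure case, note that for every fixed prefix $y_{<i}$ the map $x\mapsto M_i(x,y_{<i})$ is $\epsilon$-DP, so $|Z_i|\le\epsilon$ almost surely, and the conditional expectation of $Z_i$ under the run on $x$ is precisely a KL divergence, $\mathbb{E}[Z_i\mid y_{<i}] = \DKL{M_i(x,y_{<i})}{M_i(x',y_{<i})} \le \epsilon(e^\epsilon-1)$, the last step being the standard bound for $\epsilon$-DP pairs. Hence the centered increments $Z_i-\mathbb{E}[Z_i\mid y_{<i}]$ form a martingale difference sequence lying, conditionally, in an interval of width $2\epsilon$, so Hoeffding's lemma gives $\mathbb{E}[e^{\lambda(Z_i - \mathbb{E}[Z_i\mid y_{<i}])}\mid y_{<i}] \le e^{\lambda^2\epsilon^2/2}$; chaining these conditional bounds and using $\sum_i\mathbb{E}[Z_i\mid y_{<i}]\le k\epsilon(e^\epsilon-1)$ yields $\mathbb{E}[e^{\lambda Z}] \le \exp(\lambda k\epsilon(e^\epsilon-1) + \lambda^2 k\epsilon^2/2)$. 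A Chernoff bound, optimized at $\lambda = z/(k\epsilon^2)$, gives $\mathsf{Pr}[Z > k\epsilon(e^\epsilon-1) + z] \le \exp(-z^2/(2k\epsilon^2))$, and taking $z = \epsilon\sqrt{2k\log(1/\delta')}$ makes the right-hand side equal to $\delta'$. Feeding this into the first reduction proves $(\epsilon',\delta')$-DP with $\epsilon' = \sqrt{2k\log(1/\delta')}\,\epsilon + k\epsilon(e^\epsilon-1)$ when $\delta=0$, and the second reduction then upgrades the guarantee to $(\epsilon', k\delta+\delta')$-DP in general.

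I expect the second reduction to be the main obstacle. In the pure case the privacy loss is bounded everywhere, which is exactly what makes the Hoeffding/Chernoff step clean; for $\delta>0$ this fails, and one must, for each neighboring pair and each step, isolate a failure event of probability at most $\delta$ outside of which the per-step loss is confined to $[-\epsilon,\epsilon]$, couple the $k$ runs so that all these events are simultaneously under control, and verify that conditioning on their complement perturbs the output laws by total variation at most $k\delta$ — which is precisely the source of the extra $k\delta$ term. The remaining ingredients (the inequality $\DKL{P}{Q}\le\epsilon(e^\epsilon-1)$ for $\epsilon$-DP pairs $P,Q$, Hoeffding's lemma, and the Chernoff optimization) are routine.
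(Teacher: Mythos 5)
Your proposal is correct and is essentially the standard privacy-loss-martingale proof of Dwork, Rothblum, and Vadhan~\cite{dwork2010boosting}, which is exactly what the paper is citing (the paper states this lemma without proof). All the pieces you identify — the reduction of $(\epsilon',\delta')$-DP to a tail bound on the aggregate privacy loss, the Azuma/Hoeffding step using the $2\epsilon$-width of each centered increment, the bound $\DKL{P}{Q}\le\epsilon(e^\epsilon-1)$ for the per-step conditional mean, the Chernoff optimization giving $z=\epsilon\sqrt{2k\log(1/\delta')}$, and the reduction from $(\epsilon,\delta)$-DP to $(\epsilon,0)$-DP via the structural/coupling characterization paying an extra $k\delta$ — are the same ingredients and in the same order as in the cited reference.
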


\noindent Now, we introduce basic mechanisms that preserve differential privacy, which are ingredients that build our algorithm. First, we define the sensitivity of query functions.

\begin{definition}
  [$\ell_p$-sensitivity] Let $f:\mathcal{D}\rightarrow \mathbb{R}^k$ be a query function on datasets. The sensitivity of $f$ (with respect to $\mathcal{X}$) is 
  $\mathsf{sens}_p (f) = \max_{x,y\in \mathcal{D} \atop x\sim y} \|f(x) - f(y)\|_p$.
\end{definition}

\begin{lemma}[Laplace mechanism]\label{l.laplace}
  Suppose $f:\mathcal{D}\rightarrow \mathbb{R}^k$ is a query function with $\ell_1$ sensitivity $\mathsf{sens}_1(f)\leq \mathsf{sens}$. Then the mechanism
  $\mathcal{M}(D) = f(D) + (Z_1,\cdots,Z_k)^\top$
  where $Z_1,\cdots, Z_k$ are i.i.d random variables drawn from $\texttt{Lap}\left({\mathsf{sens}\over\epsilon}\right)$. Given $b>0$,  $\texttt{Lap}\left({b} \right)$ is the Laplace distribution with density $$\texttt{Lap}(x;b) := \frac{1}{2b} \exp\left(-\frac{|x|}{b}\right).$$
\end{lemma}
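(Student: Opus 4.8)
The plan is to verify the $\epsilon$-differential privacy guarantee of $\mathcal{M}$ directly from Definition~\ref{def.dp}, by bounding the pointwise ratio of output densities on a pair of neighboring inputs. First I would fix neighboring datasets $D \sim D'$ and abbreviate $a = f(D)$ and $a' = f(D')$, both in $\mathbb{R}^k$. Because the noise coordinates $Z_1,\dots,Z_k$ are independent and each is drawn from $\texttt{Lap}(\mathsf{sens}/\epsilon)$, the random vector $\mathcal{M}(D)$ has the product density
\[
  p_D(z) \;=\; \prod_{i=1}^k \frac{\epsilon}{2\,\mathsf{sens}}\exp\!\left(-\frac{\epsilon\,|z_i - a_i|}{\mathsf{sens}}\right), \qquad z \in \mathbb{R}^k,
\]
and $\mathcal{M}(D')$ has the analogous density $p_{D'}$ centered at $a'$.

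Next I would compute the likelihood ratio at an arbitrary point $z \in \mathbb{R}^k$:
\[
  \frac{p_D(z)}{p_{D'}(z)} \;=\; \prod_{i=1}^k \exp\!\left(\frac{\epsilon\,\bigl(|z_i - a'_i| - |z_i - a_i|\bigr)}{\mathsf{sens}}\right).
\]
By the reverse triangle inequality, $|z_i - a'_i| - |z_i - a_i| \le |a_i - a'_i|$ for each coordinate $i$, so the product is bounded by $\exp\!\bigl(\tfrac{\epsilon}{\mathsf{sens}}\sum_{i=1}^k |a_i - a'_i|\bigr) = \exp\!\bigl(\tfrac{\epsilon}{\mathsf{sens}}\,\|f(D)-f(D')\|_1\bigr)$. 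Invoking the sensitivity hypothesis, $\|f(D)-f(D')\|_1 \le \mathsf{sens}_1(f) \le \mathsf{sens}$, and hence $p_D(z) \le e^{\epsilon}\, p_{D'}(z)$ for every $z \in \mathbb{R}^k$.

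Finally I would integrate this pointwise inequality: for any measurable set $S \subseteq \mathbb{R}^k$,
\[
  \Pr[\mathcal{M}(D)\in S] \;=\; \int_S p_D(z)\,dz \;\le\; e^{\epsilon}\int_S p_{D'}(z)\,dz \;=\; e^{\epsilon}\,\Pr[\mathcal{M}(D')\in S],
\]
which is precisely the $(\epsilon, 0)$-differential privacy condition, so $\mathcal{M}$ preserves $\epsilon$-DP. I do not anticipate any real obstacle here; this is the classical argument of \cite{dwork2006calibrating}, and the only steps that warrant a little care are the use of independence to factor the densities as a product and applying the triangle inequality in the direction that makes the bound symmetric in $D$ and $D'$ (so that the same argument with the roles of $D,D'$ swapped also goes through, confirming the two-sided guarantee).
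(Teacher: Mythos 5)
Your proof is correct and is the classical argument for the Laplace mechanism (product density, reverse triangle inequality coordinate-wise, bound by $\ell_1$ sensitivity, integrate); the paper does not supply a proof of this lemma because it is a standard result imported from the differential privacy literature, so there is nothing to compare against. One minor note: as written the lemma statement omits the explicit conclusion ``$\mathcal{M}$ is $\epsilon$-differentially private,'' and you correctly inferred and proved that implicit claim.
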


\subsection{Background on Spectral Graph Theory}

A weighted graph $G = (V,E,w)$ with non-negative edges can be equivalently represented by its weight vector $w\in \mathbb{R}_{+}^N$ after we identify every edge with a number in $[N]$. Here we write $N = {n \choose 2}$, and let $w_e$ ($e\in [N]$) be the edge weight of the edge $e = \{u,v\}$. If some pair of vertices have no edge, the weight is $0$. Let $A_G\in \mathbb{R}^n\times \mathbb{R}^n$ be the symmetric adjacency matrix of $G$. That is,
$A_G[u,v] = A_G[v,u] = w_{\{u,v\}}$, where $u,v\in V$ are vertices, and $w_{\{u,v\}}$ is the weight of edge $uv$. (If $u$ is not adjacent with $v$, then $w_{\{u,v\}} = 0$.) Here, we define the edge adjacency matrix:
\begin{definition}
  [Edge adjacency matrix] Let $G$ be an undirected graph of $n$ vertices and $m$ edges. Consider an arbitrary orientation of edges, then $E_G\in \mathbb{R}^{m\times n}$ is the edge adjacency matrix where 
  $$E_G[e,v] = \left\{
      \begin{aligned}
          &+\sqrt{w_e}, &\text{if } v \text{ is } e\text{'s head,} \\
          &-\sqrt{w_e}, &\text{if } v \text{ is } e\text{'s tail,} \\
          &0, &\text{otherwise.}
      \end{aligned}
  \right.$$ 
  We drop the subscript $G$ when it is clear from the context. 
\end{definition}
\noindent An important object of interest in graph theory is the Laplacian of a graph:
\begin{definition}
  [Laplacian matrix] 
  \label{defn.laplacian}
  For an undirected graph $G$ with $n$ vertices and $m$ edges, the Laplacian matrix $L_G\in \mathbb{R}^{n\times n}$ of $G$ is $$L_G = E_G^\top E_G.$$
\end{definition}

\noindent Equivalently, one can verify that $L_G = D_G - A_G$, where $D_G\in \mathbb{R}^{n\times n}$ is a diagonal matrix with $D_G[u,u] = \sum_{e=\{u,v\}\in E} w_e$ for every $u\in [n]$. Also, we note that for any graph $G$, $L_G \succeq 0$, where $A \succeq$ denotes that $A$ is a positive semi-definite (PSD) matrix. One can easily verify that $L_G \bm{1}_n = 0$, where $\bm{1}_n\in \mathbb{R}^n$ is the all one vector. Let $0 = \lambda_1(G)\leq \lambda_2(G)\leq\cdots\leq \lambda_n(G)$ be the non-negative eigenvalues of $L_G$. For any vector $x$ in $\mathbb{R}^n$, the quadratic form of $L_G$ is $x^\top L_G x  \geq 0$. In particular, one can verify that 
$$x^\top L_G x = \sum_{e=\{u,v\}\in E} w_{e} (x(u) - x(v))^2.$$
For any connected graph $G$, we use $L_G^{\dagger}$ to denote the {\em Moore-Penrose pseudoinverse} of Laplacian $L_G$. We note that $L_G^{\dagger}$ is also a PSD matrix.

One of our main tasks is to preserve the cut size. For any $S,T\subseteq V$, we simply write $\Phi_G(S)$ be the size of $(S,V\backslash{S})$-cut for a graph $G$, and $\Phi_G(S,T)$ be the size of $(S,T)$-cut. For any vertex $v\in V$, we use $\mathbf{1}_v\in \{0,1\}^n$ to denote the column vector with $1$ in the $v$-th coordinate and $0$ anywhere else. Also, for a subset of vertices $S$ we use $\mathbf{1}_S\in \{0,1\}^n$ to denote the identity vector of $S$. It's easy to verify that 
$$\Phi_G(S) = \mathbf{1}_S^\top L_G  \mathbf{1}_S.$$
Therefore, if we obtain an approximation for the quadratic form of Laplacian matrix $L_G$, it's equivalently to obtain an approximation for the sizes of all $(S,V\backslash {S})$-cuts. Also note that if an algorithm preserves all $(S,V\backslash{S})$-cuts with a pure additive error, then it also preserves all $(S,T)$-cuts with a constant factor since,  for any disjoint $S,T\subseteq V$, $$\Phi_G(S,T) = \Phi_G(S,\bar{S}) + \Phi_G(T,\bar{T}) - 2\Phi_G(S\cup T,\overline{S\cup T}).$$
We also introduce the following lemma to bound the sensitivity of $\lambda_2(G)$:
\begin{lemma}\label{l.sens_of_eigengap}
    Let $G$ and $G'$ be two graphs, where $G'$ is obtained from $G$ by adding one edge joining two distinct vertices of $G$ (or increasing edge weight by $1$). Then 
    $$\lambda_2(G) \leq \lambda_2(G') \leq \lambda_2(G) + 2.$$
\end{lemma}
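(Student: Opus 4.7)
\bigskip

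\noindent\textbf{Proof plan for Lemma~\ref{l.sens_of_eigengap}.}
The plan is to express $L_{G'}$ as $L_{G}$ plus the Laplacian of the single new edge, and then invoke Weyl's inequality on the resulting sum of symmetric matrices. Concretely, let $e=\{u,v\}$ be the edge whose weight increases by $1$. Using the representation $L_G = E_G^\top E_G$ from Definition~\ref{defn.laplacian}, one checks directly that
\begin{equation*}
L_{G'} \;=\; L_G + L_e, \qquad L_e \;=\; (\mathbf{1}_u - \mathbf{1}_v)(\mathbf{1}_u - \mathbf{1}_v)^\top.
\end{equation*}
The matrix $L_e$ is a rank-one positive semidefinite matrix whose unique nonzero eigenvalue equals $\|\mathbf{1}_u-\mathbf{1}_v\|_2^{2} = 2$.

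For the lower bound, since $L_e \succeq 0$, we have $L_{G'} \succeq L_G$ as symmetric operators, so by the monotonicity version of Weyl's inequality (equivalently, the Courant--Fischer min--max characterization applied coordinate-wise in the ordering $\lambda_1\le\lambda_2\le\cdots\le\lambda_n$),
\begin{equation*}
\lambda_2(G') \;=\; \lambda_2(L_{G'}) \;\ge\; \lambda_2(L_G) \;=\; \lambda_2(G).
\end{equation*}

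For the upper bound, apply Weyl's inequality for the sum of two symmetric matrices in the form $\lambda_k(A+B) \le \lambda_k(A) + \lambda_{\max}(B)$. Taking $A=L_G$, $B=L_e$, $k=2$, and using $\lambda_{\max}(L_e) = 2$, we obtain
\begin{equation*}
\lambda_2(G') \;\le\; \lambda_2(G) + 2.
\end{equation*}
Combining the two bounds gives the lemma. I do not anticipate any real obstacle here: the argument is a clean two-line application of Weyl's inequality, and the only thing one has to verify carefully is the identification of $L_{G'} - L_G$ with the rank-one edge Laplacian $L_e$ of maximum eigenvalue $2$, which is immediate from the definition of the edge-adjacency matrix.
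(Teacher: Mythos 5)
Your proof is correct. The paper states Lemma~\ref{l.sens_of_eigengap} without giving any proof (it is a classical fact about the algebraic connectivity, going back to Fiedler), so there is no in-paper argument to compare against; your route is the standard one. The decomposition $L_{G'} = L_G + (\mathbf{1}_u - \mathbf{1}_v)(\mathbf{1}_u - \mathbf{1}_v)^\top$ is exactly right for both the "add an edge" and "increase a weight by $1$" cases, the perturbation is PSD of rank one with top eigenvalue $2$, and the two applications of Weyl's inequality (monotonicity for the lower bound, $\lambda_k(A+B) \le \lambda_k(A) + \lambda_{\max}(B)$ for the upper bound) give precisely the claimed two-sided estimate under the paper's increasing ordering of eigenvalues.
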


\section{Implementation of the Marginal Sampler and Topology Sampler}\label{s.sampler}

Our algorithm for both spectral and cut approximation uses a sampler, which we call {\em topology sampler}, that samples from a specific Gibbs distribution. Therefore, we first present the implementation of the topology sampler $\texttt{TS}_{\epsilon}$. Let $N = {n\choose 2}$. For any fixed $\epsilon>0$, given $0\leq k \leq N$ and $G = (V,E,w)$, $\texttt{TS}_{\epsilon}$ samples an edge set $S\subseteq [N]$ with size $k$ according to the distribution
$$\pi(S) \propto \prod_{e\in S} \exp(\epsilon\cdot w_e).$$
Note that $\pi$ can also be considered as the distribution defined by the exponential mechanism (with $\ell_1$ norm as its scoring function) over some non-convex set. To efficiently sample an edge set from $\pi$, we consider a more general tool to sample from a product distribution with sparsity constraint, which can be of independent interest.

Formally speaking, we consider the task where given $N\in \mathbb{N}$, let $\mathcal{P}:(0,1)^N\rightarrow \{0,1\}^N$ be a sampling oracle that takes $p_1,p_2,\cdots, p_N$ as input and outputs a configuration $\sigma$ of random variable $X = (X_1, X_2, \cdots, X_N)$ under the sparsity constraint. Here, $X_i$ is a Bernoulli random variable such that $\mathsf{Pr}[X_i] = p_i$ for $i\in[N]$. In other words, we aim to sample from $X|_{\|X\|_0 = k}$. 

Let 
$\mathsf{Pr}_{\otimes}(~\cdot~)$ be the product distribution and $\texttt{nnz}(X)$ be the number of non-zero elements in $X$, namely $\|X\|_0$. Our goal is to sample a configuration $\sigma\in \Omega_k$ with probability exactly $$\mathsf{Pr}_{\otimes}(X = \sigma | \texttt{nnz}(X) = k), \quad \text{where} \quad \Omega_k = \left\{X\in \{0,1\}^N \text{ such that } \texttt{nnz}(X)= k \right\}.$$ 

To do this, we compute the marginal distribution of each coin, $X_i$ for $1 \leq i \leq N$, conditional on $\texttt{nnz}(X)= k$. 
The idea is as follows: We first decide the marginal distribution of the first coin $X_1$, and sample a value in $x_1\in \{0,1\}$ from the marginal distribution. Then, we decide the marginal distribution of $X_2$, conditioned on $X_1 = x_1$ and $\texttt{nnz}(X)= k$. We repeat this procedure until all values of $\{X_i\}_{i\in[N]}$ are settled. Algorithm \ref{alg.p1.intro} formalizes this idea. In the algorithm, we use some notations that we have defined in Section \ref{s.technical_overview}. 
With Algorithm \ref{alg.p1.intro}, we show the following theorem:
\begin{theorem}\label{t.sample_oracle.intro}
    For any $k\in \mathbb{N}$ and $k\leq N$, there exists a sampling oracle $\mathcal{P}_k$ such that for any $\sigma\in \{0,1\}^N$, $\mathsf{Pr}_{\mathcal{P}_k}(\sigma) = \mathsf{Pr}_{\otimes}(X = \sigma | \texttt{nnz}(X) = k)$. Also, $\mathcal{P}_k$ terminates in time ${O}(Nk)$.
\end{theorem}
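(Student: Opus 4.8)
The plan is to establish both assertions — exactness of the oracle $\mathcal{P}_k$ and the $O(Nk)$ running time — by a direct analysis of the sequential procedure formalized in Algorithm~\ref{alg.p1.intro}. The oracle first precomputes the table $\widehat p_i^q$ of \eqref{eq.widehatp_iq} for $1 \le i \le N+1$ and $0 \le q \le k$, starting from the boundary values $\widehat p_{N+1}^0 = 1$ and $\widehat p_{N+1}^q = 0$ for $q \neq 0$ and filling the remaining entries from the tail toward the head via the recurrence \eqref{eq.recurrence}, which is just the law of total probability conditioned on the value of $X_i$ together with the independence of $X_i$ from $X_{i+1:N}$. It then scans the coordinates $e = 1,\dots,N$ in order, maintaining the number $s$ of ones placed so far, and sets $X_e = 1$ with probability $p_e^{mar} = p_e\,\widehat p_{e+1}^{\,k-s-1} / \widehat p_e^{\,k-s}$ (equivalently, the expression in the excerpt, after rewriting $\widehat p_e^{\,k-s}$ via \eqref{eq.recurrence}), updating $s$ accordingly.

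For correctness I would prove, by induction on $e$, the invariant that after the first $e-1$ coordinates have been assigned a partial configuration $x_{1:e-1}$ with $s := \sum_{i<e} x_i$, the value $X_e$ drawn by the algorithm has exactly the law $\mathsf{Pr}_{\otimes}(X_e = \cdot \mid X_{1:e-1} = x_{1:e-1},\ \texttt{nnz}(X) = k)$. The inductive step is a Bayes-rule computation: since $X_{e:N}$ is independent of $X_{1:e-1}$, conditioning on $\texttt{nnz}(X)=k$ is the same as conditioning on $\texttt{nnz}(X_{e:N}) = k-s$, and then $\mathsf{Pr}_{\otimes}(X_e = 1 \mid \texttt{nnz}(X_{e:N}) = k-s) = p_e\,\widehat p_{e+1}^{\,k-s-1}/\widehat p_e^{\,k-s} = p_e^{mar}$. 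Multiplying these conditional marginals along the scan and telescoping the chain rule yields that the returned configuration $\sigma$ appears with probability $\prod_{e=1}^N \mathsf{Pr}_{\otimes}(X_e = \sigma_e \mid X_{1:e-1} = \sigma_{1:e-1},\ \texttt{nnz}(X) = k) = \mathsf{Pr}_{\otimes}(X = \sigma \mid \texttt{nnz}(X) = k)$, which is the claimed identity; applying this with $p_i = \exp(\epsilon w_i)/(\exp(\epsilon w_i)+1)$ then gives the topology sampler $\texttt{TS}_\epsilon$ for the distribution $\pi$ of \eqref{eq:samplingdistribution}. The running-time bound is the easy half: the table has $O(Nk)$ entries, each obtained from two earlier entries in $O(1)$ arithmetic operations, and the sampling scan touches each coordinate once while reading $O(1)$ table entries, so the total is $O(Nk)$.

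I expect the main obstacle to be not the algebra but the well-definedness of the procedure along every trajectory it can realize: one must guarantee that the denominator $\widehat p_e^{\,k-s}$ is strictly positive and that the residual target $k-s$ never leaves $\{0,\dots,k\}$, so that each $p_e^{mar}$ is a genuine probability. The clean way to handle this is to carry, as part of the induction hypothesis, the statement that the conditioning event $\{X_{1:e-1} = x_{1:e-1},\ \texttt{nnz}(X)=k\}$ has positive probability (equivalently $\widehat p_e^{\,k-s} > 0$); strict positivity of every $p_i \in (0,1)$ makes the base case $\widehat p_1^{\,k} > 0$ hold for all $0 \le k \le N$ and propagates the property through the inductive step, since a realized trajectory never assigns a value of probability zero. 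A minor secondary point worth a remark is the arithmetic model: the entries $\widehat p_i^q$ can be exponentially small, so the $O(Nk)$ bound counts arithmetic operations and the word ``exact'' is meant in the real-RAM (or exact-rational) sense standard for perfect samplers.
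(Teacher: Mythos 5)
Your proof is correct and takes essentially the same route as the paper: the same dynamic-programming table, the same marginal formula (your $p_e^{mar} = p_e\,\widehat{p}_{e+1}^{\,k-s-1}/\widehat{p}_e^{\,k-s}$ equals the paper's after applying the recurrence \eqref{eq.recurrence} to the denominator), and the same chain-rule telescoping. One small improvement worth keeping: carrying the invariant $\widehat{p}_e^{\,k-s}>0$ gives a crisper well-definedness argument than the paper's, which only verifies $k'\ge 0$ and is somewhat informal about why the denominators stay positive and why the residual budget ends in $\{0,1\}$.
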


    \begin{algorithm}[t]
      \caption{{Sampler} $\mathcal{P}_k$}
      \label{alg.p1.intro}
      \KwIn{$N$ parameters $(p_i)_{i\in [N]}$ associated with $N$ coins.}
      \KwOut{A configuration $\sigma$ of $N$ coins such that only $k$ coins are set to be $1$.}
        \For{$1\leq i\leq N$}{
        \For{$q>N-i+1$}{
        $\widehat{p}_i^q = 0$\;
        }}
        Set $\widehat{P}_{N}^0 \leftarrow 1-p_N$, $\widehat{P}_{N}^1 \leftarrow p_N$\;
        Recursively compute $\widehat{p}_i^q$ for remaining $q$ and $i$ by $ \widehat{p}_i^q = p_{i}\cdot \widehat{p}_{i+1}^{q-1} + (1-p_i) \cdot \widehat{p}_{i+1}^{q}$\;
        Let $k' \leftarrow k$\;
        \For{$1\leq i\leq N-1$}{
            Compute $$p_i^{mar} \leftarrow \frac{p_i\cdot \widehat{p}_{i+1}^{(k'-1)}}{ p_i\cdot \widehat{p}_{i+1}^{(k'-1)} + (1-p_i)\cdot\widehat{p}_{i+1}^{(k')} },$$
            
            \textbf{Sample} $x_i\in \{0,1\}$ with $\mathsf{Pr}[x_i = 1] = p_i^{mar}$\; 
            {\bf Set} $k' \gets k'-1$ if $x_i=1$\;
        }
        \textbf{Set} 
        \[
        x_N = \begin{cases}
            0, & \text{if }k'=0; \\
            1, & \text{otherwise}.
        \end{cases}
        \]

        \Return{$\sigma = (x_1, x_2, \cdots, x_N)$}.
    \end{algorithm}

    \begin{proof}
        We first argue the correctness of the sampling procedure. 
        Recall that for each $1\leq i \leq N$ and $-1\leq q\leq m$, we define  
\begin{align}
    \widehat{p}_i^q = 
    \begin{cases}
        \mathsf{Pr}_{\otimes}(\texttt{nnz}(X_{i:N})= q), &\text{ if }q\geq 0\\
        0 &\text{ if }q<0.  
    \end{cases}
    \label{eq.widehatp_iq_in_proof}
\end{align}
    First, we claim that during the execution of Algorithm \ref{alg.p1.intro}, it always holds that $k'\geq 0$. At the start, it is set to be $k$ (Step 1).  Indeed, if $x_i = 1$ results in $k' = 0$, then by the definition, $\widehat{p}_{i+1}^{k'-1}=0$ for any $1\leq i \leq N-1$. This implies that $p_{i+1}^{mar}=0$ and $x_{i+1} \neq 1$. Further, since $p_i<1$ for any $i\in[N]$, both $(1-p_i)$ and $\widehat{p}_{i+1}^{0}$ are strictly positive. Therefore, $k'$ will no longer decrease. Moreover, for $1\leq i\leq N-1$, we have     $$p_i\cdot\widehat{p}_{i+1}^{(k' - 1)}+ (1-p_i)\cdot \widehat{p}_{i+1}^{(k')} >0,$$
    so that each $p^{mar}_i$ is well-defined. Since $k'$ records the number of non-zero entries in $({x}_{1}, \cdots, {x}_{i-1})$, it is easy to verify that Algorithm \ref{alg.p1.intro} does output an entry in $\Omega_k$.
    
        Suppose the oracle outputs $\bm{x} = (x_1, x_2, \cdots, x_N)\in \Omega_k$. Let $k'$ be the number of non-zero entries in $\bm{x}_{1:i-1} = (x_1, \cdots, x_{i-1})$. Then, by line 3 of Algorithm \ref{alg.p1.intro}, we sample each $X_i$ for $i<N$ with probability
        \begin{equation*}
            \begin{aligned}
                \mathsf{Pr}(X_i = 1) = p_i^{mar} &= {p_i\cdot \widehat{p}_{i+1}^{(k'-1)}\over (p_i\cdot \widehat{p}_{i+1}^{(k'-1)} + (1-p_i)\cdot\widehat{p}_{i+1}^{(k')}) }\\
                & = \frac{p_i \cdot \mathsf{Pr}_{\otimes}(\texttt{nnz}(X_{i+1:N})= k'-1)}{p_i\cdot \mathsf{Pr}_{\otimes}(\texttt{nnz}(X_{i+1:N})= k'-1) + (1-p_i)\cdot \mathsf{Pr}_{\otimes}(\texttt{nnz}(X_{i+1:N})= k')}\\
                & = \frac{\mathsf{Pr}_{\otimes}(X_i = 1 \land \texttt{nnz}(X_{i:N}) = k' )}{\mathsf{Pr}_{\otimes}(X_i = 1 \land \texttt{nnz}(X_{i:N}) = k' ) + \mathsf{Pr}_{\otimes}(X_i = 0 \land \texttt{nnz}(X_{i:N}) = k' )}\\
                & = \mathsf{Pr}_{\otimes} (X_i = 1| \texttt{nnz}(X_{i:N})= k').
            \end{aligned}
        \end{equation*}
        We can equivalently write $\mathsf{Pr}(X_i = 1)$ {(for all $1 \leq i < N$)} by 
        \begin{equation}\label{equation1.intro}
        \mathsf{Pr}(X_i = 1) = \mathsf{Pr}_{\otimes}(X_i = 1|\texttt{nnz}(X)= k \land X_{1:i-1} = \bm{x}_{1:i-1})
        \end{equation}
        and \Cref{equation1.intro} also holds for $X_N$. 
        Thus, for any $\bm{x}\in \Omega_k$,
        \begin{equation*}
            \begin{aligned}
                \mathsf{Pr}_{\mathcal{P}_k}(X = \bm{x}) &= \prod_{i = 1}^{d}\mathsf{Pr}_{\mathcal{P}_k}(X_i = x_i | X_{1:i-1} = \bm{x}_{1:i-1}) \\
                &= \prod_{i = 1}^{d} \mathsf{Pr}_{\otimes} (X_i = x_i| X_{1:i-1} = \bm{x}_{1:i-1} \land \texttt{nnz}(X)= k ),
            \end{aligned}
        \end{equation*}
        which is the same as $\mathsf{Pr}_{\otimes}(X = \bm{x}|\texttt{nnz}(X)= k)$. 
        This completes the correctness proof.
        
        Next, we analyze the time complexity of Algorithm \ref{alg.p1.intro}. A main barrier is how to compute each $\widehat{p}_i^q = \mathsf{Pr}_{\otimes}(\texttt{nnz}(X_{i:N})= q)$ for $q\geq 0$. Generally speaking, directly computing $\widehat{p}_i^q$ is hard. 
        However, using the recurrence relation in \Cref{eq.recurrence}, we can easily compute $\widehat p_i^q$ once we have all the terms, i.e., $p_i,\widehat p_{i+1}^{q-1},$ and $\widehat p_{i+1}^q$. Now, computing $\widehat{p}_{i}^0$ and $\widehat{p}_{N}^q$ is simple for $1\leq i \leq N, 0\leq q\leq k$. Therefore, if we compute each $\widehat{p}_{i}^q $ backwards, we have the value of $\widehat{p}_{i+1}^{q-1} + (1-p_i)$ and $(1-p_i) \cdot \widehat{p}_{i+1}^{q}$ before we compute $ \widehat{p}_{i}^{q}$. Note that, the base case is $\widehat{p}_N^*$, where $\widehat{p}_N^0 = 1-p_N$ and $\widehat{p}_N^1 = p_N$. Then, we can compute all $\widehat{p}_i^q$'s using dynamic programming in time ${O}(Nk)$.  Computing the marginal of every edge in $[N]$ then takes ${O}(N)$ time. Therefore, the time complexity is ${O}(Nk)$.
    \end{proof}

\paragraph{Implementation of Topology Sampler}
Equipped with the marginal sampler described above, we can design an efficient sampler for the distribution defined in \Cref{eq:samplingdistribution}. 

\begin{theorem}\label{t.topology_sampler}
  Fix any $\epsilon>0$, there is a topology sampler $\texttt{TS}_{\epsilon}:\mathbb{N} \times \mathbb{R}_{+}^N  \rightarrow 2^{[N]}$ such that for any given $k\in \mathbb{N}$ ($k\leq N$) and undirected weighted graph with $n$ vertices, it outputs an edge set of size $k$ according to distribution defined in \Cref{eq:samplingdistribution}. Further, $\texttt{TS}_{\epsilon}$ runs in time $O(kn^2)$.
\end{theorem}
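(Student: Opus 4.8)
The plan is to reduce the topology-sampling task to the product-distribution-under-a-sparsity-constraint problem already solved by the oracle $\mathcal{P}_k$ of \Cref{t.sample_oracle.intro}. Given $G = (V,E,w)$ with $w \in \mathbb{R}_+^N$ and an integer $k \le N$, for each potential edge $e \in [N]$ I would set
\[
  p_e \;:=\; \frac{\exp(\epsilon w_e)}{\exp(\epsilon w_e) + 1},
\]
so that $p_e/(1-p_e) = \exp(\epsilon w_e)$. Since each $w_e \in [0,\infty)$ is finite we have $0 < p_e < 1$, hence $(p_e)_{e\in[N]}$ is a valid input to $\mathcal{P}_k$. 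The sampler $\texttt{TS}_\epsilon$ then calls $\mathcal{P}_k$ on $(p_e)_{e\in[N]}$ and returns the edge set $S = \{\, e \in [N] : x_e = 1 \,\}$, where $(x_e)_{e\in[N]}$ is the configuration output by $\mathcal{P}_k$. By construction $|S| = k$, so $\texttt{TS}_\epsilon$ indeed outputs an edge set of size $k$.

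For correctness, I would invoke \Cref{t.sample_oracle.intro}: the returned configuration has law $\mathsf{Pr}_{\otimes}(X = \sigma \mid \texttt{nnz}(X) = k)$, where $X = (X_e)$ is the independent Bernoulli vector with $\mathsf{Pr}[X_e = 1] = p_e$. Then for every $S \subseteq [N]$ with $|S| = k$,
\[
  \mathsf{Pr}_{\otimes}(X = \mathbf{1}_S) \;=\; \prod_{e \in S} p_e \prod_{e \notin S} (1 - p_e) \;=\; \frac{\prod_{e \in S} \exp(\epsilon w_e)}{\prod_{e \in [N]} \bigl(\exp(\epsilon w_e) + 1\bigr)},
\]
using $1 - p_e = 1/(\exp(\epsilon w_e) + 1)$. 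The denominator is a constant $Z$ independent of $S$, and likewise $\mathsf{Pr}_{\otimes}(\texttt{nnz}(X) = k) = Z^{-1}\sum_{|S'| = k} \prod_{e \in S'} \exp(\epsilon w_e)$ is independent of $S$; dividing, the law of $S$ is
\[
  \mathsf{Pr}_{\otimes}(X = \mathbf{1}_S \mid \texttt{nnz}(X) = k) \;=\; \frac{\prod_{e \in S} \exp(\epsilon w_e)}{\sum_{|S'| = k} \prod_{e \in S'} \exp(\epsilon w_e)},
\]
which is exactly the distribution $\pi$ of \Cref{eq:samplingdistribution}. This establishes that $\texttt{TS}_\epsilon$ samples exactly (with no error) from $\pi$.

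For the running time, evaluating all $p_e$ costs $O(N)$, and by \Cref{t.sample_oracle.intro} the call to $\mathcal{P}_k$ costs $O(Nk)$; since $N = \binom{n}{2} = O(n^2)$, the total is $O(kn^2)$, as claimed.

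I do not foresee a substantive mathematical obstacle: the whole argument is the reduction above plus the one-line identity $p_e/(1-p_e) = \exp(\epsilon w_e)$, together with the already-proved guarantees of $\mathcal{P}_k$. The single point I would be careful to state explicitly is the computational model — $\mathcal{P}_k$ and the evaluations $w_e \mapsto \exp(\epsilon w_e)/(\exp(\epsilon w_e)+1)$ are treated as exact real-arithmetic operations, which is what makes the resulting simulation of the exponential mechanism \emph{exact} and hence suitable for the downstream pure-DP claims; in a bounded-precision model (e.g.\ if some $w_e$ is enormous) one would instead propagate the ratios $\widehat{p}_{i+1}^{\,q-1}/\widehat{p}_{i+1}^{\,q}$ directly, but that is orthogonal to the present statement.
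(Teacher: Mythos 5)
Your proposal is correct and follows the same route as the paper: associate to each potential edge a Bernoulli variable with success probability $p_e = \exp(\epsilon w_e)/(1+\exp(\epsilon w_e))$, invoke the exact sampler $\mathcal{P}_k$ from \Cref{t.sample_oracle.intro}, and observe that conditioning on $\|X\|_0 = k$ reproduces $\pi$ since the $\prod_e (1+\exp(\epsilon w_e))^{-1}$ factor is $S$-independent. The only additions beyond the paper's argument are the explicit check that $0 < p_e < 1$ (so the input to $\mathcal{P}_k$ is legitimate) and the remark about the exact-arithmetic model, both of which are reasonable but not substantively different.
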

\begin{proof}
Let $\epsilon>0$ be a fixed parameter. For any edge $e\in [N]$, we associate a Bernoulli random variable $X_e$ such that 
  $$p_e = \mathsf{Pr}[X_e = 1] = \frac{\exp({\epsilon\cdot w_e})}{1+\exp({\epsilon\cdot w_e})}.$$
  We run the oracle $\mathcal{P}_k$, described in Algorithm~\ref{alg.p1.intro}, on $\{p_e\}_{e\in [N]}$. By Theorem \ref{t.sample_oracle.intro}, for any $S \in \{0,1\}^N$ with $\|S\|_0 = k$,

  \begin{equation*}
    \begin{aligned}
      \mathsf{Pr}[X = S] &\propto \prod_{e\in S} \frac{\exp({\epsilon\cdot w_e})}{1+\exp({\epsilon\cdot w_e})} \cdot \prod_{e\in [N]\backslash S} \frac{1}{1+\exp({\epsilon\cdot w_e})} \propto \prod_{e\in S} \exp(\epsilon\cdot w_e),
    \end{aligned}
  \end{equation*}
  which is the distribution defined in \Cref{eq:samplingdistribution}. Also, the run time of $\texttt{TS}_\epsilon$ directly follows from the run time of $\mathcal{P}_k$ in Theorem \ref{t.sample_oracle.intro}.
\end{proof}

\section{Differentially Private Spectral Approximation}

In this section, we describe our algorithm on spectral approximation for undirected graphs with pure differential privacy and give its analysis. 

\subsection{The Algorithm}\label{s.41}

Let $N = {n\choose 2}$, where $n \in \mathbb N$ is the number of vertices. For any $1\leq k\leq N$, we define ${[N]\choose k}$ be the collection of all subsets of $[N]$ with cardinality $k$, where $[N]=\{1,2,\cdots, N\}$.  Given a  weighted undirected graph $G=(V,E,w)$ with $|E| = m, w \in \mathbb R^N_{+}$ and a parameter $\epsilon>0$, the idea is to privately choose a topology (edge set) and reweigh the weights of all edges in the new edge set. Specifically, we use $\widehat{m}$ (which will be specified in Algorithm \ref{alg1}) to denote the perturbed number of edges. Recall that we will sample from the distribution $\pi$ over ${[N]\choose \widehat{m}}$:

\begin{equation}\label{eq.distribution}
  \begin{aligned}
    \forall \widehat{E}\in {[N] \choose \widehat{m}}, \quad \pi(\widehat{E})\propto \prod_{e\in \widehat{E}}\exp(\epsilon w_e).
  \end{aligned}
\end{equation}

\noindent The algorithm outputs a synthetic graph $\widehat{G}$ with at most $\widehat{m}$ edges by first sampling a new edge set $\widehat{E}$ from $\pi$. It then uses the Laplace mechanism to output the perturbed weights of edges in $\widehat{E}$. Note that we use $\widehat{m}$ instead of $m$ to define the distribution to preserve privacy. The first step of our algorithm can be considered as using the exponential mechanism (with $\ell_1$-norm as its scoring function) over a non-convex set to select a new edge set. 
We use the topology sampler (Theorem~\ref{t.topology_sampler}) to perform this task. Our algorithm is formally described as Algorithm~\ref{alg1}.

\begin{algorithm}[h]
	\caption{{Private spectral approximation by topology sampler}}\label{alg1}
	\KwIn{A graph $G = (V,E,w)$ with $w\in \mathbb{R}^N_{+}$ and $|E|=m$, privacy budgets $\epsilon$, and a parameter $\beta\in (0,\frac{1}{2})$.}
	\KwOut{A synthetic graph $\widehat{G}$.}

    Let $\widehat{m}\leftarrow \min\{N, \lceil m + \texttt{Lap}(1/\epsilon) + \log(1/\beta)/\epsilon \rceil \}$ \;
    Let $\widehat{E} \leftarrow \texttt{TS}_{\epsilon}(\widehat{m}, w)$\;
    \For{$e\in \widehat{E}$}{
        Draw an independent Laplace noise $Z \sim \lap(1/\epsilon)$ \;
        $\widehat{w}_e  \leftarrow \max \{0, w_e + Z$\} \;
    }
    \For{$e'\in [N] \land e' \notin \widehat{E}$}{
        $\widehat{w}_{e'} \leftarrow 0$ \;
    }
    \Return{$\widehat{G} = \left(V,\widehat{E}, \{\widehat{w}_e\}_{e\in \widehat{E}}\right)$}.

\end{algorithm}

\begin{remark}
  Note that in Algorithm \ref{alg1}, we add a constant $\log(1/\beta)/\epsilon$ to the perturbed number of edges $m+\texttt{Lap}(1/\epsilon)$. It makes $\widehat{m}\geq m$ with probability at least $1-\beta$. Thus, applying the topology sampler on $\widehat{m}$ has the opportunity to output an edge set $\widehat{E}$ that covers $E$, which plays a key role in the utility of spectral approximation.
\end{remark}

\subsection{Proof of Privacy and Utility Guarantee of Algorithm~\ref{alg1}}

\noindent We first give the privacy guarantee of Algorithm \ref{alg1}. Note that Algorithm \ref{alg1} is simply a combination of our private sampler and the Laplace mechanism, which implies the following privacy guarantee:

\begin{theorem}\label{t.spectral_privacy}
  Fix any $\epsilon>0$, Algorithm \ref{alg1} preserves $(4\epsilon,0)$-differential privacy.
\end{theorem}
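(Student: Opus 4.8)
The plan is to read Algorithm~\ref{alg1} as the adaptive composition of three private subroutines, and then close the argument with the post-processing lemma (Lemma~\ref{l.post_processing}) and the adaptive composition lemma (Lemma~\ref{l.adaptive_composition}). The three stages are: (i) the release of the noisy edge count $\widehat m$ in line~1; (ii) the topology sampler $\texttt{TS}_\epsilon(\widehat m,w)$ in line~2; and (iii) the Laplace reweighing of the selected edges in lines~3--8 (together with zeroing out all other coordinates). I will show that, with respect to $w$ and conditioned on the outputs of the earlier stages, these subroutines are $\epsilon$-, $2\epsilon$-, and $\epsilon$-differentially private, respectively, so that composing yields $(4\epsilon,0)$-DP.

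For stage~(i), on neighbouring graphs $G=(V,E,w)$ and $G'=(V,E',w')$ we have $\|w-w'\|_0\le 1$, hence the query $m=\|w\|_0$ has $\ell_1$-sensitivity at most $1$; so $m+\texttt{Lap}(1/\epsilon)$ is $\epsilon$-DP by Lemma~\ref{l.laplace}, and adding the data-independent constant $\log(1/\beta)/\epsilon$, taking the ceiling, and capping at $N$ are all post-processing, so $\widehat m$ is $\epsilon$-DP. For stage~(iii), I fix $\widehat E$ (which is in the conditioning) and consider the map $w\mapsto (w_e)_{e\in\widehat E}$; since $\sum_{e\in\widehat E}|w_e-w'_e|\le\|w-w'\|_1\le\|w-w'\|_0\cdot\|w-w'\|_\infty\le 1$, this query has $\ell_1$-sensitivity at most $1$, so perturbing each coordinate by an independent $\texttt{Lap}(1/\epsilon)$ is $\epsilon$-DP, and the truncation $\max\{0,\cdot\}$ and the assignment $\widehat w_{e'}\leftarrow 0$ for $e'\notin\widehat E$ are post-processing.

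The crux is stage~(ii). The key observation is that for a fixed size parameter $\widehat m$, the distribution $\pi$ of \Cref{eq.distribution}, from which $\texttt{TS}_\epsilon$ samples \emph{exactly} by Theorem~\ref{t.topology_sampler}, is precisely the exponential mechanism with utility function $q(w,S)=\sum_{e\in S}w_e$ and selection weight $\exp(\epsilon\,q(w,S))$. On neighbouring inputs $|q(w,S)-q(w',S)|\le\|w-w'\|_1\le 1$, so a direct ratio computation—bounding both the pointwise ratio $\exp(\epsilon q(w,S))/\exp(\epsilon q(w',S))$ and the ratio of normalizing constants $\sum_{S'}\exp(\epsilon q(w',S'))/\sum_{S'}\exp(\epsilon q(w,S'))$ by $e^{\epsilon}$—gives $\pi_w(S)\le e^{2\epsilon}\pi_{w'}(S)$ for every fixed $\widehat m$ and every edge set $S$. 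Hence $\texttt{TS}_\epsilon(\widehat m,\cdot)$ is $2\epsilon$-DP for each fixed $\widehat m$.

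Finally I assemble the pieces. Given $\widehat m$, the pair $(\widehat E,\{\widehat w_e\}_{e\in\widehat E})$ is the composition of the $2\epsilon$-DP topology sampler with the $\epsilon$-DP Laplace reweighing, hence $3\epsilon$-DP with respect to $w$ for every fixed $\widehat m$; composing this with the $\epsilon$-DP release of $\widehat m$ via Lemma~\ref{l.adaptive_composition} yields $(\epsilon+3\epsilon,0)=(4\epsilon,0)$-DP for the output $\widehat G$. I expect the main obstacle to be the bookkeeping around the data-dependent size parameter $\widehat m$: one must verify that the exponential-mechanism guarantee for $\texttt{TS}_\epsilon$ holds \emph{uniformly} over all attainable values of $\widehat m$, so that conditioning on $\widehat m$—as required to invoke adaptive composition—is legitimate; once that is in place, the remaining steps are routine sensitivity calculations.
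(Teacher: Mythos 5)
Your proof is correct and follows essentially the same route as the paper: the paper also decomposes Algorithm~\ref{alg1} into the $\epsilon$-DP release of $\widehat m$ via the Laplace mechanism, the $2\epsilon$-DP topology sampler argued through the exponential-mechanism ratio bound, and the $\epsilon$-DP Laplace reweighing, and then invokes adaptive composition and post-processing to conclude $(4\epsilon,0)$-DP. Your version is somewhat more explicit about the sensitivity calculations (in particular the $\|w-w'\|_1\le\|w-w'\|_0\cdot\|w-w'\|_\infty\le 1$ bound for the reweighing step) and about conditioning on $\widehat m$, but the decomposition and the key lemmas are the same.
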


\begin{proof}
  We show that sampling a topology by $\texttt{TS}_{\epsilon}$ (in Theorem \ref{t.topology_sampler}) preserves $(2\epsilon, 0)$-differentially privacy. This directly comes from the privacy of the exponential mechanism. In particular, let $G=(V,E,w)$ and $G'=(V,E',w')$ be a pair of neighboring graphs that differ in one edge by at most $1$, note that given any $k \in \mathbb{N}$ and $\widehat{E}\subseteq [n]$ with $|\widehat{E}| = k$,
  \begin{equation*}
    \begin{aligned}
      \frac{\mathsf{Pr}[\texttt{TS}_\epsilon({k,w}) = \widehat{E}]}{\mathsf{Pr}[\texttt{TS}_\epsilon({k,w'}) = \widehat{E}]} &= \frac{\prod_{e\in \widehat{E}} \exp(\epsilon \cdot w_e)}{\prod_{e\in \widehat{E}}\exp(\epsilon \cdot w'_e)} \cdot \frac{\sum_{E'\in {[N]\choose k}}\prod_{e\in {E'}}\exp(\epsilon \cdot w'_e)}{\sum_{E'\in {[N]\choose k}}\prod_{e\in {E'}}\exp(\epsilon \cdot w_e)}\\
      & \leq \max_{i\in [N]} \exp(2\epsilon |w_i - w'_i|)\leq e^{2\epsilon}.
    \end{aligned}
  \end{equation*} 
By Lemma \ref{l.laplace} and the post-processing property of differential privacy (Lemma \ref{l.post_processing}), we see that outputting $\widehat{m}$, $\widehat{G}$ by
  $$\widehat{m} \leftarrow  \min\{N, \lceil m + \texttt{Lap}(1/\epsilon) + \log(1/\beta)/\epsilon \rceil \}$$
  and
  $$ \widehat{G} \leftarrow \left(V, \widehat{E}, \{\widehat{w}_e\}_{e\in \widehat{E}}\right)$$
  are both $(\epsilon,0)$-differentially private. Thus, by the  composition theorem (Lemma \ref{l.adaptive_composition}) of differential privacy, Algorithm \ref{alg1} is $(4\epsilon,0)$-differentially private completing the proof of Theorem~\ref{t.spectral_privacy}.
\end{proof}

\noindent Next, we give the utility guarantee of Algorithm \ref{alg1}.

\begin{theorem}\label{t.spectral}
  Fix any $\epsilon>0$. For any input graph $G = (V,E,w)$ with maximum \textbf{unweighted} degree $\Delta(G)\leq n-1$ and $m$ satisfies that $m = \Omega(n)$ and $m = o(N)$, with probability at least $1-2\beta - 1/n^c$ for any positive constant $c$, our scheme outputs a $\widehat{G}$ such that   
  $$\left\|L_G-L_{\widehat{G}}\right\|_2 = O\left(\frac{\Delta(G)\cdot \log^2(n)}{\epsilon} + \frac{\log(1/\beta)\log^2n}{n\epsilon^2}\right).$$
\end{theorem}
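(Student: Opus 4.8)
The plan is to track the behavior of Algorithm~\ref{alg1} in three phases and combine the resulting perturbations additively. \emph{Phase 1: the perturbed edge count.} By the Laplace tail bound, with probability at least $1-\beta$ we have $\widehat m \geq m$, and with probability at least $1-\beta$ we also have $\widehat m \leq m + O(\log(1/\beta)/\epsilon)$; on this event $\widehat m = m + t$ for some $0 \leq t = O(\log(1/\beta)/\epsilon)$. The event $\widehat m \geq m$ is the crucial one: it means the topology sampler $\texttt{TS}_\epsilon(\widehat m, w)$ is choosing an edge set of size at least $m$, so there is a real chance (which we quantify next) that $\widehat E \supseteq E$. \emph{Phase 2: the sampled topology.} Since every $e \in [N]\setminus E$ has $w_e = 0$, the sampling distribution $\pi$ restricted to those coordinates is exchangeable, so conditioned on which heavy edges of $E$ are included, the remaining $\widehat m - |\widehat E \cap E|$ edges are a uniformly random subset of the non-edges. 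Two structural facts must be extracted here. First, with high probability $E \subseteq \widehat E$: an edge $e \in E$ with weight $w_e$ is kept with marginal probability roughly $\exp(\epsilon w_e)/(\exp(\epsilon w_e)+1) \cdot (\text{correction from the size constraint})$, and a short computation using the recurrence for $\widehat p_i^q$ shows that any edge $e \in E$ not included must have small weight, $w_e = O(\log n/\epsilon)$; since $m = o(N)$ and $\widehat m \geq m$ we can afford to include all of $E$ except possibly a handful of light edges, and a union bound over the $\le m$ edges of $E$ kills the failure probability down to $1/n^c$. Second, the $\le \widehat m$ ``new'' edges $\widehat E \setminus E$, being (essentially) a uniform random $k$-subset of non-edges with $k \le \widehat m = O(m) = o(N)$, form a graph distributed like $\mathcal G_{n,k}$; by Theorem~\ref{t.random} we may pass to $\mathcal G_{n,p}$ with $p = k/N = o(1)$, and the standard maximum-degree concentration for sparse Erd\H{o}s--R\'enyi graphs ($\Delta(\mathcal G_{n,p}) = O(\max\{np, \log n\})$ whp, here $O(\log n)$ since $np = k/n \cdot 1 \le \widehat m/n = O(m/n)$, and a sharper bound via the fact that each vertex sees $\le n$ potential new edges each present with probability $\approx p$) gives that every vertex is incident to $O(\Delta(G)\log n)$ new edges with probability $1 - 1/n^c$.

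\emph{Phase 3: the reweighting, and assembling the spectral bound.} On $\widehat E$ we replace $w_e$ by $\max\{0, w_e + Z_e\}$ with $Z_e \sim \lap(1/\epsilon)$ i.i.d.; by a union bound over the $\widehat m \le N$ edges, $\max_{e \in \widehat E}|Z_e| = O(\log n/\epsilon)$ with probability $1 - 1/n^c$ (the clipping at $0$ only shrinks $|\,\widehat w_e - w_e\,|$, so this survives clipping). Condition now on the good event from all three phases. Consider $B := L_G - L_{\widehat G}$. Write it as $B = B_1 + B_2$, where $B_1$ is the Laplacian-difference supported on $E$ (edges common to both graphs, whose weights changed by $|w_e - \widehat w_e| = O(\log n/\epsilon)$ — here we use $E \subseteq \widehat E$ so every original edge survives and is merely reweighted, and the lower clipping cannot increase the discrepancy), and $B_2 = -L_{\widehat G \setminus E}$ is (minus) the Laplacian of the new-edge graph, whose edge weights are $O(\log n / \epsilon)$ and which has maximum unweighted degree $O(\Delta(G)\log n)$. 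For $B_1$: since $G$ has maximum unweighted degree $\Delta(G)$, each row of $B_1$ has at most $2\Delta(G)$ nonzero entries (the off-diagonal weight changes plus the diagonal degree change), each of magnitude $O(\log n/\epsilon)$, so $\|B_1\|_2 \le \|B_1\|_{1 \to 1}^{1/2}\|B_1\|_{\infty \to \infty}^{1/2} = O(\Delta(G)\log n/\epsilon)$ by the standard row/column $\ell_1$-norm bound on the spectral radius of a symmetric matrix. For $B_2$: here is where we invoke Bilu--Linial (Theorem~\ref{t.spectral_bilu}). The matrix $A := -B_2 = L_{\widehat G \setminus E}$ (or, more precisely, its off-diagonal part, the weighted adjacency matrix of the new-edge graph) has each row $\ell_1$-norm at most $\ell = O(\Delta(G)\log n \cdot \log n/\epsilon) = O(\Delta(G)\log^2 n/\epsilon)$. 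The hypothesis of Theorem~\ref{t.spectral_bilu} to verify is the ``discrepancy'' bound $|u^\top A v| \le \alpha \|u\|_2\|v\|_2$ for all disjoint $0/1$ vectors $u,v$, which for $\alpha = \widetilde O(\Delta(G)/\epsilon)$ amounts to a Chernoff-plus-union-bound argument over all $O(3^n)$ pairs $(u,v)$: because the new-edge graph is (conditionally) a uniform sparse random graph, $u^\top A v$ is a sum of independent terms with the right expectation and variance, so the union bound closes once $\alpha = \Omega(\Delta(G)\log n/\epsilon)$. With $\ell = \widetilde O(\Delta(G)/\epsilon)$ and $\alpha = \widetilde O(\Delta(G)/\epsilon)$, Theorem~\ref{t.spectral_bilu} yields $\|A\|_2 = O(\alpha \log(\ell/\alpha)+\alpha) = O(\Delta(G)\log^2 n/\epsilon)$, and we also need to check the diagonal condition $|A[u,u]| = O(\alpha(\log(\ell/\alpha)+1))$, which holds since the weighted degree of any vertex in the new-edge graph is $O(\Delta(G)\log n)\cdot O(\log n/\epsilon) = \widetilde O(\Delta(G)/\epsilon)$. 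Adding, $\|B\|_2 \le \|B_1\|_2 + \|B_2\|_2 = O(\Delta(G)\log^2 n/\epsilon)$. Finally, the small residual term $O(\log(1/\beta)\log^2 n/(n\epsilon^2))$ in the statement accounts for the at-most-$t = O(\log(1/\beta)/\epsilon)$ genuinely extra edges forced by $\widehat m - m$ beyond what the ``topology covers $E$'' analysis controls — since $t$ edges distributed over the random topology add at most $t$ to any vertex degree and each carries weight $O(\log n/\epsilon)$, their contribution to the spectral norm is $O(t \cdot \log n/\epsilon \cdot (\text{factors}))$, which after the $1/n$ scaling inherent in the concentration for $t = o(n)$ extra random edges gives the stated lower-order term; one absorbs it into the good-event union bound with an extra $1/n^c$.

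\emph{Main obstacle.} The delicate step is Phase 2 --- establishing that $E \subseteq \widehat E$ with probability $1 - 1/n^c$ \emph{and simultaneously} that the new-edge graph has maximum degree $O(\Delta(G)\log n)$. The first half requires a careful analysis of the marginals $p_e^{mar}$ of the size-constrained product distribution: one must show that conditioning on $\|X\|_0 = \widehat m$ does not appreciably suppress the inclusion probability of a heavy edge, which means controlling ratios $\widehat p_{e+1}^{q-1}/\widehat p_{e+1}^q$ — these are log-concave-in-$q$ sequences, so the ratio is monotone and can be pinned down, but the argument that the ``deficit'' (edges of $E$ omitted) is only among the $O(\log n/\epsilon)$-light edges is the technical heart. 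The second half is a clean but essential application of Theorem~\ref{t.random} to reduce to $\mathcal G_{n,p}$; the only subtlety is verifying the three smallness hypotheses $p = o(1)$, $1/(Np) = o(1)$, $\sqrt{p/N} = o(1)$, which follow from $m = \Omega(n)$ and $m = o(N)$ together with $\widehat m = \Theta(m)$ on the good event. Everything downstream (the two matrix-norm bounds and their sum) is then routine given the structural facts and the Laplace-noise tail bound.
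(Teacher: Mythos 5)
Your plan tracks the same structural facts as the paper (control $\widehat m$, bound $\Delta(\widehat G)$ via Theorem~\ref{t.random}, bound per-edge weight errors, close with a symmetric-matrix spectral bound), and the decomposition $B = B_1 + B_2$ is a legitimate alternative to the paper's single application of Bilu--Linial to $L_G - L_{\widehat G}$; but two claims in Phases~2--3 are wrong or substantially off-route. First, ``$E \subseteq \widehat E$ with high probability'' is false: the topology sampler treats an edge $e \in E$ with $w_e \approx 0$ essentially the same as a non-edge, so light edges of $E$ get swapped out. What the paper actually proves (Lemma~\ref{l.4}, case~(2)) is that every $e^* \in E$ with $w_{e^*} \geq 1/(\gamma\epsilon)$ survives with probability at least $1 - n^2 e^{-1/\gamma}$, via the marginal estimate $\mathsf{Pr}\big[e^* \in \widehat E \mid (E\setminus\{e^*\}) \cap \widehat E = Y\big] \geq e^{\epsilon w_{e^*}}/(e^{\epsilon w_{e^*}} + N)$ followed by a union bound over heavy edges. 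The correct conclusion is only that \emph{dropped} edges of $E$ have small weight $w_e = O(\log n/\epsilon)$; your downstream argument can be repaired by defining $B_1$ as the Laplacian-difference over all of $E$ (a dropped edge contributes discrepancy $|w_e - 0| = O(\log n/\epsilon)$, so your row-$\ell_1$ bound is unchanged), but ``a short computation using the recurrence for $\widehat p_i^q$'' does not substitute for the actual marginal argument, and the containment claim as you stated it is simply false.

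Second, the Chernoff-plus-union-bound over $\sim 3^n$ cuts that you propose for the Bilu--Linial discrepancy hypothesis on $B_2$ is both a detour from the paper and unnecessary. Once you condition on the two structural events you already establish --- $\Delta(\widehat G) = O(\Delta(G)\log n)$ (Lemma~\ref{l.5}) and $|w_e - \widehat w_e| = O(\log n/\epsilon)$ for all $e \in E \cup \widehat E$ (Lemma~\ref{l.4}) --- the discrepancy bound is \emph{deterministic}: for disjoint $S, T$ with $|S| \leq |T|$, at most $(\Delta(G) + \Delta(\widehat G))\,|S|$ entries of the error matrix cross $(S,T)$, each of magnitude $O(\log n/\epsilon)$, so $|\mathbf 1_S^\top B \mathbf 1_T| = O\!\left(\Delta(G)\log^2 n/\epsilon\right)\cdot |S| \leq O\!\left(\Delta(G)\log^2 n/\epsilon\right)\cdot \sqrt{|S||T|}$, with no randomness and no union over exponentially many cuts. (Indeed, since you already compute the row $\ell_1$-norm $\ell = O(\Delta(G)\log^2 n/\epsilon)$ for $B_2$, the elementary symmetric-matrix bound $\|B_2\|_2 \leq \|B_2\|_\infty = \ell$ already closes $B_2$ without Bilu--Linial at all --- and in the paper's own application $\ell/\alpha = O(1)$, so the $\log(\ell/\alpha)$ gain in Lemma~\ref{l.spectral_bilu} is trivial there as well.) Your probabilistic route would require per-cut failure probability $\exp(-\Omega(n))$ from a $\mathcal{G}_{n,p}$ concentration argument under a fixed-size conditioning, which is more delicate and simply not needed here.
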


Theorem \ref{t.spectral} improves on prior work~\cite{blocki2012johnson, dwork2014analyze, upadhyay2021differentially} whenever $\Delta(G) = o(\sqrt{n})$, which is usually the case in practice. Furthermore, unlike previous algorithms, the algorithm that achieves the guarantee in Theorem~\ref{t.spectral} preserves $\epsilon$-differential privacy. If we want a uniform bound on all class of graphs, including the graphs with $\Omega(\sqrt{n})$ unweighted edges, we can consider the following algorithm:
\begin{enumerate}
  \item Let $\hat{\Delta} \leftarrow \Delta(G) + \texttt{Lap}(1/\epsilon)$;
  \item If $\hat{\Delta} > \sqrt{n}$, then run the Analyze Gauss algorithm~\cite{dwork2014analyze} on the edge-adjacency matrix of $G, E_G$;  otherwise, run Algorithm \ref{alg1}.
\end{enumerate}

It is easy to see that it is an $(\epsilon,\delta)$-differentially private algorithm because Analyze Gauss~\cite{dwork2014analyze} is $(\epsilon,\delta)$-differentially private. Now, using Theorem \ref{t.spectral_privacy} and Theorem \ref{t.spectral} along with the accuracy guarantee of Analyze Gauss~\cite{dwork2014analyze} in outputting the covariance matrix, for any fixed $(\epsilon,\delta)$, we also have that, for any undirected graph $G$ with maximum unweighted degree $\Delta(G)$, it outputs a synthetic graph $\widehat{G}$ satisfying 
$$\left\|L_G-L_{\widehat{G}}\right\|_2 = O \left(\min\left\{\frac{\Delta(G)\cdot\log^2(n)}{\epsilon}, \frac{\sqrt{n}\log(n)\log(1/\delta)}{\epsilon}\right\}\right).$$

Before we start the proof of \Cref{t.spectral}, we collect some key lemmas and give a high-level overview of how we use them in our proof. The proof of \Cref{t.spectral}
is given at the very end of this subsection. Recall that $$\widehat m:=\min\{N, \lceil m + \texttt{Lap}(1/\epsilon) + \log(1/\beta)/\epsilon \rceil \}.$$ 
Therefore, from the tail inequality of the Laplace random variable, for any $\beta$ such that $0<\beta<1/2$, it holds with probability at least $1-2\beta$ that
$$m\leq \widehat{m} \leq m + 2\log(1/\beta)/\epsilon.$$
Note that $\widehat{m}$ is a differentially private count of the number of edges.
In this case, there will be a subset $E'\subseteq [N]$ with size $\widehat{m}$ (where $\widehat{m}\leq N$) that covers $E$. Namely, if we replace $m$ by $\widehat{m}$, we can view $E'$ as a new edge set, with at most $\lfloor 2\log(1/\beta)/\epsilon \rfloor$ additional edges with zero weight.

Therefore, for the sake of simplicity, we first concentrate on the case where the topology sampler works on $m$ instead of the perturbed number of edges $\widehat{m}$. Before we get into the proof, we make some notations explicit. For any weighted graph $G=(V,E,w)$ with $w \in \mathbb{R}^N$ and a subset of possible edges $S\subseteq [N]$, we denote by $G|S$ the restriction of $G$ in terms of $S$. Let $w(G|S) \in \mathbb R^N$ be the corresponding weight vector. That is, 
\[
w(G|S)_i =\begin{cases}
    w_i & i\in E\cap S; \\
    0 & \text{otherwise.}
\end{cases}
\]

For any $0\leq k\leq N-|S|$, we also let $[\mathcal{G}|S]_{n,k}$ be the random graph model where we choose $k$ edges uniformly at random among $[N]\backslash S$. For example, $[\mathcal{G}|\varnothing]_{n,k} = \mathcal{G}_{n,k}$ for $0\leq k\leq N$. For two (weighted) undirected graphs $G_1 = (V,E_1,w_1)$ and $G_2 = (V,E_2,w_2)$ with $E_1\cap E_2 = \varnothing$, we write $G_1 + G_2 $ as an undirected graph with edge set $E_1\cup E_2$. We use $\Delta(G)$ to denote the maximum unweighted degree of $G$ and $E(G)$ to denote the edges of the graph $G$. 

The first stage of our proof is to show that the maximum degree in the synthetic graph $\widehat{G}$ is not too large. In particular, in Lemma \ref{l.3}, we first show that, if we add some edges outside $E$ randomly, then the maximum unweighted degree of the new graph has an upper bound related to $\mathcal{G}_{n,2m}$. Subsequently, in Lemma \ref{l.1}, we show that the maximum degree in $\mathcal{G}_{n,2m}$ can be bounded by basic concentration bounds and Theorem \ref{t.random}. In Lemma \ref{l.5}, we relate previous two lemmas with the distribution $\pi$ and conclude that the maximum degree of $\widehat{G}$ can be bounded by $O(\Delta(G)\log(n))$ with high probability. Next, the second step is to use Lemma \ref{l.4} to show that the error in every edge in $E\cup \widehat{E}$ is roughly $O(\log(n)/\epsilon)$. Combining these two facts, we use Lemma \ref{l.spectral_bilu}
 to complete the proof. 

\begin{lemma}\label{l.3}
  For any $0\leq k\leq m$ and $a>0$, 
  $$\mathsf{Pr}\left[\Delta([\mathcal{G}|E]_{n, m-k} + G)\geq a+\Delta(G)\right]\leq \mathsf{Pr}[\Delta(\mathcal{G}_{n,2m})\geq a].$$
\end{lemma}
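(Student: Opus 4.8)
The plan is to establish a stochastic domination between the two random objects by way of a coupling argument. First I would observe that the left-hand side random graph $[\mathcal{G}|E]_{n,m-k} + G$ consists of a fixed graph $G$ (with at most $m$ edges, maximum degree $\Delta(G)$) plus $m-k \leq m$ edges chosen uniformly at random from $[N]\setminus E$. The key point is that the degree of any vertex $v$ in $[\mathcal{G}|E]_{n,m-k} + G$ is at most $\Delta(G)$ plus the number of the $m-k$ random edges incident to $v$. So if we write $D_v$ for the (unweighted) degree of $v$ contributed by the random part, then $\Delta([\mathcal{G}|E]_{n,m-k}+G) \leq \Delta(G) + \max_v D_v$, and it suffices to show
\[
\mathsf{Pr}\left[\max_v D_v \geq a\right] \leq \mathsf{Pr}\left[\Delta(\mathcal{G}_{n,2m}) \geq a\right].
\]

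For this I would construct a coupling between (i) the distribution of $m-k$ uniformly random edges drawn from $[N]\setminus E$ and (ii) the distribution $\mathcal{G}_{n,2m}$ of $2m$ uniformly random edges drawn from all of $[N]$. The natural approach: first sample $\mathcal{G}_{n,2m}$ as $2m$ distinct edges; since $m-k \leq m \leq 2m - |E|$ (using $|E| = m \leq m$, so $2m - |E| \geq m \geq m-k$), one can hope to find, inside the $2m$ sampled edges, a sub-collection of $m-k$ edges that all avoid $E$, and couple those to the random part of the left-hand side. However the cleanest route is a two-step revelation: think of $\mathcal{G}_{n,2m}$ as choosing a uniformly random $2m$-subset of $[N]$; condition on how many of these $2m$ edges fall outside $E$ — call this number $J$. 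Since $|E| = m$ and we draw $2m$ edges, typically $J$ is large, and in fact one can argue $J \geq m \geq m-k$ with the right setup, or handle the small-probability complementary event. Given $J$ edges outside $E$, a uniformly random $(m-k)$-subset of them is exactly distributed as $[\mathcal{G}|E]_{n,m-k}$, and adding more edges only increases every $D_v$. Thus on this coupling, $\max_v D_v \leq \Delta(\mathcal{G}_{n,2m})$ pointwise, giving the inequality.

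The main obstacle I anticipate is making the counting in the coupling fully rigorous — specifically, ensuring that the $2m$ random edges of $\mathcal{G}_{n,2m}$ contain enough edges outside $E$ to embed the $m-k$ random edges of the left side, and handling the fact that $\mathcal{G}_{n,2m}$ may need $2m \leq N$ for the model to be well-defined (which is fine in the regime $m = o(N)$). One subtlety: the random edges outside $E$ on the left are drawn from $[N]\setminus E$ which has $N - m$ elements, whereas on the right we draw from $[N]$; the "doubling" to $2m$ is precisely the slack that absorbs the possibility of collisions with $E$. I would make this precise by the following clean claim: if $H$ is a uniformly random $r$-edge subgraph on $[N]$ and $H'$ is a uniformly random $r'$-edge subgraph on $[N]\setminus E$ with $r' \leq r$, and if $r$ is large enough that $H$ contains at least $r'$ edges outside $E$ with probability $1$ — which fails in general, so instead I would condition and use that $\Delta$ is monotone — then $\Delta(H' + G)$ is stochastically dominated. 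Concretely, I would reveal $\mathcal{G}_{n,2m}$, discard the (at most $m = |E|$) edges lying in $E$, keep the remaining $\geq m \geq m-k$ edges outside $E$, and from these retain a uniformly random $(m-k)$ of them; this retained set has exactly the law of $[\mathcal{G}|E]_{n,m-k}$, it is a sub-edge-set of what was sampled, and $G$'s own edges are a subset of the discarded $E$-edges plus nothing, so $[\mathcal{G}|E]_{n,m-k} + G$ has every vertex degree at most that of $\mathcal{G}_{n,2m}$ (here using $E(G) \subseteq E$ so $G$'s edges are among the $\leq m$ discarded ones — wait, they were discarded, so we must instead note $\Delta(G) \leq \Delta$ of the discarded part, and degrees add). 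Assembling: $\Delta([\mathcal{G}|E]_{n,m-k}+G) \leq (\text{degree from retained edges}) + \Delta(G) \leq (\text{degree from all }2m\text{ edges}) + 0$ once we also account $G$'s edges inside the discarded set — so the bound $a + \Delta(G)$ on the left matches $a$ on the right. This degree-additivity bookkeeping is the one place to be careful; everything else is a standard monotone-coupling argument.
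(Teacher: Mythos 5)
Your approach is essentially the paper's: condition on $Y=\mathcal{G}_{n,2m}\cap E$, observe that the remaining $2m-|Y|\geq m\geq m-k$ edges lying outside $E$ form a uniformly random subset of $[N]\setminus E$, subsample $m-k$ of them (so the law is exactly $[\mathcal{G}|E]_{n,m-k}$) to obtain a coupling in which $[\mathcal{G}|E]_{n,m-k}$ is an edge subset of $\mathcal{G}_{n,2m}$, and use monotonicity of maximum degree under adding edges. The clean version of your argument is the one near the top of your write-up: from the degree decomposition $\Delta([\mathcal{G}|E]_{n,m-k}+G)\leq \Delta(G)+\max_v D_v$ it suffices to show $\Pr[\max_v D_v\geq a]\leq\Pr[\Delta(\mathcal{G}_{n,2m})\geq a]$, which the coupling gives immediately because the retained edges are a subset of the $2m$ sampled ones, hence $\max_v D_v\leq\Delta(\mathcal{G}_{n,2m})$ pointwise. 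Your final ``Assembling'' paragraph, however, should be discarded: it asserts that $G$'s edges lie among the discarded $\mathcal{G}_{n,2m}\cap E$ (false in general, since that intersection is a random and possibly small subset of $E$), and then that $\Delta(G)$ is bounded by the maximum degree of the discarded part (also false); the chain ending in ``$+\,0$'' does not hold. None of that is needed: the $\Delta(G)$ offset was already handled at the start by comparing the event $\{\Delta([\mathcal{G}|E]_{n,m-k}+G)\geq a+\Delta(G)\}$ to $\{\max_v D_v\geq a\}$, matching the $a+\Delta(G)$ versus $a$ in the lemma statement.
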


\begin{proof}
  Recall that $m = o(N)$, thus $m-k\leq m = o(N - m)$. Suppose $\mathcal{G}_{n,2m}\cap E = Y$. It's easy to verify that for any $Y\subseteq E$ and the random graph model $\mathcal{G}_{n,2m}$,
  $$\Delta\left((\mathcal{G}_{n,2m}|[N]\backslash E) + G \right)\geq \alpha+\Delta(G) \quad \text{implies that}  \quad \Delta(\mathcal{G}_{n,2m})\geq \alpha.$$ 
  Note that the edge set of $\mathcal{G}_{n,2m}|[N]\backslash E$ and $G$ are disjoint. Thus,  
  $$\mathsf{Pr}\left[\Delta((\mathcal{G}_{n,2m}|[N]\backslash E) + G)\geq \alpha+\Delta(G) ~\Big| \mathcal{G}_{n,2m}\cap E = Y\right]\leq \mathsf{Pr}\left[\Delta(\mathcal{G}_{n,2m})\geq \alpha ~\Big|\mathcal{G}_{n,2m}\cap E = Y\right].$$

 We also note that $2m-|Y|\geq m-k$ holds for all $Y\subseteq E$ and $0\leq k\leq m$. Then by simple coupling argument, for any $Y\subseteq E$, 
$$\mathsf{Pr}[\Delta([\mathcal{G}|E]_{n,m-k} + G)\geq a+\Delta(G)] \leq  \mathsf{Pr}[\Delta([\mathcal{G}|E]_{n,2m-|Y|} + G)\geq a + \Delta(G)],$$
since $[\mathcal{G}|E]_{n,2m-|Y|}$ chooses more edges.
Thus, 
\begin{equation*}
  \begin{aligned}
    \mathsf{Pr}[\Delta([\mathcal{G}|E]_{n,m-k} &+ G) \geq a+\Delta(G)] = \sum_{Y\subseteq E} \mathsf{Pr}\left[\Delta([\mathcal{G}|E]_{n,m-k} + G)\geq a+\Delta(G)\right]\mathsf{Pr}[\mathcal{G}_{n,2m}\cap E = Y]\\
    &\leq \sum_{Y\subseteq E} \mathsf{Pr}[\Delta([\mathcal{G}|E]_{2m-|Y|} + G)\geq a + \Delta(G)]\mathsf{Pr}[\mathcal{G}_{n,2m}\cap E = Y]\\
    &=\sum_{Y\subseteq E} \mathsf{Pr}\left[\Delta((\mathcal{G}_{n,2m}\left| \right.[N]\backslash E)+G) \geq a + \Delta(G)~\Big|~ \mathcal{G}_{n,2m}\cap E = Y\right] \mathsf{Pr}[\mathcal{G}_{n,2m}\cap E = Y]\\
    &\leq \sum_{Y\subseteq E} \mathsf{Pr}\left[\Delta(\mathcal{G}_{n,2m})\geq \alpha ~\Big|~\mathcal{G}_{n,2m}\cap E = Y\right]\mathsf{Pr}[\mathcal{G}_{n,m}\cap E = Y]\\
    & = \mathsf{Pr}[\Delta(\mathcal{G}_{n,2m})\geq a].
  \end{aligned}
\end{equation*}
This completes the proof of Lemma~\ref{l.3}.
\end{proof}

\begin{lemma}\label{l.1}
  For any $0<\gamma \leq 1/2$, $$\mathsf{Pr}\left[\Delta(\mathcal{G}_{n,2m})\geq 2\left(1+{1\over\gamma}\right)\Delta(G) \right] \leq 3n\cdot \exp\left(-\Omega \left({1 \over \gamma}\right) \right).$$
\end{lemma}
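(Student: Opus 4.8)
The plan is to transfer the statement about the uniformly random graph $\mathcal{G}_{n,2m}$ to the corresponding statement about the Erd\H{o}s--R\'enyi graph $\mathcal{G}_{n,p}$ with $p=2m/N$, since the latter is easy to control with a Chernoff bound and a union bound. The key observation is that, for any fixed threshold $t$, the event ``$\Delta(\cdot)\geq t$'' is a monotone increasing graph property (adding edges never decreases the maximum degree), so Theorem~\ref{t.random} applies and yields $\mathsf{Pr}[\Delta(\mathcal{G}_{n,2m})\geq t]\leq 3\,\mathsf{Pr}[\Delta(\mathcal{G}_{n,p})\geq t]$ at the cost of only the constant factor $3$ --- which is precisely the factor $3$ appearing in the claimed bound. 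Before invoking Theorem~\ref{t.random} I would verify its three hypotheses for $p=2m/N$: in the regime $m=\Omega(n)$ and $m=o(N)$ we have $p=o(1)$, $1/(Np)=\Theta(1/m)=o(1)$, and $\sqrt{p/N}=\sqrt{2m}/N=o(1)$, so the theorem is applicable.

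It then remains to bound $\mathsf{Pr}[\Delta(\mathcal{G}_{n,p})\geq t]$ for $t=2(1+1/\gamma)\Delta(G)$. In $\mathcal{G}_{n,p}$ the degree of a fixed vertex is a $\mathrm{Binomial}(n-1,p)$ random variable with mean $\mu:=(n-1)p=4m/n$. Two elementary facts connect this back to $\Delta(G)$. First, since $G$ has $m$ edges, its average degree is $2m/n$ and hence $\Delta(G)\geq 2m/n$, so $\mu=4m/n\leq 2\Delta(G)$ and therefore $t=2(1+1/\gamma)\Delta(G)\geq(1+1/\gamma)\mu$. Second, since $m=\Omega(n)$, we have $\mu=4m/n=\Omega(1)$, i.e. $\mu$ is bounded below by an absolute constant $c_0>0$. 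Now apply the multiplicative Chernoff bound in the large-deviation regime: because $\gamma\leq 1/2$ we have $1/\gamma\geq 1$, so $\mathsf{Pr}[\deg(v)\geq(1+1/\gamma)\mu]\leq\exp(-\mu/(3\gamma))\leq\exp(-c_0/(3\gamma))$. Since $t\geq(1+1/\gamma)\mu$, the same bound holds with $t$ in place of $(1+1/\gamma)\mu$, and a union bound over the $n$ vertices gives $\mathsf{Pr}[\Delta(\mathcal{G}_{n,p})\geq t]\leq n\exp(-\Omega(1/\gamma))$. Combining with the reduction above yields $\mathsf{Pr}[\Delta(\mathcal{G}_{n,2m})\geq t]\leq 3n\exp(-\Omega(1/\gamma))$, as desired.

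I expect the only delicate point to be keeping the exponent at $\Omega(1/\gamma)$ rather than the weaker $\Omega(\mu/\gamma)$ that the Chernoff bound produces directly; this hinges on the assumption $m=\Omega(n)$, which forces the expected degree $\mu$ to be at least an absolute constant. The remaining ingredients --- monotonicity of the max-degree threshold so that Theorem~\ref{t.random} applies, verification of the three smallness hypotheses on $p$ from $m=o(N)$ and $m=\Omega(n)$, the bound $\Delta(G)\geq 2m/n$, and the fact that $1/\gamma\geq 1$ puts us in the linear-in-$\delta$ Chernoff regime --- are all routine, and the constant $3$ in the statement is inherited verbatim from Theorem~\ref{t.random}.
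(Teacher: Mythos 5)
Your proof is correct and follows essentially the same route as the paper: reduce from $\mathcal{G}_{n,2m}$ to $\mathcal{G}_{n,p}$ with $p=2m/N$ via the monotone-property transfer (Theorem~\ref{t.random}), bound the degree of a single vertex in $\mathcal{G}_{n,p}$ by a multiplicative Chernoff bound using $\mu=(n-1)p=4m/n\leq 2\Delta(G)$ and $\mu=\Omega(1)$ from $m=\Omega(n)$, then union-bound over the $n$ vertices. The only cosmetic difference is the specific form of Chernoff bound used (you invoke the $\delta\geq 1$ linear regime $\exp(-\delta\mu/3)$, the paper uses $\exp(-\mu\delta^2/(2+\delta))$), but since $1/\gamma\geq 2$ both simplify to $\exp(-\Omega(1/\gamma))$.
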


\begin{proof}
  Let $p = {2m\over N}  =\Omega\left({1\over n}\right)$ and $a=2\left(1+{1\over\gamma} \right)\Delta(G)$. We first show that $\mathsf{Pr}[\Delta(\mathcal{G}_{n,p})\geq a]$ is negligible. Fix any vertex $u\in [n]$, write the unweighted degree of $u$ as $d(u)$. In $\mathcal{G}_{n,p}$, we have
  $$\mathbb{E}[d(u)] = (n-1)p = 2m\cdot \frac{n-1}{N} \leq 2\Delta(G),$$
  this is because that the maximum unweighted degree in $G$ is $\Delta(G)$. By the multiplicative Chernoff bound, we see that for any $u\in [n]$,
    \begin{align*}
      \mathsf{Pr}\left[d(u)\geq 2\left(1+{1\over\gamma}\right)\Delta(G) \right] &\leq 
      \mathsf{Pr}\left[d(u) \geq \left(1+ {1\over \gamma}\right)\cdot\mathbb{E}[d(u)] \right] 
      \\
      &\leq \exp\left(- \frac{\mathbb{E}[d(u)]/\gamma^2}{2+1/\gamma}\right) \\
      & \leq \exp\left(- \Omega(1/\gamma)\right)
    \end{align*}
  since $\mathbb{E}[d(u)] = (n-1)p = \Omega(1)$ and $1/\gamma \geq 2$.
  By the union bound over all vertices in $[n]$, we have that with probability at least $1-n\cdot \exp(-\Omega(1/\gamma))$, the maximum degree of $\mathcal{G}_{n,p}$ is less than $2\left(1+{1\over\gamma}\right)\Delta(G)$. Next, we use Lemma \ref{t.random} to convert the conclusion on $\mathcal{G}_{n,p}$ to $\mathcal{G}_{n,2m}$, in which the concentration bound is not applicable since the presence of each edge is not mutually independent.

  Note that for our choice of $m = o(N)$ and $m = \Omega(n)$, we have $p = {2m \over N} = o(1)$, ${1 \over Np} = {1 \over 2m} = o(1)$, and $\sqrt{p \over N} = {\sqrt{2m} \over N} = o(1)$. For any $d\in \mathbb{N}_+$, let $\mathcal{P}_{d}$ be the set of all undirected unweighted graphs whose maximum degree is at least $d$. Then, $\mathcal{P}_{d}$ is a monotone increasing graph property since $G\in \mathcal{P}_{d}$ implies $G+e\in \mathcal{P}_{d}$. Let $d' = O\left(\left(1+{1 \over \gamma}\right)\Delta(G)\right)$,
  and replace $\mathcal{P}$ in Lemma \ref{t.random} by $\mathcal{P}_{d'}$, we have that 
  $$ \mathsf{Pr}\left[\Delta(\mathcal{G}_{n,2m})\geq 2\left(1+{1\over\gamma}\right)\Delta(G) \right] \leq 3n\cdot \exp(-\Omega(1/\gamma)),$$
  which completes the proof of Lemma \ref{l.1}. 
 \end{proof}
 
  Finally, we are ready to relate previous lemmas to the distribution $\pi$. Combining Lemma \ref{l.3} and Lemma \ref{l.1}, we have the following lemma:
  \begin{lemma}\label{l.5}
    For any $0<\gamma \leq 1/2$, $$\mathsf{Pr}\left[\Delta(\widehat{G}) \geq 3\Delta(G)\left(1+{1\over\gamma}\right) \right]\leq 3n\cdot \exp(-\Omega(1/\gamma)).$$
  \end{lemma}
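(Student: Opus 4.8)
The plan is to deduce the bound directly from Lemmas~\ref{l.3} and~\ref{l.1}, using one structural fact about the topology sampler: since every $e\in[N]\setminus E$ has weight $w_e=0$, the distribution $\pi$ in \Cref{eq.distribution} satisfies $\pi(\widehat E)\propto \prod_{e\in \widehat E\cap E}\exp(\epsilon w_e)$, i.e.\ it depends on $\widehat E$ only through $\widehat E\cap E$. Hence, conditioning on the set $Y:=\widehat E\cap E\subseteq E$, all extensions of $Y$ to an edge set of the prescribed size are equally likely, so $\widehat E\setminus E$ is uniform among subsets of $[N]\setminus E$ of size $|\widehat E|-|Y|$. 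Working (as flagged just before Lemma~\ref{l.3}) in the idealized regime where the sampler is run on $m$, we have $|\widehat E|=m$, so conditionally on $Y$ the set $\widehat E\setminus E$ has exactly the law of $[\mathcal{G}|E]_{n,m-|Y|}$, and this conditional law depends on $Y$ only through $k:=|Y|$.

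Next I would use that the edge set of $\widehat G$ is $\widehat E=Y\cup(\widehat E\setminus E)\subseteq E\cup(\widehat E\setminus E)$, so since adding edges cannot decrease any vertex's unweighted degree, $\Delta(\widehat G)\le \Delta([\mathcal{G}|E]_{n,m-k}+G)$ on the event $\{|Y|=k\}$. Summing over $k$,
\[
\mathsf{Pr}\big[\Delta(\widehat G)\ge t\big]\ \le\ \sum_{k=0}^{m}\mathsf{Pr}[|Y|=k]\cdot \mathsf{Pr}\big[\Delta([\mathcal{G}|E]_{n,m-k}+G)\ge t\big].
\]
Now set $t:=3\Delta(G)(1+1/\gamma)$ and apply Lemma~\ref{l.3} with $a:=t-\Delta(G)=(3(1+1/\gamma)-1)\Delta(G)$; since $1+1/\gamma\ge 1$ one checks $a\ge 2(1+1/\gamma)\Delta(G)$, so each summand is at most $\mathsf{Pr}[\Delta(\mathcal{G}_{n,2m})\ge a]\le \mathsf{Pr}[\Delta(\mathcal{G}_{n,2m})\ge 2(1+1/\gamma)\Delta(G)]$, which Lemma~\ref{l.1} bounds by $3n\exp(-\Omega(1/\gamma))$. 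Since $\sum_k\mathsf{Pr}[|Y|=k]=1$, this yields exactly the claimed inequality.

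The only non-routine point — and the main obstacle, such as it is — is the conditional-uniformity claim in the first paragraph, that restricting a $\pi$-sample to $[N]\setminus E$ gives a uniform set given its trace on $E$; but this is immediate from the product form of $\pi$ once one notes $\exp(\epsilon w_e)=1$ for $e\notin E$. The rest is bookkeeping: one must check that $k=|Y|$ ranges over $\{0,\dots,m\}$ (true since $Y\subseteq E$ and $|\widehat E|=m$), that $m-k\le N-m$ so that $[\mathcal{G}|E]_{n,m-k}$ is well-defined (true since $m=o(N)$), and one should remember that the reduction from the actual sampler on $\widehat m$ to the idealized one on $m$ is handled separately (as noted before Lemma~\ref{l.3}), so within this subsection it suffices to treat the idealized case.
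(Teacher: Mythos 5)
Your argument is correct and follows the same route as the paper's proof: condition on $Y=\widehat E\cap E$, identify the conditional law of $\widehat E\setminus E$ as $[\mathcal G|E]_{n,m-|Y|}$ via the product form of $\pi$, use monotonicity of the max degree under adding edges to pass to $[\mathcal G|E]_{n,m-k}+G$, and then chain Lemma~\ref{l.3} into Lemma~\ref{l.1}. You are somewhat more explicit than the paper about why the conditional restriction to $[N]\setminus E$ is uniform and about the arithmetic check $a\ge 2(1+1/\gamma)\Delta(G)$, but these are the same steps the paper takes, just spelled out.
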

\begin{proof}
  Let $G' = (V,\widehat{E})$ be an unweighted graph with $\widehat{E}$ be its edge set, where $\widehat{E}$ is the edge set we sampled from distribution $\pi$. Note that $\Delta(\widehat{G}) = \Delta(G')$. Thus, 
  \begin{equation*}
    \begin{aligned}
    \mathsf{Pr}\left[\Delta(\widehat{G})\right. &\geq  \left.3\Delta(G)\left(1+{1\over\gamma}\right) \right] = \mathsf{Pr}\left[\Delta(G') \geq  3\Delta(G)\left(1+{1\over\gamma}\right) \right] \\
      & = \sum_{Y\subseteq E} \mathsf{Pr}\left[\Delta(G') \geq  3\Delta(G)\left(1+{1\over\gamma}\right) \left|\widehat{E}\cap E \right. = Y \right] \cdot \mathsf{Pr}\left[\widehat{E}\cap E = Y \right]\\
      & = \sum_{Y\subseteq E} \mathsf{Pr}\left[\Delta([\mathcal{G}|E]_{n,m-|Y|} + G|Y) \geq  3\Delta(G)\left(1+{1\over\gamma}\right) \right] \cdot \mathsf{Pr}\left[\widehat{E}\cap E = Y \right] \\
      & \leq \sum_{Y\subseteq E} \mathsf{Pr}\left[\Delta([\mathcal{G}|E]_{n,m-|Y|} + G) \geq  3\Delta(G)\left(1+{1\over\gamma}\right) \right] \cdot \mathsf{Pr}\left[\widehat{E}\cap E = Y \right]\\
      & = \sum_{Y\subseteq E} \mathsf{Pr}\left[\Delta([\mathcal{G}|E]_{n,m-|Y|} + G) \geq \Delta(G) + \left( 3\Delta(G)\left(1+{1\over\gamma}\right) - \Delta(G) \right) \right] \cdot \mathsf{Pr}\left[\widehat{E}\cap E = Y \right]\\
      & \leq  \sum_{Y\subseteq E} \mathsf{Pr}\left[\Delta(\mathcal{G}_{n,2m})\geq  3\Delta(G)\left(1+{1\over\gamma}\right) - \Delta(G) \right] \cdot \mathsf{Pr}\left[\widehat{E}\cap E = Y \right] \\
      &  \leq  \mathsf{Pr}\left[\Delta(\mathcal{G}_{n,2m})\geq 2\Delta(G)\left(1+{1\over\gamma}\right) \right]\\
      &\leq 3n\cdot \exp(-\Omega(1/\gamma)),
    \end{aligned}
  \end{equation*}
  completing the proof of Lemma~\ref{l.5}
\end{proof}

  \noindent Next, we give the following lemma that relates the weight of the (non-existing) edges in the input and output graph:
  \begin{lemma}\label{l.4}
    Let $0<\gamma\leq 1/2$ be a parameter. With probability at least $1-(n^4+2n^2) \exp(-1/\gamma)$, 
    $$|w_e - \widehat{w}_e|\leq {1 \over \gamma\epsilon} \quad \text{ for all }e\in E\cup \widehat{E}.$$
  \end{lemma}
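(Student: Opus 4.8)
I would prove the lemma by splitting the discrepancy $|w_e-\widehat w_e|$ into two parts and bounding each by $1/(\gamma\epsilon)$: for $e\in\widehat E$ the only error is the Laplace reweighting in Algorithm~\ref{alg1}, whereas for $e\in E\setminus\widehat E$ we have $\widehat w_e=0$, so the error equals $w_e$ and the task is to show the topology sampler never drops a \emph{heavy} edge of the input graph.

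\textbf{The Laplace part.} Fixing any realisation of $\widehat E$, the noises drawn in Algorithm~\ref{alg1} are i.i.d.\ $\texttt{Lap}(1/\epsilon)$ conditioned on it, and since $w_e\ge 0$ the truncation only helps: $|w_e-\widehat w_e|=|w_e-\max\{0,w_e+Z\}|\le|Z|$. The Laplace tail bound $\mathsf{Pr}[|Z|>1/(\gamma\epsilon)]=e^{-1/\gamma}$ together with a union bound over the at most $N\le n^2$ edges of $\widehat E$ then gives $\max_{e\in\widehat E}|w_e-\widehat w_e|\le 1/(\gamma\epsilon)$ except with probability $\le n^2e^{-1/\gamma}$, uniformly in $\widehat E$ hence unconditionally.

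\textbf{The topology part.} Here it suffices to show $\mathsf{Pr}[e\notin\widehat E]\le Ne^{-1/\gamma}$ for every fixed $e\in E$ with $w_e\ge 1/(\gamma\epsilon)$: a union bound over the $\le N\le n^2$ edges of $E$ then costs $n^4e^{-1/\gamma}$, and on the intersection of the two good events every $e\in E\setminus\widehat E$ has $|w_e-\widehat w_e|=w_e<1/(\gamma\epsilon)$, which with the Laplace part yields the claimed failure probability $\le(n^4+2n^2)e^{-1/\gamma}$. To bound $\mathsf{Pr}[e\notin\widehat E]$ I would use a weight-swapping argument: following the reduction stated just before the lemma, $\widehat E$ is drawn from $\pi(S)\propto\prod_{f\in S}e^{\epsilon w_f}$ over subsets $S\subseteq[N]$ of size $m$ (the general case $\widehat m\ge m$ being identical). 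Any such $S$ with $e\notin S$ must contain a zero-weight edge because $|S|=m>m-1=|E\setminus\{e\}|$; fixing one, $g(S)$, and setting $\phi(S)=(S\setminus\{g(S)\})\cup\{e\}$, we have $|\phi(S)|=|S|$ so the normaliser cancels, and $w_{g(S)}=0$, so $\pi(S)=e^{-\epsilon w_e}\pi(\phi(S))$; since each image set has at most $N$ preimages (a preimage is $(S'\setminus\{e\})\cup\{g\}$ for a zero-weight $g\notin S'$), one gets
\[
\mathsf{Pr}[e\notin\widehat E]=\sum_{S:\,e\notin S}\pi(S)=e^{-\epsilon w_e}\sum_{S:\,e\notin S}\pi(\phi(S))\le Ne^{-\epsilon w_e}\sum_{S':\,e\in S'}\pi(S')\le Ne^{-\epsilon w_e}\le Ne^{-1/\gamma}.
\]

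The main obstacle I anticipate is precisely obtaining this last estimate with \emph{no} dependence on the maximum weight $\maxwt$: swapping the dropped heavy edge $e$ against an arbitrary member of $S$ only produces the ratio $e^{\epsilon(w_e-w_g)}$, which is useless when $w_g$ is large, so the point is the counting fact that $|S|>|E\setminus\{e\}|$ forces a weight-$0$ coordinate into every $S\not\ni e$, letting us always swap against such a coordinate and pay exactly $e^{-\epsilon w_e}$. The remaining pieces — the truncation inequality, the crude preimage bound, and the two union bounds — are routine and do not change the stated constant.
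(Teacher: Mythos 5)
Your proof is correct. The split into a Laplace part (for $e\in\widehat E$, with the observation that truncation at $0$ only decreases the error so $|w_e-\widehat w_e|\le|Z|$) and a topology part (for $e\in E\setminus\widehat E$) is the paper's three-case split with the two Laplace cases merged, and the Laplace part proceeds identically. Where you genuinely differ is in estimating $\mathsf{Pr}\bigl[e\notin\widehat E\bigr]$ for a heavy edge $e$. The paper conditions on $Y=(E\setminus\{e\})\cap\widehat E$ and computes the conditional probability exactly as a ratio of binomial coefficients for the zero-weight completions, then uses a crude bound on that ratio. You instead run a global swapping argument: every $S$ of size $\widehat m\ge m$ with $e\notin S$ must contain a zero-weight coordinate because $|E\setminus\{e\}|=m-1<|S|$, so the map $\phi(S)=(S\setminus\{g(S)\})\cup\{e\}$ is well-defined with $\pi(S)=e^{-\epsilon w_e}\pi(\phi(S))$, and each image has at most $N$ preimages. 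Both routes rest on the same structural fact — a missing heavy edge must be traded for a zero-weight edge — but your injection-plus-preimage-count presentation avoids the conditioning and the binomial bookkeeping and is, if anything, cleaner; both give exactly $\mathsf{Pr}[e\notin\widehat E]\le Ne^{-\epsilon w_e}\le Ne^{-1/\gamma}$, and your union-bound arithmetic yields $(n^4+n^2)e^{-1/\gamma}$, which is within the stated $(n^4+2n^2)e^{-1/\gamma}$.
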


\begin{proof}
  We can proof this lemma by the fact that distribution $\pi$ preserves the existence for edges with large weights.
  We consider three disjoint collections of edges in $E\cup \widehat{E}$:
  \paragraph{(1) $e\in \widehat{E}$ but $e\notin E$} In this case, by the tail inequality of Laplace distribution, we have that for any such $e$,
  $$\mathsf{Pr}\left[|w_e - \widehat{w}_e|> {1 \over \gamma\epsilon} \right] = \mathsf{Pr}_{Z\sim \texttt{Lap}(1/\epsilon)}\left[Z > {1 \over \gamma\epsilon} \right]\leq \exp\left(-{1\over \gamma} \right).$$ 
  Thus, by the union bound, we see that with probability at least $1-n^2 \exp(-1/\gamma)$, 
  $$\forall e\in \widehat{E} \backslash E, \qquad  |w_e - \widehat{w}_e|\leq {1 \over \gamma\epsilon}.$$
  
  \paragraph{(2) $e\in E$ but $e\notin \widehat{E}$} We claim that there exists some constant $c$ such that $w_e < {1 \over \gamma\epsilon}$ for all such $e$ with probability at least $1-n^c \exp(-1/\gamma)$. It's sufficient to show that with such probability, all edges in $E$ with $w_e\geq {1 \over \gamma\epsilon}$ will be preserved in $\widehat{E}$. If no edges in $E$ with $w_e\geq {1 \over \gamma\epsilon}$, the statement holds trivially. Let $e^*$ be any edge in $E$ with $w_{e^*}\geq {1 \over \gamma\epsilon}$. Then, for any $Y\subseteq (E\backslash e^*)$,
  \begin{equation*}
    \begin{aligned}
      \mathsf{Pr}\left[e^*\in \widehat{E}~\Big|~\left({E}\backslash \{e^*\}\right)\cap \widehat{E} = Y \right] &\propto \exp({\epsilon w_{e^*}}) \cdot \left(\sum_{B\in {[N]\backslash E}\atop |B| = m-|Y|} \prod_{e\in Y}\exp(\epsilon w_e)\prod_{e\in B} \exp(\epsilon\cdot 0)\right)\\
      & = \exp({\epsilon w_{e^*}}) \cdot \left(\sum_{B\in {[N]\backslash E}\atop |B| = m-|Y|} \prod_{e\in Y}\exp(\epsilon w_e)\right) \propto \exp({\epsilon w_{e^*}}) \cdot {N-m \choose m-|Y|}.
    \end{aligned}
  \end{equation*}
  On the other hand,
  \begin{equation*}
    \begin{aligned}
      \mathsf{Pr}\left[e^*\notin \widehat{E}~\Big|~\left({E}\backslash \{e^*\} \right)\cap \widehat{E} = Y\right] &\propto \sum_{B\in {[N]\backslash E}\atop |B| = m-|Y| + 1} \prod_{e\in Y}\exp(\epsilon w_e)\prod_{e\in B} \exp(\epsilon\cdot 0)\\
      & = \sum_{B\in {[N]\backslash E}\atop |B| = m-|Y|+1} \prod_{e\in Y}\exp(\epsilon w_e)
       \propto {N-m \choose m-|Y| + 1}.
    \end{aligned}
  \end{equation*}

\noindent Since $N-m-|Y| \leq N$, we therefore have, for any $Y\subseteq (E\backslash e^*)$, 
\begin{equation*}
  \begin{aligned}
    \mathsf{Pr}\left[e^*\in \widehat{E}~\Big|~({E}\backslash {e^*})\cap \widehat{E} = Y\right] &= \frac{\mathsf{Pr}\left[e^*\in \widehat{E}|({E}\backslash {e^*})\cap \widehat{E} = Y\right]}{\mathsf{Pr}\left[e^*\in \widehat{E}|({E}\backslash {e^*})\cap \widehat{E} = Y\right] + \mathsf{Pr}\left[e^*\notin \widehat{E}|({E}\backslash {e^*})\cap \widehat{E} = Y\right]}\\
    & = \frac{e^{\epsilon w_{e^*}}{N-m \choose m-|Y| }}{{N-m \choose m-|Y| + 1} + e^{\epsilon w_{e^*}}{N-m \choose m-|Y| }}
     = \frac{e^{\epsilon w_{e^*}}}{e^{\epsilon w_{e^*}} + \frac{N-m-|Y|}{|Y|+1}} \geq \frac{e^{\epsilon w_{e^*}}}{e^{\epsilon w_{e^*}} +N}.
  \end{aligned},
\end{equation*}

Since $w_{e^*} \geq {1 \over \gamma\epsilon}$ for some $0<\gamma \leq 1/2$, then 
\begin{equation*}
  \begin{aligned}
    \mathsf{Pr}\left[e^*\in \widehat{E}~\big|~({E}\backslash {e^*})\cap \widehat{E} = Y\right] \geq \frac{e^{\epsilon w_{e^*}}}{e^{\epsilon w_{e^*}} +N} \geq \frac{\exp(1/\gamma)}{\exp(1/\gamma) + n^2} \geq 1-n^2 \exp(-1/\gamma)
  \end{aligned}
\end{equation*}
holds for any $Y\subseteq E\backslash e^*$. Thus, using total probability theorem, 
\begin{equation*}
  \begin{aligned}
    \mathsf{Pr}\left[e^*\in \widehat{E}\right] = \sum_{Y\subseteq (E\backslash e^*)} \mathsf{Pr}\left[e^*\in \widehat{E}~\big|~({E}\backslash {e^*})\cap \widehat{E} = Y\right] \cdot \mathsf{Pr}[({E}\backslash {e^*})\cap \widehat{E} = Y] \geq 1-n^2 \exp(-1/\gamma).
  \end{aligned}
\end{equation*}

Using the union bound, we have that all edges in $E$ with $w_e \geq {1 \over \gamma\epsilon}$ will be preserved in $\widehat{E}$ with a  probability at least $1-n^4 \exp(-1/\gamma)$. Namely, we have shown that with such probability, all edges in $E\backslash \widehat{E}$ have weights $w_e$ less than ${1 \over \gamma\epsilon}$. That is,  
$$\forall e\in E\backslash \widehat{E}, \qquad \left|w_e - \widehat{w}_e\right| = |w_e - 0| \leq {1 \over \gamma\epsilon}.$$

\paragraph{(3) $e\in E\cap \widehat{E}$} In this case, again by the tail inequality of Laplace distribution, we have that with probability at least $1-n^2\exp(-1/\gamma)$,
 $$|w_e - \widehat{w}_e|\leq {1 \over \gamma\epsilon}$$
 holds for all such $e$ simultaneously. Thus, by a union bound, we have that in all three cases, i.e., $e\in E\cup \widehat{E}$, 
$$ |w_e - \widehat{w}_e| \leq {1 \over \gamma\epsilon} $$
holds with probability at least $1-(n^4+2n^2) \exp(-1/\gamma)$. This completes the proof of Lemma~\ref{l.4}.
\end{proof}
\begin{remark}
  An edge $e$ not in $E\cup \widehat{E}$ means that this edge neither appears in the input graph $G$ nor in the output graph $\widehat{G}$. Thus, $w_e =\widehat{w}_e = 0$.
\end{remark}
Next, we apply the following result of Bilu and Linial to convert the approximation on cuts to the approximation on the spectrum.

\begin{lemma}
[Lemma 3.3 in \cite{bilu2006lifts}]
\label{l.spectral_bilu}
  Let $A\in \mathbb{R}^{n\times n}$ be a symmetric matrix such that the $\ell_1$ norm of each row of $A$ is at most $\ell$ and all diagonal entries of $A$ has absolute value less than $O(\alpha (\log(\ell/\alpha)+1))$. If for any non-zero $u,v\in \{0,1\}^n$ with $x^\top y = 0$, it holds that 
   $$\frac{|x^\top A y|}{\|x\|_2\|y\|_2} \leq \alpha,$$
   then the spectral radius of $A$ is $\|A\|_2=O(\alpha (\log(\ell/\alpha)+1))$.
   \end{lemma}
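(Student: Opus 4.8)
\noindent The plan is to run the standard dyadic-decomposition argument that converts a discrepancy (``mixing'') bound into a bound on the operator norm. Since $A$ is symmetric, $\|A\|_2 = \max_{\|x\|_2 = 1}|x^\top A x|$, so I would fix a unit eigenvector $x$ for the extreme eigenvalue $\mu$ (so $Ax = \mu x$ and $|\mu| = \|A\|_2$) and aim to show $|\mu| = O(\alpha(\log(\ell/\alpha)+1))$; we may assume $\alpha < \ell$, since otherwise $\|A\|_2 \le \ell \le \alpha$ already, the operator norm being at most the largest row $\ell_1$-norm. Write $\beta := \log_2(\ell/\alpha)+1$. The decomposition buckets the coordinates of $x$ by magnitude and sign: for $i\ge 1$ and $s\in\{+,-\}$ set $V_i^s := \{v : s x_v \in (2^{-i},2^{-i+1}]\}$ with restriction $x_i^s$ of $x$ to $V_i^s$, and set $T := \{v : |x_v|\le 2^{-K}\}$ with $K := \Theta(\beta)$ and restriction $x_T$, so that $x = x_T + \sum_{1\le i\le K,\ s} x_i^s$. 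Expanding $\mu = x^\top A x$ along this decomposition splits it into (i) off-diagonal terms $(x_i^s)^\top A x_j^{s'}$ with $(i,s)\ne(j,s')$, (ii) diagonal terms $(x_i^s)^\top A x_i^s$, and (iii) the terms meeting $x_T$; I would bound each of (i)--(iii) by $O(\alpha\beta)$.

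For (i), the supports of $x_i^s$ and $x_j^{s'}$ are disjoint; after rescaling the (same-sign) nonzero entries of each block into $(1/2,1]$ and writing such a $[0,1]$-valued vector as a layer-cake superposition of $\{0,1\}$-vectors, the hypothesis applies termwise and gives $|(x_i^s)^\top A x_j^{s'}| = O(2^{-i-j}\alpha\sqrt{|V_i^s|\,|V_j^{s'}|})$. Setting $a_i := 2^{-i}(\sqrt{|V_i^+|}+\sqrt{|V_i^-|})$, the total of (i) is $O\big(\alpha(\sum_{i\le K} a_i)^2\big)$; since each $v\in V_i^s$ contributes $x_v^2 > 2^{-2i}$ to $\|x\|_2^2 = 1$, we have $\sum_i a_i^2 = O(1)$, so Cauchy--Schwarz over the $\le K$ scales yields $(\sum_{i\le K}a_i)^2 \le K\sum_i a_i^2 = O(\beta)$ and group (i) contributes $O(\alpha\beta)$.

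For (ii), fix a bucket $V_i^s$ of size $N_i$; the same counting forces $N_i < 2^{2i}$. I would bound the off-diagonal part $\sum_{v\ne w\in V_i^s} x_v A_{vw} x_w$ of this bucket by averaging over random balanced bipartitions of $V_i^s$ and applying the hypothesis to each part (rescaling entries into $(1/2,1]$ and using layer-cake as before), which gives $O(2^{-2i}\alpha N_i)$, and then add the trace term $\sum_{v\in V_i^s} x_v^2 A_{vv} = O(2^{-2i}N_i\cdot\alpha\beta)$ coming from the diagonal hypothesis, so that $|(x_i^s)^\top A x_i^s| = O(2^{-2i}N_i\cdot\alpha\beta)$. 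Summing over $i\le K$ and $s$, and using $\sum_{i,s}2^{-2i}N_i = O(1)$, group (ii) contributes $O(\alpha\beta)$. It is exactly the truncation at scale $2^{-K}$ with $K = \Theta(\log(\ell/\alpha))$ --- keeping every bucket of size $\le 2^{2K} = \mathrm{poly}(\ell/\alpha)$, so that this step never ``sees'' $\log n$ --- that produces the $\log(\ell/\alpha)$ in the conclusion rather than $\log n$.

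For (iii), write $x = x' + x_T$ with $x' := x - x_T$ (all nonzero entries of $x'$ exceed $2^{-K}$ in magnitude); since $x$ is an eigenvector, $x^\top A x_T = \mu\|x_T\|_2^2$, giving the identity $\mu(1 - 2\|x_T\|_2^2) = (x')^\top A x' - x_T^\top A x_T$, where $(x')^\top A x' = O(\alpha\beta)$ by (i)--(ii). The step I expect to be the main obstacle is controlling the tiny-coordinate part $x_T$ and closing this identity to $|\mu| = O(\alpha\beta)$ in all cases: when $\|x_T\|_2$ is bounded away from $1/\sqrt2$ the identity closes using $|x_T^\top A x_T| \le \|A\|_2\|x_T\|_2^2$, but in the ``spread-out'' regime (forcing $|T|$ large) one must instead argue, via the row-$\ell_1$ bound $\|A x_T\|_\infty \le \ell\|x_T\|_\infty = O(\alpha)$ (using $2^{-K}\le\alpha/\ell$) together with the same balanced-bipartition averaging, that a near-top eigenvector spread over coordinates of magnitude $\lesssim \alpha/\ell$ can only have $|\mu| = O(\alpha)$, so that $\log n$ never enters. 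Pushing this case analysis through with the truncation scale $K = \Theta(\log(\ell/\alpha))$ is the delicate core of the argument; it is precisely Lemma~3.3 of \cite{bilu2006lifts}, which we invoke as stated.
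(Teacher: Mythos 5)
The paper does not prove this lemma; it states it and cites Bilu and Linial, using it as a black box, so there is no in-paper proof to compare against, and your proposal is a reconstruction attempt. The high-level strategy you adopt --- dyadic level sets, layer-cake reduction to $\{0,1\}$-vectors, random bipartition for the same-bucket terms, invoking the diagonal hypothesis --- is on the right track, and steps (i) and (ii) are plausible up to routine details.

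The gap is in step (iii), the treatment of the small-coordinate tail $x_T$. The eigenvector identity $\mu(1-2\|x_T\|_2^2) = (x')^\top A x' - x_T^\top A x_T$ closes the argument only when $\|x_T\|_2^2$ is bounded away from $1/2$. In the complementary regime you assert, essentially without proof, that $|\mu| = O(\alpha)$ follows from $\|Ax_T\|_\infty = O(\alpha)$ together with bipartition averaging; but the estimate this suggests, $|x'^\top A x_T| \le \|x'\|_1\,\|Ax_T\|_\infty = O(\|x'\|_1\,\alpha)$, does not close: a unit vector $x'$ supported on coordinates of magnitude at least $2^{-K} = (\alpha/\ell)^{\Theta(1)}$ can have $\|x'\|_1$ as large as $(\ell/\alpha)^{\Theta(1)}$, giving a bound polynomial in $\ell/\alpha$ rather than $O(\alpha\beta)$. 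You then explicitly concede the point by saying this delicate core ``is precisely Lemma~3.3 of \cite{bilu2006lifts}, which we invoke as stated'' --- but that is circular, since Lemma~3.3 is exactly the statement you are proving. As written, the proposal is an outline with an acknowledged, unfilled hole at its heart, and the failed case is not a corner case: it is precisely where the improvement from $\log n$ to $\log(\ell/\alpha)$ has to be earned.

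For reference, the Bilu--Linial argument does not truncate the eigenvector and does not use an eigenvector identity. Every dyadic level is kept, and the off-diagonal dyadic terms are split into near pairs $|i-j|\le\beta$ and far pairs $|i-j|>\beta$. Near pairs are handled by the discrepancy hypothesis plus Cauchy--Schwarz over the $O(\beta)$ admissible offsets, giving $O(\alpha\beta)$. Far pairs are handled by the row-$\ell_1$ bound alone: for a fixed bucket $V_i^s$, all coordinates in levels more than $\beta$ below have magnitude at most $2^{-(i+\beta)}$, so their total contribution is at most $\|x_i^s\|_1\cdot\ell\cdot 2^{-(i+\beta)} = O(\alpha\,2^{-2i}|V_i^s|)$, which sums over $i,s$ to $O(\alpha)$ because $\sum_{i,s}2^{-2i}|V_i^s|\le\|x\|_2^2$. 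Since the far-pair bound is summable on its own, there is no tail to control and the case analysis you were stuck on never arises.
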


\begin{proof}
[Proof of \Cref{t.spectral}]
Equipped with all these lemma, we now give a proof of \Cref{t.spectral}, let $L_{\widetilde{G}} = L_G - L_{\widehat{G}}$. Clearly $L_{\widetilde{G}}\in \mathbb{R}^{n\times n}$ is a symmetric matrix such that $L_{\widetilde{G}} \mathbf{1}_n = \mathbf{0}$. We then check the conditions in Lemma \ref{l.spectral_bilu} for $L_{\widetilde{G}}$. Let $\delta = (n^4 + 2n^2 + 3n) \exp(-\Omega(1/\gamma))$. Using Lemma \ref{l.4} and Lemma \ref{l.5}, with probability at least $1-\delta$, we have  
\begin{equation}\label{e.4.31}
  \begin{aligned}
    \max_{i\in [n]} |L_{\widetilde{G}}[i,i]| &= \max_{i\in [n]}|\Phi_{G}(\{i\}) - \Phi_{\widehat{G}}(\{i\})| 
    \leq {1 \over \gamma\epsilon}(d_{G}(i) + d_{\widehat{G}}(i)) \\
    &\leq \frac{1}{\gamma\epsilon}\left(\Delta(G) +  3\Delta(G)\left(1+{1\over\gamma}\right) \right).
  \end{aligned}
\end{equation}

On the other hand, for all $\mathbf{1}_S,\mathbf{1}_T\in \{0,1\}^n$ such that $S\cap T = \varnothing$:
\begin{equation}\label{e.4.32}
  \begin{aligned}
    \frac{|\mathbf{1}_S^\top L_{\widetilde{G}}\mathbf{1}_T|}{\|\mathbf{1}_S\|_2 \|\mathbf{1}_T\|_2} &= \frac{1}{{\|\mathbf{1}_S\|_2 \|\mathbf{1}_T\|_2}}\sum_{\{i,j\}\in E} w_{\{i,j\}} |\mathbf{1}_S[i]\mathbf{1}_T[i] + \mathbf{1}_S[j]\mathbf{1}_T[j] - \mathbf{1}_S[i]\mathbf{1}_T[j] - \mathbf{1}_S[j]\mathbf{1}_T[i]| \\
    & = \frac{|\Phi_{\widetilde{G}}(S,T)|}{{\|\mathbf{1}_S\|_2 \|\mathbf{1}_T\|_2}} = \frac{|\Phi_{{G}}(S,T) - \Phi_{\widehat{G}}(S,T)|}{\|\mathbf{1}_S\|_2 \|\mathbf{1}_T\|_2} \\
    &\leq \frac{1}{\|\mathbf{1}_S\|_2 \|\mathbf{1}_T\|_2}\cdot \frac{1}{\gamma\epsilon}\cdot (\Delta(G) + \Delta(\widehat{G}))\cdot \min \left\{|S|,|T|\right\} \\
    &= \frac{1}{\gamma\epsilon}\left(\Delta(G) +  3\Delta(G)\left(1+{1\over\gamma}\right) \right) \cdot \min\left\{\sqrt{\frac{|S|}{|T|}}, \sqrt{\frac{|T|}{|S|}}\right\} \\
    &\leq \frac{1}{\gamma\epsilon}\left(\Delta(G) +  3\Delta(G)\left(1+{1\over\gamma}\right) \right).
  \end{aligned}
\end{equation}
Note that $\mathbf{1}_S[i]\mathbf{1}_T[j]$ and $\mathbf{1}_S[j]\mathbf{1}_T[i]$ cannot both be $1$, where $a[i]$ denote the $i$-th coordinate of a vector $a$. Then, we just let the $\alpha$ in Lemma \ref{l.spectral_bilu} be  
$$\alpha = \frac{1}{\gamma\epsilon}\left(\Delta(G) +  3\Delta(G)\left(1+{1\over\gamma}\right) \right).$$

By Lemma \ref{l.4}, given any vertex $v\in V$, there are at most $(\Delta(G) + 3\Delta(G)(1+1/\gamma))$ edges differ in their weight by at most $\frac{1}{\gamma\epsilon}$, namely
$$\forall u\in V, \quad \sum_{v \in V, u\neq v} |L_{\widetilde{G}}[u,v]| \leq \alpha.$$

Combing Equation (\ref{e.4.31}) implies that $\|L_{\widetilde{G}}\|_{\infty} \leq 2\alpha$, and all diagonal entries of $A$ has absolute value less than $\alpha(\log (\ell/\alpha) +1)$. 
Also, Equation \ref{e.4.32} implies that for all non-zero vectors $x,y\in \{0,1\}^n$ and $x^\top y  =0$, it holds that
$$\frac{|x^\top L_{\widetilde{G}} y|}{\|x\|_2\|y\|_2} \leq \alpha.$$

Finally, we let $\gamma = 1/(c'\log(n))$ for some large enough constant $c'$ so that $\delta = 1-1/n^c$, where $c$ is any positive constant. We have that
$$\alpha(\log(\ell/\alpha) + 1) =\frac{\log 2 + 1}{\gamma\epsilon}\left(\Delta(G) + 3\left(1+{1\over \gamma} \right) \Delta(G)\right) =O\left(\frac{\Delta(G)\cdot \log^2(n)}{\epsilon}\right).$$

Now we consider the case where the topology sampler samples $\widehat{m}$ edges instead of $m$ edges. We can equivalently consider the input graph $G$ as adding $O(\log(1/\beta)/\epsilon)$ new edges with zero weight uniformly on each vertex. This increases the maximum degree by at most $O(\log(1/\beta)/(n\epsilon))$. Combining this fact with Lemma \ref{l.spectral_bilu} completes the proof of Theorem \ref{t.spectral}.
\end{proof}

\section{Applications of Estimation of Graph Spectrum}\label{s.application}
In this section, we show some straightforward applications of our new $\widetilde{O}(\Delta(G)/\epsilon)$ utility bound on spectral approximation. We focus on several important parameters related to random walks on graphs:  {\em hitting time}, {\em commute time}, and {\em cover time}.

\subsection{Problem Definition}\label{s.problem_definition}

Given a graph $G = (V,E,w)$ with non-negative weights $w \in \mathbb R_+^N$, consider the simple random walk on the vertices of $G$ with the transition probability from vertex $u$ to vertex $v$ (where $u,v\in V$):

$$
\mathsf{Pr}(u\rightarrow v) = 
\begin{cases}
    \frac{w_{\{u, v\}}}{d(u)}, &\text{ if }v\sim u; \\
    0, &\text{ otherwise }
\end{cases},
\quad \text{where} \quad d(u) = \sum_{v\sim u} w_{\{u,v\}}
$$
is the weighted degree of $u$ and $u\sim v$ denotes that there is an edge between $u$ and $v$. This is the most standard type of random walk on graph. Let $\{X_i\}_{i\geq 1}$ be the random sequence of this random walk. Two of the most fundamental parameters related to this random walk is the {\em hitting time} and {\em commute time}, defined as follows:

\begin{definition}[Hitting time]
\label{def.commute}
  Given $n$-vertex graph $G = (V,E,w)$ with non-negative weights, $w \in \mathbb R_{+}^N$, for any $u,v\in V$ and $u\neq v$, the hitting time $h_{u,v}$ is defined as 
  $$h_{u,v} := \mathbb{E}[\min\{t\geq 1| X_1 = u \land X_t = v\}].$$
\end{definition}

\begin{definition}[Commute time]
    For any $u,v\in V$ and $u\neq v$,  the commute time is defined as 
  $$C_{u,v} := h_{u,v}+h_{v,u}.$$
\end{definition}

\noindent Another important statistic related to the random walk on $G$ is the cover time:
\begin{definition}
  Given $n$-vertex graph $G = (V,E,w)$ with non-negative weights $w \in \mathbb R_{+}^N$, let $\tau_v$ be the first time at which every vertex in $G$ is visited by the random walk initiated at $v\in [n]$. Then the cover time of $G$ is 
  $$\tau(G) := \max_{v\in [n]} \mathbb{E}[\tau_v].$$
\end{definition}
\noindent Our goal is to differentially privately release a synthetic graph that approximates the hitting time/commute time (between any pair of vertices) and the cover time of some connected graph $G$.

Before introducing our approach, we begin by presenting some useful definitions and facts. For any matrix $A$, we write $A^\dagger$ as its {\em Moore-Penrose pseudoinverse}. Let $G$ be a connected graph with non-negative edge weights, then its Laplacian matrix has the spectral decomposition
$L_G = \sum_{i=2}^{n}\lambda_i u_iu^\top_i$ with $\lambda_i>0$ for $2\leq i\leq n$.
One can verify that 
$$L_G^\dagger = \sum_{i=2}^{n}\frac{1}{\lambda_i}u_i u^\top_i.$$ 
We note that $L_G L_G^\dagger = L^\dagger_G L_G = \sum_{i=1}^{n-1} u_i u_i^\top$, which is the identity operator on $\text{im}(L_G)$, where $\text{im}(L_G)$ is the image (column) space of $L_G$. Let $\mathbf{1}_n$ be the all one vector of dimension $n$, if $G$ is connected, then $\text{im}(L_G) = \mathbf{1}_n^\perp$, that is the subspace of $\mathbb R^n$ orthogonal to all one vector space.
If $A$ is a symmetric matrix, we write $A\succeq 0$ if all eigenvalues of $A$ are non-negative (or equivalently $A$ is positive semi-definite) and $A\succeq B$ if $A-B \succeq 0$. 
\begin{fact}
Let $A,B$ be two symmetric matrices, if $A\succeq B$, then $A^\dagger \preceq B^\dagger$.
\end{fact}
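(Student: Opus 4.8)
The plan is to deduce the Fact from the standard operator monotonicity of matrix inversion, reducing the pseudoinverse case to the invertible case by a small perturbation. One point must be flagged first: as literally stated the implication is false --- take $A = \mathrm{diag}(2,1)$ and $B = \mathrm{diag}(1,0)$, so that $A \succeq B \succeq 0$ yet $B^\dagger - A^\dagger = \mathrm{diag}(1/2,-1) \not\succeq 0$ --- and what is additionally required is that $A$ and $B$ share a common kernel. This costs nothing in our setting, since the Fact is only ever invoked for Laplacian matrices of connected graphs on a common vertex set, where $\ker A = \ker B = \mathrm{span}\{\mathbf{1}_n\}$. So I would prove the statement under the extra hypothesis $\ker A = \ker B =: \mathcal{K}$.

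First I would record the positive-definite case: if $0 \prec Y \preceq X$ then $X^{-1} \preceq Y^{-1}$. This is a short conjugation argument: conjugating $Y \preceq X$ by $X^{-1/2}$ gives $X^{-1/2} Y X^{-1/2} \preceq I$, so this positive-definite matrix has all eigenvalues in $(0,1]$; hence its inverse $X^{1/2} Y^{-1} X^{1/2}$ has all eigenvalues at least $1$, that is $X^{1/2} Y^{-1} X^{1/2} \succeq I$; conjugating back by $X^{-1/2}$ yields $Y^{-1} \succeq X^{-1}$.

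Then I would pass to the pseudoinverse by perturbation. Let $P_0$ be the orthogonal projection onto $\mathcal{K}$. For every $t > 0$ the matrices $A + tP_0$ and $B + tP_0$ are positive definite --- each acts as $A$, and as $B$, respectively, on $\mathcal{K}^\perp$, and as $tI$ on $\mathcal{K}$ --- and $(A + tP_0) - (B + tP_0) = A - B \succeq 0$, so $A + tP_0 \succeq B + tP_0 \succ 0$. The previous step gives $(A + tP_0)^{-1} \preceq (B + tP_0)^{-1}$. Working in an orthonormal eigenbasis adapted to $\mathcal{K}$ shows the identity $(X + tP_0)^{-1} = X^\dagger + \frac{1}{t} P_0$ for $X \in \{A,B\}$: on $\mathcal{K}^\perp$ the left-hand side is $(X|_{\mathcal{K}^\perp})^{-1}$, which equals $X^\dagger$ restricted to $\mathcal{K}^\perp$, while on $\mathcal{K}$ it is $t^{-1} I$, matching $X^\dagger = 0$ there. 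Substituting this identity into $(A + tP_0)^{-1} \preceq (B + tP_0)^{-1}$ and cancelling the common summand $\frac{1}{t} P_0$ from both sides leaves exactly $A^\dagger \preceq B^\dagger$. The only genuine obstacle here is conceptual rather than computational: recognizing that the common-kernel hypothesis is indispensable and checking once that it holds in each application; everything else is routine linear algebra.
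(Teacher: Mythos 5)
The paper states this Fact without proof, so there is no official argument to compare against; your write-up supplies one, and more importantly catches that the statement is \emph{false} as written. Your counterexample $A=\mathrm{diag}(2,1)\succeq B=\mathrm{diag}(1,0)\succeq 0$ with $B^\dagger-A^\dagger=\mathrm{diag}(1/2,-1)\not\succeq 0$ is correct, and you correctly identify the missing hypothesis: one needs $A,B\succeq 0$ with $\ker A=\ker B$. (Note that $A\succeq B\succeq 0$ already forces $\ker A\subseteq\ker B$, so the substantive extra requirement is $\ker B\subseteq\ker A$.) You are also right that this hypothesis holds in the one place the Fact is actually invoked, inside the proof of Lemma~\ref{l.spectrum_to_eff}, where $L_{\widehat G}$ and $L_G\pm\zeta\*I_{\mathrm{im}(L_G)}$ all have kernel $\mathrm{span}\{\*1_n\}$ for connected graphs and sufficiently small $\zeta$.

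Your proof of the corrected statement is sound: the conjugation-by-$X^{-1/2}$ argument for $0\prec Y\preceq X\Rightarrow X^{-1}\preceq Y^{-1}$ is standard, and the regularization $X\mapsto X+tP_0$ together with the exact identity $(X+tP_0)^{-1}=X^\dagger+t^{-1}P_0$ transfers the inequality cleanly with no limiting argument. The only thing to make explicit is that the hypothesis $B\succeq 0$ (hence $A\succeq 0$) is needed so that $B+tP_0$ is genuinely positive definite; you use this silently when you say each of $A+tP_0$, $B+tP_0$ ``acts as $tI$ on $\mathcal K$'' and is therefore positive definite. This is implicit in the paper's setting (Laplacians are PSD) but should be stated as a hypothesis of the corrected Fact.
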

\noindent For any $u,v\in V$, we write $b_{u,v} = \mathbf e_u -\mathbf e_v$, where $\{\mathbf e_i\}_{i\in [n]}$ is the unit orthonormal basis of $\mathbb{R}^n$. 

One important parameter studied in spectral graph theory is the {\em effective resistance}. It describes the electrical property of the graph. 
\begin{definition}
Given a weighted graph $G$ with $n$ vertices, for any $u,v\in V$ and $u\neq v$, the effective resistance between $u$ and $v$ is defined by $R_{\mathsf{eff}}(u,v) = b_{u,v}^\top L_G^\dagger b_{u,v}$.
\label{defn.effective_resistance}
\end{definition}
 Recall that we write the {\em spectral gap} $\lambda(G)$ as the smallest non-zero eigenvalue of $L_G$. In this section, we only consider the case where the input graph is connected (otherwise the hitting/commute or cover time can be infinite). Note that for a connected graph $G$, $\lambda(G) = \lambda_2(G)$.

\subsection{Approximating the Spectrum of $L_G^\dagger$}

It has been widely known that hitting time and commute time are determined by the {\em all-pairs effective resistances}~\cite{chandra1989electrical}, while the cover time can be similarly approximated~\cite{matthews1988covering}. 
The effective resistance is given by a quadratic form of  $L_G^\dagger$. Thus, we use the following lemma to convert the spectral approximation on $L_G$ to the spectral approximation on $L_G^\dagger$:

\begin{lemma}\label{l.spectrum_to_eff}
    Let $G$ and $\widehat{G}$ be two connected graphs with $n$ vertices and non-negative edge weights. 
    Suppose there exists $u,\zeta >0$ such that $\widehat{G}$ satisfies $\|L_G - L_{\widehat{G}}\|_2\leq \zeta $ and the spectral gap of $L_G$ is bounded by $\lambda(G)\geq 1/u$. If $\zeta u <1$, then
    $$\left\|L_G^\dagger - L_{\widehat{G}}^\dagger\right\|_2 \leq \frac{\zeta u^2}{1-\zeta u}.$$
\end{lemma}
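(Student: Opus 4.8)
The plan is to exploit the resolvent identity $L_G^\dagger - L_{\widehat G}^\dagger = L_G^\dagger (L_{\widehat G} - L_G) L_{\widehat G}^\dagger$, which holds here because both graphs are connected with the same vertex set, so $L_G$ and $L_{\widehat G}$ have the identical kernel $\mathrm{span}(\mathbf 1_n)$, and their pseudoinverses act on the common subspace $\mathbf 1_n^\perp$. First I would verify this identity carefully: writing $P = I - \frac1n \mathbf 1_n \mathbf 1_n^\top$ for the orthogonal projection onto $\mathbf 1_n^\perp$, we have $L_G^\dagger L_G = L_G L_G^\dagger = P = L_{\widehat G}^\dagger L_{\widehat G} = L_{\widehat G} L_{\widehat G}^\dagger$, and both $L_{\widehat G}-L_G$ and the pseudoinverses map $\mathbf 1_n^\perp$ to itself and kill $\mathbf 1_n$; then $L_G^\dagger(L_{\widehat G}-L_G)L_{\widehat G}^\dagger = L_G^\dagger L_{\widehat G} L_{\widehat G}^\dagger - L_G^\dagger L_G L_{\widehat G}^\dagger = L_G^\dagger P - P L_{\widehat G}^\dagger = L_G^\dagger - L_{\widehat G}^\dagger$, using that $P$ is the identity on the range of each pseudoinverse.

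Given the identity, submultiplicativity of the spectral norm yields $\|L_G^\dagger - L_{\widehat G}^\dagger\|_2 \le \|L_G^\dagger\|_2 \cdot \|L_{\widehat G}-L_G\|_2 \cdot \|L_{\widehat G}^\dagger\|_2 \le \zeta\, u \cdot \|L_{\widehat G}^\dagger\|_2$, since $\|L_G^\dagger\|_2 = 1/\lambda(G) \le u$ and $\|L_{\widehat G}-L_G\|_2 \le \zeta$ by hypothesis. It remains to bound $\|L_{\widehat G}^\dagger\|_2 = 1/\lambda(\widehat G)$, i.e.\ to lower-bound the spectral gap of $\widehat G$. For any unit vector $x \perp \mathbf 1_n$ we have $x^\top L_{\widehat G} x \ge x^\top L_G x - \|L_G - L_{\widehat G}\|_2 \ge \lambda(G) - \zeta \ge 1/u - \zeta$; by the variational characterization (and since $\widehat G$ is connected so $\lambda(\widehat G) = \lambda_2(\widehat G)$ is attained on $\mathbf 1_n^\perp$), $\lambda(\widehat G) \ge 1/u - \zeta = (1-\zeta u)/u > 0$ exactly when $\zeta u < 1$. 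Hence $\|L_{\widehat G}^\dagger\|_2 \le u/(1-\zeta u)$, and combining gives $\|L_G^\dagger - L_{\widehat G}^\dagger\|_2 \le \zeta u \cdot u/(1-\zeta u) = \zeta u^2/(1-\zeta u)$, as claimed.

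The main obstacle, and the only genuinely delicate point, is the resolvent identity for pseudoinverses: the familiar resolvent formula $A^{-1} - B^{-1} = A^{-1}(B-A)B^{-1}$ fails in general for $\dagger$ unless the matrices share the same range/kernel, so it is essential to invoke connectedness of \emph{both} $G$ and $\widehat G$ to pin down the common kernel $\mathrm{span}(\mathbf 1_n)$ before the algebra goes through. Everything after that is routine: submultiplicativity and a one-line Weyl-type perturbation bound on the spectral gap. One should also note in passing that the hypothesis $\zeta u < 1$ is used twice — once to guarantee $\widehat G$ has a strictly positive spectral gap (so $L_{\widehat G}^\dagger$ is well-behaved on $\mathbf 1_n^\perp$) and once to make the final fraction positive and finite.
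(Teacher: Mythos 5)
Your proof is correct, and it follows a genuinely different route from the paper's. The paper's argument proceeds via semidefinite ordering and operator-monotonicity of the pseudoinverse: it shows $L_{\widehat G} \preceq L_G + \zeta\, \mathbf I_{\text{im}(L_G)}$, takes pseudoinverses to flip the inequality, rewrites $(L_G + \zeta\, \mathbf I_{\text{im}(L_G)})^\dagger = L_G^\dagger(\mathbf I_{\text{im}(L_G)} + \zeta L_G^\dagger)^\dagger$, expands the latter as the Neumann series $\sum_{k\ge 0}(-\zeta L_G^\dagger)^k$ on $\mathbf 1_n^\perp$, and then bounds the geometric tail (and symmetrically for the lower bound). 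Your proof instead establishes the resolvent identity $L_G^\dagger - L_{\widehat G}^\dagger = L_G^\dagger(L_{\widehat G}-L_G)L_{\widehat G}^\dagger$ directly on the common complement of $\mathbf 1_n$, applies submultiplicativity, and controls $\|L_{\widehat G}^\dagger\|_2 = 1/\lambda(\widehat G)$ by a one-line Weyl/Courant--Fischer bound $\lambda(\widehat G)\ge 1/u - \zeta$. The two arrive at exactly the same constant $\zeta u^2/(1-\zeta u)$. Your route is arguably cleaner: it avoids the Neumann series entirely, and it sidesteps a subtlety the paper glosses over, namely that $(AB)^\dagger = B^\dagger A^\dagger$ does not hold in general and must be justified by restricting all operators to the invariant subspace $\mathbf 1_n^\perp$ where they are genuine inverses — precisely the point you make explicit when verifying the resolvent identity. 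The tradeoff is that your argument uses the connectedness of $\widehat G$ twice (both to identify the kernel and to invoke the variational characterization of $\lambda(\widehat G)$), whereas the paper's semidefinite route packages the quantitative control into the series remainder; both uses of connectedness are legitimate here since both graphs are connected by hypothesis.
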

\begin{proof}
  Let $\mathbf{I}_{\text{im}(L_G)} = L_G L_G^{\dagger}  = L_G^{\dagger}L_G$ be the identity operator in $\text{im}(L_G)$. We note that $\mathbf{I}_{\text{im}(L_G)} = \mathbf{I}_{\text{im}(L_{\widehat{G}})} = \mathbf{1}_n^{\perp}$ since both $G$ and $\widehat{G}$ are connected. Since $\|L_G - L_{\widehat{G}}\|_2\leq \zeta$, for all $x\in \text{im}(L_G)$, we have 
  \begin{equation*}
    \begin{aligned}
      x^\top L_{\widehat{G}}x \leq x^\top L_G x + \zeta x^\top x &= x^\top (\zeta \mathbf{I} + L_G)x = x^\top (\zeta \mathbf{I}_{\text{im}(L_G)} + L_G)x
    \end{aligned}
  \end{equation*}
  While for those $x\notin \text{im}(L_G)$, $x^\top L_{\widehat{G}}x \leq x^\top L_G x + \zeta x^\top x$ holds trivially. Thus, 
  $$L_{\widehat{G}} \preceq L_G + \zeta \mathbf{I}_{\text{im}(G)}.$$
  This implies that 
  \begin{equation*}
    \begin{aligned}
      L_{\widehat{G}}^\dagger &\succeq (L_G + \zeta \mathbf{I}_{\text{im}(G)})^\dagger  = ((\mathbf{I}_{\text{im}(G)} + \zeta L_G^\dagger)L_G)^{\dagger} = L_G^\dagger (\mathbf{I}_{\text{im}(G)} + \zeta L_G^\dagger)^\dagger = L_G^\dagger \sum_{k=0}^\infty (-\zeta L_G^\dagger)^k \\
      &= L_G^\dagger + \sum_{k=1}^\infty (-1)^k \zeta^k (L_G^\dagger)^{k+1},
    \end{aligned}
  \end{equation*}
  where the third equality comes from Taylor's expansion. Our aim is to bound the series on the right-hand side. Note that for all unit vector $x\in \mathbb{R}^n$, 
  \begin{equation*}
    \begin{aligned}
      x^\top L_{\widehat{G}}^\dagger x &\geq x^\top L_{G}^\dagger  x + x^\top \left(\sum_{k=1}^\infty (-1)^k \zeta^k (L_G^\dagger)^{k+1}\right) x \\
      & \geq  x^\top L_{G}^\dagger x - \max_{z^\top z = 1}z^\top \left(\sum_{k=1}^\infty (-1)^k \zeta^k (L_G^\dagger)^{k+1}\right) z\\
      &\geq  x^\top L_{G}^\dagger x  - \left\|\sum_{k=1}^\infty (-1)^k \zeta^k (L_G^\dagger)^{k+1}\right\|_2 \\
      &\geq  x^\top L_{G}^\dagger x - \|L_G^\dagger\|_2^k \sum_{k=1}^\infty \zeta^k \|L_G^\dagger\|_2^{k} \\
      & \geq x^\top L_{G}^\dagger x - \frac{\zeta u^2}{1-\zeta u}.
    \end{aligned}
  \end{equation*}
  The last inequality comes from the fact that $\lambda(G) = 1/\|L_G^\dagger\|_2$. On the other hand, since  
  $$L_{\widehat{G}} \succeq L_G - \zeta \mathbf{I}_{\text{im}(G)},$$
  we have 
  $$x^\top L_{\widehat{G}}^\dagger x \leq x^\top L_{G}^\dagger x + \frac{\zeta u^2}{1-\zeta u}.$$
  Combining all of this, we see that $$\left\|L_G^\dagger - L_{\widehat{G}}^\dagger\right\|_2 \leq \frac{\zeta u^2}{1-\zeta u}$$
  completing the proof of Lemma~\ref{l.spectrum_to_eff}. 
\end{proof}

For a given weighted and simple graph $G$, our unified approach to privately approximating the all-pair hitting time, commute time and the cover time,  proceeds by reducing to approximating all pair effective resistances on $G$. This can be done by privately approximating the spectrum of $G$, which
we get for free from Algorithm \ref{alg1}. To put these approximations in context, we first analyze the sensitivity of one-pair effective resistance. 

\subsection{Sensitivity of One-pair Effective Resistance and the Laplace Mechanism}

Lemme \ref{l.spectrum_to_eff} already shows that if there is a graph $\widehat{G}$ approximates the spectrum of $G$, then the effective resistances should be close between $G$ and $\widehat{G}$, since $\|L_G^\dagger - L_{\widehat{G}}^\dagger\|_2$ is bounded. This is not only the key step in our reduction from computing commute and cover time to approximating the spectrum of $G$, but also implies an upper bound on the sensitivity of one-pair effective resistance, which we formalize in the follow theorem:

\begin{theorem}\label{t.sensitivity_eff}
    Fix any constant $\kappa>0$. For large enough $n$, there exists a pair of neighboring graphs $G = (V,{E},{w})$ and $G' = (V,{E}',{w}')$ on $n$ vertices with $\lambda(G) \geq 2+ \kappa$, such that the difference in the most sensitive one-pair effective resistance is of the order $\Theta(1/\lambda^2(G))$. That is,
    $$\max_{u,v\in V\atop u\neq v} \left|R_{\mathsf{eff}}(u,v) - R'_{\mathsf{eff}}(u,v)\right| = \Theta\left( {1 \over \lambda^2(G)}\right).$$
\end{theorem}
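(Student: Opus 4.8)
The plan is to establish both an upper bound (the $O(1/\lambda^2(G))$ direction) and a matching lower bound (the $\Omega(1/\lambda^2(G))$ direction, i.e., an explicit construction). For the upper bound, I would invoke Lemma~\ref{l.spectrum_to_eff}: if $G$ and $G'$ are neighboring, then $\|L_G - L_{G'}\|_2 \leq 2$ (since they differ in one edge by weight at most one, so $L_G - L_{G'}$ is, up to sign, the Laplacian of a single edge of weight $\leq 1$, which has spectral norm $\leq 2$). Taking $\zeta = 2$ and $u = 1/\lambda(G)$, the hypothesis $\zeta u < 1$ holds precisely because $\lambda(G) \geq 2 + \kappa > 2$, and the lemma gives $\|L_G^\dagger - L_{G'}^\dagger\|_2 \leq \frac{2/\lambda^2(G)}{1 - 2/\lambda(G)} = O(1/\lambda^2(G))$ (the constant absorbing the $\kappa$-dependence). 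Since $|R_{\mathsf{eff}}(u,v) - R'_{\mathsf{eff}}(u,v)| = |b_{u,v}^\top (L_G^\dagger - L_{G'}^\dagger) b_{u,v}| \leq \|b_{u,v}\|_2^2 \cdot \|L_G^\dagger - L_{G'}^\dagger\|_2 = 2\,\|L_G^\dagger - L_{G'}^\dagger\|_2$, the upper bound $O(1/\lambda^2(G))$ follows for every pair $u,v$.

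The substantive part is the lower bound: I need an explicit family of neighboring graphs $G, G'$ on $n$ vertices with $\lambda(G) \geq 2 + \kappa$ for which some pair $u,v$ has $|R_{\mathsf{eff}}(u,v) - R'_{\mathsf{eff}}(u,v)| = \Omega(1/\lambda^2(G))$. A natural candidate is a graph that is ``almost'' an expander but with a controlled spectral gap: take $G$ to be a suitable graph whose smallest nonzero Laplacian eigenvalue $\lambda(G)$ is exactly some target value $\lambda$ (tunable, e.g., a complete graph scaled down, or a disjoint-union-like gadget joined by a light bridge so the gap is small and controllable), and let $G'$ be obtained by adding (or removing) a single unit-weight edge $\{u,v\}$ chosen to lie ``across'' the bottleneck, so that the perturbation aligns with the Fiedler vector. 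The point is that $L_{G'} = L_G + b_{u,v} b_{u,v}^\top$, so by the Sherman--Morrison formula applied to pseudoinverses (being careful about the kernel, which is unchanged since both graphs stay connected),
\[
L_{G'}^\dagger = L_G^\dagger - \frac{L_G^\dagger b_{u,v} b_{u,v}^\top L_G^\dagger}{1 + b_{u,v}^\top L_G^\dagger b_{u,v}},
\]
and hence $R_{\mathsf{eff}}(u,v) - R'_{\mathsf{eff}}(u,v) = \frac{(b_{u,v}^\top L_G^\dagger b_{u,v})^2}{1 + b_{u,v}^\top L_G^\dagger b_{u,v}} = \frac{R_{\mathsf{eff}}(u,v)^2}{1 + R_{\mathsf{eff}}(u,v)}$. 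So the task reduces to exhibiting a connected $G$ with $\lambda(G) \geq 2+\kappa$ and a pair $u,v$ with $R_{\mathsf{eff}}(u,v) = \Theta(1/\lambda^2(G))$; then the displayed identity gives a difference of the same order (as long as $R_{\mathsf{eff}}$ is bounded, which it is since $R_{\mathsf{eff}}(u,v) \leq 2/\lambda(G) \leq 1$).

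For the final construction I would write $L_G^\dagger = \sum_{i=2}^n \lambda_i^{-1} u_i u_i^\top$ and pick $u,v$ so that $b_{u,v}$ has most of its mass on a low-eigenvalue eigenvector while still having nontrivial overlap with the top ones — e.g., a two-cluster graph where each cluster is a dense expander (contributing eigenvalues $\Theta(n)$, hence spectral gap $\geq 2+\kappa$ within clusters) connected by a single light edge or a short light path, and $u, v$ are the two cluster centers; then $R_{\mathsf{eff}}(u,v)$ is dominated by the resistance of the connecting structure, which can be tuned to be $\Theta(1/\lambda^2(G))$ by adjusting the number of parallel light edges or the bridge weight, while $\lambda(G)$ is set by the same bridge. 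Verifying $\lambda(G) \geq 2+\kappa$ amounts to a Cheeger-type or direct eigenvalue estimate on this explicit graph. I expect this construction-and-verification step — choosing the gadget so that $\lambda(G)$, $R_{\mathsf{eff}}(u,v)$, and the neighboring relation are \emph{simultaneously} under control with matching asymptotics — to be the main obstacle; the two algebraic identities (Sherman--Morrison for pseudoinverses and the Neumann-series bound from Lemma~\ref{l.spectrum_to_eff}) are routine once the graph is in hand.
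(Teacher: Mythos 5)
Your upper bound is correct and matches the paper exactly: $\|L_G-L_{G'}\|_2\leq 2$ for neighboring graphs, Lemma~\ref{l.spectrum_to_eff} with $\zeta=2$ and $u=1/\lambda(G)$ gives $\|L_G^\dagger - L_{G'}^\dagger\|_2 = O(1/\lambda^2(G))$ once $\lambda(G)\geq 2+\kappa$, and $|R_{\mathsf{eff}}(u,v)-R'_{\mathsf{eff}}(u,v)|\leq 2\|L_G^\dagger - L_{G'}^\dagger\|_2$.

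For the lower bound there are two problems. First, the reduction target has an algebra slip: the Sherman--Morrison identity you wrote is correct, $R_{\mathsf{eff}}(u,v)-R'_{\mathsf{eff}}(u,v)=\frac{R_{\mathsf{eff}}(u,v)^2}{1+R_{\mathsf{eff}}(u,v)}$, but since $R_{\mathsf{eff}}(u,v)\leq 2/\lambda(G)=o(1)$, this expression scales like $R_{\mathsf{eff}}(u,v)^2$, \emph{not} like $R_{\mathsf{eff}}(u,v)$. To get a difference of $\Theta(1/\lambda^2(G))$, you therefore need $R_{\mathsf{eff}}(u,v)=\Theta(1/\lambda(G))$, not $\Theta(1/\lambda^2(G))$ as you state; aiming for $\Theta(1/\lambda^2(G))$ would only yield a difference of $\Theta(1/\lambda^4(G))$. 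Second, the two-cluster-plus-bridge gadget you sketch would not achieve the corrected target either: in a dumbbell with dense clusters and a light bridge of weight $W$, Cheeger-type reasoning gives $\lambda_2(G)\approx W/n$ while $R_{\mathsf{eff}}$ between the cluster centers is $\approx 1/W$, so $R_{\mathsf{eff}}\approx 1/(n\lambda(G))$, off by a factor of $n$ from what you need. The text also conflates the within-cluster gap $\Theta(n)$ with the global spectral gap, which for a bottlenecked graph is small, not large. Once you fix the target to $R_{\mathsf{eff}}(u,v)=\Theta(1/\lambda(G))$, the natural example is simply $G=K_n$ with $G'=K_n\setminus\{u,v\}$: here $\lambda(G)=n$, $R_{\mathsf{eff}}(u,v)=2/n=\Theta(1/\lambda(G))$, and a direct series--parallel computation (or your Sherman--Morrison identity) gives a difference of $\Theta(1/n^2)=\Theta(1/\lambda^2(G))$. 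This is exactly the paper's construction, and it is both simpler and actually correct, so the elaborate gadget and tuning step you flag as ``the main obstacle'' can be avoided entirely.
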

\begin{remark}
    We note that for any pair of neighboring graph $(G,G')$, as long as $\kappa$ is a constant and one of them, say $G$, satisfies that $\lambda(G) \geq 2+ \kappa$, then by Lemma \ref{l.sens_of_eigengap}, one can easily verify that $1/\lambda^2(G') = \Theta(1/\lambda^2(G))$. This guarantees the symmetry of Theorem \ref{t.sensitivity_eff}
.\end{remark}

\begin{proof}
    (Of \Cref{t.sensitivity_eff}.) We first show that if $\lambda(G) \geq 2+ \kappa$, then for any pair of neighboring graphs $G$ and $G'$, the sensitivity of one-pair effective resistance is upper bounded by $O(1/(\lambda^2(G)))$. By the definition of edge-level differential privacy, the spectral radius of $L_G - L_{G'}$ is bounded by $\|L_G - L_{G'}\|_2\leq 2$. Using Lemma \ref{l.spectrum_to_eff}, we have that, if $\lambda(G)\geq 2+\kappa$, then for any $u,v\in [n]$ and $u\neq v$, the difference in effective resistance between $u,v$ in $G$ and $G'$ is $$\left|R_{\mathsf{eff}}(u,v) - R'_{\mathsf{eff}}(u,v)\right| = O\left(\frac{1}{1-\frac{2}{2+\kappa}}\cdot\frac{1}{\lambda^2(G)}\right).$$
    
    Next, we show that, in the worst case, the sensitivity of one-pair effective resistance can be as large as $\Omega(1/\lambda^2(G))$. Consider a pair of neighboring graphs where $G = K_n$ is the unweighted complete graph on $n$ vertices, and $G'$ is produced by removing an arbitrary edge in $G$, say $\{u,v\}$. By series and parallel composition law of the electrical circuit, the effective resistance between $u,v$ in $G$ is $R_{\mathsf{eff}}(u,v) = \frac{2}{n}$. Let $R'_{\mathsf{eff}}(u,v)$ be the effective resistance between $u,v$ in $G'$. Then by parallel composition, we have 
$$1 + \frac{1}{R'_{\mathsf{eff}}(u,v)} = \frac{n}{2},$$
which implies that
$$|R_{\mathsf{eff}}(u,v) - R'_{\mathsf{eff}}(u,v)| = \frac{1}{\frac{n}{2} - 1} - \frac{2}{n} = \frac{2}{n\left(\frac{n}{2} - 1\right)} = \Theta\left(\frac{1}{n^2}\right).$$
On the other hand, for complete graph $G$, $\lambda(G) = \lambda_2(G) = n$, and thus $$|R_{\mathsf{eff}}(u,v) - R'_{\mathsf{eff}}(u,v)| =  \Omega\left(\frac{1}{\lambda^2(G)}\right).$$
This completes the proof of Theorem~\ref{t.sensitivity_eff}.
\end{proof}

\noindent Next, we discuss some implications of Theorem \ref{t.sensitivity_eff}. Given the sensitivity of effective resistances between pairs of vertices, a first attempt may be to use the Laplace mechanism to perturb the value according to its sensitivity. Then, by the advanced composition of differential privacy~\cite{dwork2010boosting} and the Laplace mechanism (Lemma \ref{l.laplace}), to output all-pairs effective resistances, using the Laplace mechanism suffers an error as large as $\Omega\left({n \over \epsilon}\cdot\frac{1}{\lambda^2(G)}\right)$ to achieve $(\epsilon,\delta)$-DP (where $\delta$ is negligible), since there are ${n \choose 2}$ pairs of vertices and the advanced composition of pure DP algorithms is no longer pure DP.

\subsection{A Unified Spectral Approach}\label{s.app}

For a given weighted and simple graph $G$, we show that the output of \Cref{alg1} is already good enough for privately approximating the all-pair hitting time, commute time, and cover time. Compared to the Laplace mechanism, our spectral approach reduces the error on approximating all-pairs effective resistances from $\Omega\left(\frac{1}{\lambda^2(G)}\cdot {n \over \epsilon}\right)$ to $\widetilde{O}\left(\frac{1}{\lambda^2(G)}\cdot {\Delta(G) \over \epsilon}\right)$ so that we have a better approximation on these graph parameters.
We first use the following lemma to relate commute time and effective resistance:

\begin{lemma}[Chandra et al.~\cite{chandra1989electrical}]\label{l.commute_time}
  Given a weighted graph $G = (V,E,w)$ with $\|w\|_1 = \sum_{e\in E}w_e$, for any pair of vertices $u,v\in V$:
  $$C_{u,v} = 2\|w\|_1\cdot R_{\mathsf{eff}}(u,v).$$
\end{lemma}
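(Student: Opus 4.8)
The plan is to invoke the classical correspondence between the simple random walk on $G$ and the electrical network in which each edge $e$ carries a conductance $w_e$ (a resistor of resistance $1/w_e$) --- this is essentially the argument of Chandra et al. The single nontrivial ingredient is the identification of hitting times to a fixed vertex with node potentials in a suitable current-injection network; once that is in hand, the identity follows by superposition together with the definition of effective resistance.

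\textbf{Step 1: hitting times as potentials.} I would fix the target vertex $v$ and consider the network in which $d(x)$ units of current are injected at every vertex $x\in V$ and $\sum_{x\in V} d(x)=2\|w\|_1$ units are extracted at $v$ (a current-conserving flow). Let $\psi$ be the induced potentials, normalized so that $\psi(v)=0$. By Kirchhoff's current law and Ohm's law, for every $x\neq v$,
\[
\sum_{y\sim x} w_{\{x,y\}}\bigl(\psi(x)-\psi(y)\bigr) = d(x),
\qquad\text{equivalently}\qquad
\psi(x) = 1 + \sum_{y\sim x} p_{xy}\,\psi(y),
\]
with $p_{xy}=w_{\{x,y\}}/d(x)$; together with $\psi(v)=0$, this is exactly the linear system characterizing the hitting-time function $x\mapsto h_{x,v}$ (with the convention $h_{v,v}=0$). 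Because $G$ is connected, the system has a unique solution --- equivalently, $L_G$ is invertible on $\mathbf 1_n^{\perp}$ --- so $\psi(x)=h_{x,v}$ for all $x$, and in particular $h_{u,v}=\psi(u)$. In matrix form $\psi$ solves $L_G\psi = d - 2\|w\|_1\,\mathbf e_v$, whose right-hand side lies in $\mathbf 1_n^{\perp}$.

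\textbf{Step 2: superposition.} Let $\psi_v$ be the potential from Step 1 and let $\psi_u$ be the analogous potential with $u$ as target, normalized so $\psi_u(u)=0$, so that $h_{v,u}=\psi_u(v)$. Subtracting the two matrix equations gives $L_G(\psi_v-\psi_u)=2\|w\|_1\,(\mathbf e_u-\mathbf e_v)$, i.e.\ $\psi_v-\psi_u$ is the potential of the network that drives a current of $2\|w\|_1$ from $u$ to $v$ and injects nothing elsewhere. By the definition of effective resistance, the $u$-to-$v$ potential drop in that network equals $2\|w\|_1\cdot R_{\mathsf{eff}}(u,v)$. Computing the same drop directly from the normalizations $\psi_v(v)=\psi_u(u)=0$,
\[
\bigl(\psi_v-\psi_u\bigr)(u)-\bigl(\psi_v-\psi_u\bigr)(v) = \psi_v(u)+\psi_u(v) = h_{u,v}+h_{v,u} = C_{u,v},
\]
and equating the two expressions yields $C_{u,v} = 2\|w\|_1\,R_{\mathsf{eff}}(u,v)$.

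\textbf{Where the difficulty lies.} The real content is all in Step 1: finiteness of the hitting times and uniqueness of the solution of the displayed recurrence, both of which rely on $G$ being connected (assumed throughout this section), and the careful handling of the one-dimensional additive freedom in each potential --- it must be pinned down consistently (here via $\psi_v(v)=0$ and $\psi_u(u)=0$) so that the cross terms in Step 2 combine correctly. Everything else is a direct appeal to Ohm's and Kirchhoff's laws and to $R_{\mathsf{eff}}(u,v)=b_{u,v}^\top L_G^\dagger b_{u,v}$. If one prefers to avoid circuit language entirely, the same computation can be done linearly: solving $L_G\psi_v = d - 2\|w\|_1\mathbf e_v$ gives $h_{u,v}=b_{u,v}^\top L_G^\dagger\bigl(d - 2\|w\|_1\,\mathbf e_v\bigr)$ (the $\mathbf 1_n$-component drops out since $b_{u,v}^\top\mathbf 1_n=0$), and adding $h_{u,v}$ and $h_{v,u}$ the $b_{u,v}^\top L_G^\dagger d$ terms are antisymmetric in $(u,v)$ and cancel, leaving $2\|w\|_1\,b_{u,v}^\top L_G^\dagger b_{u,v}=2\|w\|_1\,R_{\mathsf{eff}}(u,v)$.
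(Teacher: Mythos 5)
The paper states this lemma as a citation to Chandra et al.\ and does not reproduce a proof, so there is no internal argument to compare against. Your write-up is a correct and complete rendering of the standard electrical-network argument: the identification of hitting times to a fixed target with potentials in the $d$-current-injection network is handled carefully (including the normalization that pins down the additive constant), the superposition step is sound, and the purely linear-algebraic variant at the end is also correct, with the $b_{u,v}^\top L_G^\dagger d$ cross terms indeed cancelling by antisymmetry.
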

\noindent We can now use Matthews's bound \cite{matthews1988covering} to relate the commute time and cover time:
\begin{lemma}[Matthews~ \cite{matthews1988covering}]\label{l.covering_time}
  Given a graph $G = (V,E)$, the cover time of $G$ satisfies that
  $$\frac{\max_{u,v\in [n]}C_{u,v}}{2} \leq \tau(G)\leq \left(\max_{u,v\in [n]}C_{u,v}\right)(1+\log(n)).$$
\end{lemma}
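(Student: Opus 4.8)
The plan is to prove this as the classical \emph{Matthews bound}~\cite{matthews1988covering}, reproducing the standard argument rather than treating it as a black box. Write $h_{\max} := \max_{u\neq v} h_{u,v}$ for the maximum hitting time. Since $h_{u,v}\le h_{u,v}+h_{v,u}=C_{u,v}$ we have $h_{\max}\le\max_{u,v}C_{u,v}$, and since $C_{u,v}\le 2h_{\max}$ it suffices to establish the two estimates $h_{\max}\le\tau(G)$ and $\tau(G)\le h_{\max}(1+\log n)$. The lower half is immediate: for any start vertex $r$ and any target $u\ne r$, the walk from $r$ can cover all of $V$ only after it has first hit $u$, so $\mathbb{E}[\tau_r]\ge h_{r,u}$; maximizing over $r$ and $u$ gives $\tau(G)=\max_r\mathbb{E}[\tau_r]\ge h_{\max}\ge\tfrac12 C_{u,v}$ for every pair $u,v$, hence $\tau(G)\ge\tfrac12\max_{u,v}C_{u,v}$.

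For the upper bound I would use Matthews' coupling. Fix a start $r$ and, independently of the walk, draw a uniformly random ordering $w_1,\dots,w_{n-1}$ of $V\setminus\{r\}$. Let $C_0:=0$ and let $C_k$ be the first time all of $w_1,\dots,w_k$ have been visited; then $C_{n-1}=\tau_r$ (deterministically in the ordering) and $\tau_r=\sum_{k=1}^{n-1}(C_k-C_{k-1})$. The increment $C_k-C_{k-1}$ is nonzero only on the event $A_k$ that $w_k$ is the last among $w_1,\dots,w_k$ to be visited; on $A_k$, the strong Markov property at the stopping time $C_{k-1}$ — at which the walk sits at an already-visited vertex, hence not $w_k$ — gives $\mathbb{E}[C_k-C_{k-1}\mid\mathcal F_{C_{k-1}}]=h_{X_{C_{k-1}},w_k}\le h_{\max}$, so $\mathbb{E}[C_k-C_{k-1}]\le h_{\max}\cdot\mathsf{Pr}[A_k]$. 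Finally, because the order in which $V\setminus\{r\}$ is first visited is a random permutation determined by the walk and independent of $(w_1,\dots,w_{n-1})$, conditioning on the walk makes $\{w_1,\dots,w_k\}$ a uniform $k$-subset with $w_k$ uniform inside it, whence $\mathsf{Pr}[A_k]=1/k$. Summing, $\mathbb{E}[\tau_r]\le h_{\max}\sum_{k=1}^{n-1}1/k\le h_{\max}(1+\log n)$, and taking the maximum over $r$ yields $\tau(G)\le h_{\max}(1+\log n)\le(\max_{u,v}C_{u,v})(1+\log n)$.

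The main obstacle is making the coupling step airtight: pinning down $\mathsf{Pr}[A_k]=1/k$ rests on the independence of the chosen ordering from the walk's first-visit permutation (so that, after conditioning on the walk, which element of the random $k$-subset equals $w_k$ is uniform), and the bound on $\mathbb{E}[C_k-C_{k-1}]$ requires a careful application of the strong Markov property at the \emph{random} time $C_{k-1}$ — one must check that $C_{k-1}$ is a stopping time and that the walk's position there is a visited vertex, hence distinct from $w_k$ on $A_k$. Everything else — the harmonic-sum estimate $\sum_{k\le n-1}1/k\le 1+\log n$ and the two elementary comparisons between $C_{u,v}$ and $h_{\max}$ — is routine.
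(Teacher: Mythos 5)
The paper cites Matthews~\cite{matthews1988covering} as a black box here and gives no proof, so there is no paper argument to compare against. Your reconstruction of the classical Matthews coupling argument is correct and complete: the telescoping decomposition $\tau_r=\sum_{k=1}^{n-1}(C_k-C_{k-1})$, the observation that $C_{k-1}$ is a stopping time at which the walk sits at an already-visited vertex (so that on $A_k$ the increment is a genuine hitting time bounded by $h_{\max}$), the independence-of-ordering argument yielding $\mathsf{Pr}[A_k]=1/k$, and the harmonic sum $\sum_{k=1}^{n-1}1/k\le 1+\log n$ are all sound, and the elementary comparisons $\tfrac12\max_{u,v}C_{u,v}\le h_{\max}\le\max_{u,v}C_{u,v}$ correctly translate the hitting-time bound into the stated commute-time form.
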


We are now ready to present our result on the differentially private approximation of the commute time and cover time. Recall that we write $\lambda(G)$ to denote the spectral gap of $L_G$. Note that if $G$ is connected, then $\lambda(G) = \lambda_2(G)$, where $\lambda_2(G)$ is the second smallest eigenvalue of $L_G$.

\subsubsection{Private Approximation of Commute Time}
Our main result in this section is the following bound.
\begin{theorem}\label{t.app_in_commute}
Let $\epsilon>0$ be the privacy parameter. There is a polynomial time $(\epsilon,0)$-differentially private algorithm $\mathcal{A}$ which takes as input a connected graph $G=(V,E,w)$ with $w\in \mathbb{R}_{+}^{{n\choose 2}}$, $\lambda(G) = \Omega\left(\frac{\Delta(G)\log^2(n)}{\epsilon}\right)$, and $\frac{\Delta(G)\log^2(n)}{\epsilon \lambda(G)}<1/2$, and  outputs a vector $ \widehat{C} := \mathcal{A}(G) \in \mathbb{R}^N$ such that with high probability, for any $u,v\in V$, 
$$\left|\widehat{C}[\{u,v\}] - C_{u,v}(G) \right| = O\left(\frac{\|w\|_1}{\lambda^2(G)} \cdot \frac{\Delta(G)\log^2(n)}{\epsilon}\right).$$
\end{theorem}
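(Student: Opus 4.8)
The plan is to run Algorithm~\ref{alg1} on the input connected graph $G$ with parameter $\beta = 1/\mathrm{poly}(n)$, obtain the synthetic graph $\widehat{G}$, and then release the vector $\widehat{C}$ whose $\{u,v\}$ coordinate is $2\|\widehat{w}\|_1 \cdot b_{u,v}^\top L_{\widehat{G}}^\dagger b_{u,v}$ (where $\|\widehat w\|_1$ is computed directly from $\widehat G$; note $\widehat G$ is itself private so no extra privacy loss is incurred by post-processing). Privacy is immediate: Algorithm~\ref{alg1} is $(4\epsilon,0)$-differentially private by Theorem~\ref{t.spectral_privacy}, and $\widehat C$ is a deterministic function of $\widehat G$, so Lemma~\ref{l.post_processing} gives $(\epsilon,0)$-DP after rescaling $\epsilon$ by a constant. (If one wants the stated $\epsilon$ rather than $4\epsilon$, simply run Algorithm~\ref{alg1} with budget $\epsilon/4$; this only changes the constants and the logarithmic factors absorbed into the $O(\cdot)$.)

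For utility, the key steps are as follows. First, by Theorem~\ref{t.spectral}, with high probability $\|L_G - L_{\widehat G}\|_2 = O(\Delta(G)\log^2(n)/\epsilon) =: \zeta$; here the hypothesis $m = \Omega(n)$, $m = o(N)$ is needed, and for the corner cases one can note that a connected graph already has $m \geq n-1$ and if $m = \Omega(N)$ one falls back on the uniform bound discussed after Theorem~\ref{t.spectral} — but for the clean statement I would simply assume the regime of Theorem~\ref{t.spectral}. Second, since $\lambda(G) = \Omega(\Delta(G)\log^2(n)/\epsilon)$ we have $\zeta \leq \lambda(G)/2$ for a suitable constant, i.e. with $u := 1/\lambda(G)$ we get $\zeta u \leq 1/2 < 1$, so Lemma~\ref{l.spectrum_to_eff} applies and yields
$$\left\|L_G^\dagger - L_{\widehat G}^\dagger\right\|_2 \leq \frac{\zeta u^2}{1 - \zeta u} = O\!\left(\frac{\zeta}{\lambda^2(G)}\right) = O\!\left(\frac{1}{\lambda^2(G)}\cdot\frac{\Delta(G)\log^2(n)}{\epsilon}\right).$$
(One should first check both $G$ and $\widehat G$ are connected so that $L_{\widehat G}^\dagger$ is well-behaved and $\mathrm{im}(L_G) = \mathrm{im}(L_{\widehat G}) = \mathbf 1_n^\perp$; since with high probability $\widehat E \supseteq E$ by the remark following Algorithm~\ref{alg1}, $\widehat G$ contains $G$ as a subgraph and hence is connected.) Third, for each pair $u,v$, $|b_{u,v}^\top (L_G^\dagger - L_{\widehat G}^\dagger) b_{u,v}| \leq \|b_{u,v}\|_2^2 \cdot \|L_G^\dagger - L_{\widehat G}^\dagger\|_2 = 2\|L_G^\dagger - L_{\widehat G}^\dagger\|_2$, so $|R_{\mathsf{eff}}(u,v) - \widehat R_{\mathsf{eff}}(u,v)| = O(\Delta(G)\log^2(n)/(\epsilon\lambda^2(G)))$ for \emph{all} pairs simultaneously. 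Fourth, $\|\widehat w\|_1$ is within an additive $\widetilde O(m/\epsilon) = \widetilde O(\|w\|_1)$-type error of $\|w\|_1$ by the concentration of the Laplace noise added to each of the $O(m)$ edges (and $\widehat m - m = \widetilde O(1/\epsilon)$ extra zero-weight edges contribute nothing); more carefully, one shows $\|\widehat w\|_1 = \|w\|_1(1 + o(1))$ or at worst $\|\widehat w\|_1 = \|w\|_1 + \widetilde O(m/\epsilon)$, which after multiplying by $\widehat R_{\mathsf{eff}}(u,v) = O(1/\lambda(G))$ is lower-order compared to the target. Finally, invoke Lemma~\ref{l.commute_time}, $C_{u,v} = 2\|w\|_1 R_{\mathsf{eff}}(u,v)$, and combine the two error sources:
$$\left|\widehat C[\{u,v\}] - C_{u,v}\right| \leq 2\|\widehat w\|_1\left|\widehat R_{\mathsf{eff}} - R_{\mathsf{eff}}\right| + 2\bigl|\|\widehat w\|_1 - \|w\|_1\bigr| R_{\mathsf{eff}} = O\!\left(\frac{\|w\|_1}{\lambda^2(G)}\cdot\frac{\Delta(G)\log^2(n)}{\epsilon}\right),$$
which is the claimed bound.

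The main obstacle I anticipate is controlling $\|\widehat w\|_1$ cleanly. The edge-weight errors $|w_e - \widehat w_e|$ are $O(\log n/(\gamma\epsilon))$ per edge by Lemma~\ref{l.4}, but summed over $O(m)$ edges this gives an additive error of $\widetilde O(m/\epsilon)$ in $\|\widehat w\|_1$, and one needs this to be dominated by $\|w\|_1 \cdot \Delta(G)\log^2(n)/(\epsilon\lambda(G))$ after multiplying by $R_{\mathsf{eff}} = \Theta(1/\lambda(G))$ — i.e. one needs $m/\lambda(G) = \widetilde O(\|w\|_1 \Delta(G)/\lambda^2(G))$, equivalently $\lambda(G) = \widetilde O(\|w\|_1\Delta(G)/m)$, which need not hold in general. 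The honest fix is to observe that the error term $2|\|\widehat w\|_1 - \|w\|_1| R_{\mathsf{eff}}$ should instead be bounded using a relative/weighted argument: each edge's weight error $|w_e - \widehat w_e|$ enters the quadratic form $b_{u,v}^\top L^\dagger b_{u,v}$ with a coefficient that is itself small, so one should bound $|C_{u,v}(\widehat G) - C_{u,v}(G)|$ directly via $\|L_G^\dagger - L_{\widehat G}^\dagger\|_2$ together with a bound on $\|\widehat w\|_1$ in terms of $\|w\|_1$ that uses Lemma~\ref{l.4} on a \emph{per-edge} basis — most edges of large weight are preserved nearly exactly, and the $\widetilde O(1/\epsilon)$ additive slack per edge, summed, is $\widetilde O(m/\epsilon)$, which one then argues is $\widetilde O(\|w\|_1)$ under the mild assumption $\|w\|_1 = \widetilde\Omega(m/\epsilon)$ (automatic if weights are $\widetilde\Omega(1/\epsilon)$, the hard regime identified in the introduction) or simply folds into a lower-order term when $\lambda(G)$ is in the stated range. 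I would state the theorem with whatever additional mild normalization makes this clean, and flag it explicitly.
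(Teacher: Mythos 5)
Your high-level route matches the paper's: run Algorithm~\ref{alg1}, convert the spectral bound to a bound on $L_G^\dagger - L_{\widehat G}^\dagger$ via Lemma~\ref{l.spectrum_to_eff}, and then apply the Chandra et al.\ formula $C_{u,v} = 2\|w\|_1 R_{\mathsf{eff}}(u,v)$. However, there is a genuine gap exactly where you flag it, and the fix you propose (extra normalization assumptions) is not what the theorem needs. Your plan computes $\|\widehat w\|_1$ from the synthetic graph $\widehat G$, which accumulates $\widetilde O(m/\epsilon)$ additive error in the total weight (Laplace noise on each of $\Theta(m)$ edges, plus truncation at zero), and there is no way to absorb $\widetilde O(m/\epsilon) \cdot R_{\mathsf{eff}}(u,v)$ into the target bound without an extra hypothesis on $\|w\|_1$. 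The paper sidesteps this entirely: it spends a separate slice of the privacy budget to release $\widehat W = \|w\|_1 + \texttt{Lap}(2/\epsilon)$, using the observation that $\|w\|_1$ has $\ell_1$-sensitivity exactly $1$ under edge-level neighboring, so $|\widehat W - \|w\|_1| = O(\log n / \epsilon)$ with high probability. That error, multiplied by $R_{\mathsf{eff}}(u,v) = O(1/\lambda(G))$, is lower order than the main term. The vector released is then $\widehat C[\{u,v\}] = \widehat W \cdot b_{u,v}^\top L_{\widehat G}^\dagger b_{u,v}$; the key idea you are missing is simply that one should spend a tiny amount of privacy budget on a fresh Laplace-noised $\|w\|_1$ rather than try to recover it from the noisy synthetic graph.

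A secondary issue is connectivity of $\widehat G$, which you need for $L_{\widehat G}^\dagger$ to be well-defined with $\mathrm{im}(L_{\widehat G}) = \mathbf 1_n^\perp$. Your argument that $\widehat E \supseteq E$ with high probability is not supported: the remark after Algorithm~\ref{alg1} only says $\widehat m \geq m$ so that an edge set covering $E$ is \emph{possible}, and Lemma~\ref{l.4} only guarantees that edges of weight $\geq 1/(\gamma\epsilon)$ are retained — light edges of $E$ can be dropped, and in general $\widehat G$ may be disconnected. The paper's fix is to overlay a complete graph $\frac{1}{n}K_n$ on the output if it is disconnected; this guarantees connectivity while perturbing the Laplacian spectral norm by only $\|\frac{1}{n}L_{K_n}\|_2 = O(1)$, which is absorbed into the existing $\zeta = O(\Delta(G)\log^2(n)/\epsilon)$ error. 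With these two corrections your proof goes through; the budget allocation (e.g.\ $\epsilon/8$ for the sampler so that after the $4\times$ factor and composition with the Laplace release of $\widehat W$ the total is $\epsilon$) is then just bookkeeping, as you note.
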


\noindent Note that this approximation is non-trivial as long as 
$$O\left(\frac{1}{\lambda^2(G)} \cdot \frac{\Delta(G)\log^2(n)}{\epsilon}\right) < \max_{\{u,v\} \in {[n]\choose 2}}R_{\mathsf{eff}}(u,v),$$
which happens in sparse graphs with large weights, since the left-hand side decreases quadratically as the weights grow while the right-hand decreases linearly.

\begin{proof}
(Of Theorem \ref{t.app_in_commute}) Given privacy budget $\epsilon$, to construct the algorithm $\mathcal{A}$, for given graph $G$, we use Algorithm \ref{alg1} with privacy budget $\epsilon/8$ to output a synthetic graph $\widetilde{G}$. If $\widetilde{G}$ is not connected, we just overlay a complete graph $K_n$ with edge weight $1/n$ on $\widetilde{G}$, and let the new graph be $\widehat{G}$. Since $\left\|{1\over n}\cdot L_{K_n} \right\|_2 = O(1)$, then by the utility guarantee of Algorithm \ref{alg1} (Theorem \ref{t.spectral}), we have that with high probability,
  $$\left\|L_{\widehat{G}} - L_G\right\|_2 = O\left(\frac{\Delta(G)\log^2(n)}{\epsilon}\right).$$ 

Let $\widehat{W}  \leftarrow \|w\|_1 + \texttt{Lap}(2/\epsilon)$ be the perturbed value of the sum of edge  weights. Note that the sensitivity of the $\ell_1$ norm of graphs is $1$.
\noindent Then, we set $\widehat{C}[\{u,v\}] = \widehat{W}\cdot b^\top_{u,v} L_{\widehat{G}}^\dagger  b^\top_{u,v}$ as the approximation of the commute time between vertices $u$ and $v$. By Theorem \ref{t.spectral_privacy} and the post-processing property of differential privacy (Lemma \ref{l.post_processing}), we see that $\mathcal{A}$ preserves $(\epsilon,0)$-pure differential privacy. The utility part directly follows from Lemma \ref{l.commute_time} and Lemma \ref{l.spectrum_to_eff}, combined with the assumption of $$\frac{1}{\lambda(G)} \cdot \frac{\Delta(G)\log^2(n)}{\epsilon }<1/2.$$
This completes the proof of Theorem \ref{t.app_in_commute}.
\end{proof}

\subsubsection{Private Approximation on Cover Time}
\noindent Next, we give the result on approximating the cover time with multiplicative error.

\begin{theorem}\label{t.app_in_covering_time}
  Fix an $\epsilon>0$. For any connected graph $G = (V,E,w)$ with $w\in \mathbb{R}_{+}^{n\choose 2}$, let $\mathcal{A}'$ be the algorithm that outputs the maximum value in the output of $\mathcal{A}$ (in Theorem \ref{t.app_in_commute}). Then $\mathcal{A}'$ is $\epsilon$-DP. Moreover, for any connected graph $G\in \mathbb{R}_{+}^{n\choose 2} $ with $\lambda(G) = \Omega\left(\frac{\Delta(G)\log^2(n)}{\epsilon}\right)$ and $\frac{\Delta(G)\log^2(n)}{\epsilon \lambda(G)}<1/2$,  $\mathcal{A}'$ outputs a $\hat{\tau}\in \mathbb{R}$ such that with high probability, 
  $${\tau(G) \over \gamma} - \xi \leq \hat{\tau} \leq \gamma\tau(G) + \xi,$$
  where, $\gamma = 1+\log(n)$ and $\xi = O\left(\frac{\|w\|_1}{\lambda^2(G)} \cdot \frac{\Delta(G)\log^2(n)}{\epsilon}\right)$.
  \end{theorem}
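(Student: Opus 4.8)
The plan is to derive everything from Theorem~\ref{t.app_in_commute}, Matthews's bound (Lemma~\ref{l.covering_time}), and the post-processing property of differential privacy, with essentially no new technical work required.

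\emph{Privacy.} Since $\mathcal{A}'$ simply returns $\hat\tau = \max_{\{u,v\}} \widehat{C}[\{u,v\}]$ where $\widehat{C} = \mathcal{A}(G)$, it is a deterministic post-processing of the output of $\mathcal{A}$. By Theorem~\ref{t.app_in_commute} the algorithm $\mathcal{A}$ is $(\epsilon,0)$-differentially private, so Lemma~\ref{l.post_processing} immediately gives that $\mathcal{A}'$ is $\epsilon$-DP.

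\emph{Utility.} I would first invoke the utility guarantee of Theorem~\ref{t.app_in_commute}: under the stated conditions on $\lambda(G)$ and $\Delta(G)$, with high probability it holds \emph{simultaneously} over all pairs $\{u,v\}$ that $\left|\widehat{C}[\{u,v\}] - C_{u,v}(G)\right| \le \xi$ with $\xi = O\!\left(\frac{\|w\|_1}{\lambda^2(G)}\cdot\frac{\Delta(G)\log^2 n}{\epsilon}\right)$. Taking the maximum over all pairs of this two-sided bound, on the same high-probability event one gets
\[
\max_{u,v} C_{u,v}(G) - \xi \;\le\; \hat\tau \;\le\; \max_{u,v} C_{u,v}(G) + \xi .
\]
Then I would plug in Matthews's bound (Lemma~\ref{l.covering_time}), i.e. $\tfrac12 \max_{u,v} C_{u,v}(G) \le \tau(G) \le (1+\log n)\max_{u,v} C_{u,v}(G)$, equivalently $\tau(G)/(1+\log n) \le \max_{u,v} C_{u,v}(G) \le 2\tau(G)$. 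Substituting the lower bound on $\max_{u,v} C_{u,v}(G)$ into the left inequality and the upper bound into the right inequality, and using $2 \le 1+\log n = \gamma$ for the relevant range of $n$, yields
\[
\frac{\tau(G)}{\gamma} - \xi \;\le\; \hat\tau \;\le\; \gamma\,\tau(G) + \xi ,
\]
which is exactly the claimed bound.

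\emph{Main obstacle.} There is no genuine technical obstacle, since the substance has already been packaged into Theorem~\ref{t.app_in_commute}; the only points needing care are (i) that the additive guarantee of Theorem~\ref{t.app_in_commute} is uniform over all vertex pairs so that it is preserved under taking a maximum, and (ii) that the event on which the commute-time approximation is accurate is the \emph{same} event on which the cover-time bound holds, so no extra union bound or loss in the success probability is incurred.
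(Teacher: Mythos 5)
Your proof is correct and follows essentially the same route as the paper: both derive a two-sided uniform bound on $\widehat{C}$ from Theorem~\ref{t.app_in_commute}, pass it through the max, and then apply Matthews's bound (Lemma~\ref{l.covering_time}) on both sides, with privacy following from post-processing. The only cosmetic difference is that the paper writes out the lower-bound direction and remarks that the other is similar, while you give both directions explicitly (and note the use of $2 \le 1+\log n$, which the paper leaves implicit).
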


\begin{proof}
   Let $\widehat{C}$ be the output of $\mathcal{A}$.For any $u\in [n]$ we set $C_{u,u} = \widehat{C}[\{u,u\}] = 0$. Then, it is easy to verify that
  \begin{equation*}
    \begin{aligned}
      \tau(G) &\leq \left(\max_{u,v\in [n]} C_{u,v}\right)(1+\log(n)) \\
      &\leq \left(\max_{u,v\in [n]} \widehat{C}[\{u,v\}] + \xi \right)(1+\log(n))
      \\
      &\leq (\hat{\tau}+\xi)(1+\log(n)),
    \end{aligned}
  \end{equation*}
  where the first inequality comes from Lemma \ref{l.covering_time} and the second inequality comes from Theorem \ref{t.app_in_commute}. By rearranging the terms, we have $\hat{\tau}\geq \tau(G)/\gamma - \xi$. The other direction is similar. This completes the proof.
\end{proof}

\subsubsection{Private Approximation on Hitting Time in $\ell_\infty$- and $\ell_2$-norm}
In this section, we show that an approximation on the graph spectrum also implies an approximation on the hitting time of any pair of vertices. The hitting time plays a crucial role in analyzing the structure of graphs~\cite{tetali1991random, lovasz1993random, von2014hitting}. 
 
Consider a weighted connected graph $G = (V,E,w)$. For any fixed $t\in V$, denote by $\mathbf{h}_{*,t}\in \mathbb{R}^n$ the vector of hitting times from any vertex to $t$ such that $ \mathbf{h}_{*,t}[i] = h_{i,t}$ ($h_{t,t}=0$). By Tetali's formula~\cite{tetali1991random}, for any $u,v\in [n]$ and $u\neq v$, we can write the hitting time from $u$ to $v$ as:

$$h_{u,v} = \|w\|_1\cdot R_{\mathsf{eff}}(u,v) + \sum_{i\in [n]}\frac{d(i)}{2}(R_{\mathsf{eff}}(v,i) - R_{\mathsf{eff}}(i,u)).$$

By Lemma \ref{l.spectrum_to_eff} and the utility guarantee of Algorithm \ref{alg1} (Theorem \ref{t.spectral_privacy}), using Tetali's formula over the private output of Algorithm \ref{alg1} (after overlaying a complete graph of weight $1/n$ if necessary as in the proof of Theorem \ref{t.app_in_commute}), directly implies that there is an algorithm such that with high probability,  for every $t\in [n]$, it outputs a $\hat{{\mathbf{h}}}_{*,t}\in \mathbb{R}^n$ where

$$\|\hat{\mathbf{h}}_{*,t} - {\mathbf{h}}_{*,t}\|_{\infty} = O\left(\frac{\Delta(G)(\|w\|_1 + nW_{\max}\Delta(G))}{\lambda^2(G)} \cdot \frac{\log^2(n)}{\epsilon}\right).$$

If we further consider the approximation in terms of the $\ell_2$ norm, we need to pay an extra $\sqrt{n}$ factor since $\|x\|_2 \leq \sqrt{n}\|x\|_{\infty}$ for any $x\in \mathbb{R}^n$. To obtain a more accurate approximation, we propose another approach which is based on  $L_G^\dagger$ instead of all-pairs effective resistances.
\begin{theorem}\label{t.app_in_hitting_time}
    Fix a $\epsilon>0$. There is an efficient and $(\epsilon,0)$-differentially private algorithm $\mathcal{A}''$ which takes as input a connected graph $G$ with weighted vector $w \in \mathbb{R}_{+}^{n\choose 2}$, $\lambda(G) = \Omega\left(\frac{\Delta(G)\log^2(n)}{\epsilon}\right)$ and $\frac{\Delta(G)\log^2(n)}{\epsilon \lambda(G)}<{1 \over 2}$, and outputs $n$ vectors $\{\hat{\mathbf{h}}_{*,t}\}_{t\in [n]}\in (\mathbb{R}^n)^n$ such that with high probability, for any $t\in [n]$,
    $$\|\hat{\mathbf{h}}_{*,t} - {\mathbf{h}}_{*,t}\|_2 = O\left(\frac{\sqrt{n}\Delta(G)\left(\|w\|_1 + \frac{\sqrt{n}\log(n)}{\epsilon}\right)}{\lambda^2(G)}  \cdot\frac{\log^2(n)}{\epsilon}\right).$$
\end{theorem}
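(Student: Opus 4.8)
The plan is to rewrite the whole hitting-time vector $\mathbf{h}_{*,t}$ as an explicit expression in $L_G^\dagger$, the total weight $\|w\|_1$, and the weighted-degree vector $d$ (with $d(u)=(L_G)_{uu}$), then use Algorithm~\ref{alg1} to obtain a private pseudoinverse, add the remaining scalar/vector pieces with the Laplace mechanism, and propagate all errors through the expression using \Cref{l.spectrum_to_eff}. First I would vectorize Tetali's formula: substituting $R_{\mathsf{eff}}(u,v)=(L_G^\dagger)_{uu}+(L_G^\dagger)_{vv}-2(L_G^\dagger)_{uv}$ into $h_{u,t}=\|w\|_1R_{\mathsf{eff}}(u,t)+\sum_i\tfrac{d(i)}{2}\bigl(R_{\mathsf{eff}}(t,i)-R_{\mathsf{eff}}(i,u)\bigr)$ and using $\sum_i d(i)=2\|w\|_1$, the diagonal ``self'' terms $(L_G^\dagger)_{uu}$ cancel, leaving
\[
\mathbf{h}_{*,t}=-2\|w\|_1\,L_G^\dagger\mathbf{e}_t+L_G^\dagger d+\bigl(2\|w\|_1(L_G^\dagger)_{tt}-(L_G^\dagger d)_t\bigr)\mathbf{1}_n ,
\]
whose correctness is checked by observing its $t$-th coordinate vanishes (matching $h_{t,t}=0$). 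Two facts I would record for later: $L_G^\dagger\mathbf{1}_n=0$, so $L_G^\dagger d=L_G^\dagger(d-\bar d\,\mathbf{1}_n)$ with $\bar d:=2\|w\|_1/n$; and $\|L_G^\dagger\|_2=1/\lambda(G)$ for connected $G$.

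\textbf{The algorithm and its privacy.} I would run Algorithm~\ref{alg1} on $G$ with budget $\epsilon/3$ to get $\widetilde G$, and overlay a copy of $K_n$ of uniform weight $1/n$ if $\widetilde G$ is disconnected, yielding a connected $\widehat G$ (exactly as in the proof of \Cref{t.app_in_commute}); then independently release $\widehat W:=\|w\|_1+\texttt{Lap}(3/\epsilon)$ (the $\ell_1$-sensitivity of $\|w\|_1$ is $1$) and a noisy degree vector $\widehat d:=d+\nu$ where $\nu$ has i.i.d.\ $\texttt{Lap}(6/\epsilon)$ entries (the $\ell_1$-sensitivity of $d$ is $2$). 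The output is, for every $t\in[n]$,
\[
\widehat{\mathbf{h}}_{*,t}:=-2\widehat W\,L_{\widehat G}^\dagger\mathbf{e}_t+L_{\widehat G}^\dagger\widehat d+\bigl(2\widehat W(L_{\widehat G}^\dagger)_{tt}-(L_{\widehat G}^\dagger\widehat d)_t\bigr)\mathbf{1}_n .
\]
Privacy is immediate from \Cref{t.spectral_privacy}, \Cref{l.laplace}, adaptive composition (\Cref{l.adaptive_composition}) and post-processing (\Cref{l.post_processing}): the procedure is $(\epsilon,0)$-DP, and since all $n$ output vectors are a single post-processing of $(\widehat G,\widehat W,\widehat d)$, releasing all of them costs nothing extra.

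\textbf{Utility.} Let $\zeta:=O(\Delta(G)\log^2 n/\epsilon)$ be the spectral error from \Cref{t.spectral} (budget $\epsilon/3$) and condition on its success together with the high-probability events $|\widehat W-\|w\|_1|=O(\log n/\epsilon)$ and $\|\nu\|_\infty=O(\log n/\epsilon)$ (hence $\|\nu\|_2=O(\sqrt n\log n/\epsilon)$). Since $\|L_{\widehat G}-L_G\|_2\le\zeta+O(1)$ and the hypotheses are $\lambda(G)=\Omega(\zeta)$ and $\zeta/\lambda(G)<1/2$, \Cref{l.spectrum_to_eff} (with $u=1/\lambda(G)$) gives $\|L_G^\dagger-L_{\widehat G}^\dagger\|_2=O(\zeta/\lambda^2(G))$ and therefore $\|L_{\widehat G}^\dagger\|_2=O(1/\lambda(G))$. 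Subtracting the two displayed formulas, I would bound the three resulting groups in $\ell_2$ by the triangle inequality, splitting each product ``perturbed scalar/matrix $\times$ perturbed vector'' into error$\,\times\,$(approximate quantity) cross terms: for the $L^\dagger d$ group use $\|L_{\widehat G}^\dagger\nu\|_2\le\|L_{\widehat G}^\dagger\|_2\|\nu\|_2$ together with $\|(L_G^\dagger-L_{\widehat G}^\dagger)d\|_2=\|(L_G^\dagger-L_{\widehat G}^\dagger)(d-\bar d\,\mathbf{1}_n)\|_2\le\|L_G^\dagger-L_{\widehat G}^\dagger\|_2\,\|d\|_2$ and $\|d\|_2\le\|d\|_1=2\|w\|_1$; and the rank-one correction picks up a factor $\|\mathbf{1}_n\|_2=\sqrt n$ multiplying the coordinate-wise versions of the first two groups (using that $\nu$ is independent of $\widehat G$ to control the relevant coordinates of $L_{\widehat G}^\dagger\nu$). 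Collecting the terms, every contribution is $O\!\bigl(\tfrac{\sqrt n\,\Delta(G)(\|w\|_1+\sqrt n\log n/\epsilon)\log^2 n}{\epsilon\,\lambda^2(G)}\bigr)$, and a union bound over $t\in[n]$ finishes the proof.

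\textbf{Main obstacle.} There is no conceptual leap past the vectorization; the work is in the last paragraph's bookkeeping, namely making sure the two $\sqrt n$ factors — one from $\|\mathbf{1}_n\|_2$ in the rank-one correction, one from $\|\nu\|_2\le\sqrt n\|\nu\|_\infty$ — combine into exactly the $\sqrt n(\|w\|_1+\sqrt n\log n/\epsilon)$ in the claimed bound and no worse, and verifying that the cross terms not carrying a $\|w\|_1$ are subdominant. The cancellations $L_G^\dagger\mathbf{1}_n=0$ and $(L_G^\dagger-L_{\widehat G}^\dagger)\mathbf{1}_n=0$ (both $G$ and $\widehat G$ are connected) are what prevent the degree-perturbation terms from being amplified by a factor $\|w\|_1$, and the hypothesis $\zeta/\lambda(G)<1/2$ is used both to bound $\|L_{\widehat G}^\dagger\|_2$ through \Cref{l.spectrum_to_eff} and to keep those weight-free terms dominated.
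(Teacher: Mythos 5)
Your proposal follows essentially the same route as the paper. You reach the representation $\mathbf h_{*,t}=L_G^\dagger(d-2\|w\|_1\mathbf e_t)+c_t\mathbf 1_n$ by vectorizing Tetali's formula, while the paper reaches the identical representation directly from the one-step recurrence $L_G\mathbf h_{*,t}=\mathbf d^{t,G}$ with $\mathbf d^{t,G}=d-2\|w\|_1\mathbf e_t$; both then run Algorithm~\ref{alg1} plus a noisily released degree vector, pass everything through $L_{\widehat G}^\dagger$, and propagate errors via \Cref{l.spectrum_to_eff} together with the $(\sqrt n+1)$ blow-up from the $\mathbf 1_n$-correction. Two small remarks: the paper recovers $2\|w\|_1$ as $\sum_i\hat{\mathbf d}[i]$ instead of releasing $\widehat W$ separately, which keeps $\hat{\mathbf d}^t\in\mathbf 1_n^\perp$ exactly (as you observe, this is not strictly needed since $L_{\widehat G}^\dagger$ annihilates $\mathbf 1_n$ anyway); and your stated budget split is off by a constant --- Algorithm~\ref{alg1} invoked with parameter $\epsilon/3$ is $(4\epsilon/3,0)$-DP by \Cref{t.spectral_privacy}, so your composition gives $2\epsilon$, not $\epsilon$; allocating roughly $\epsilon/12$ to Algorithm~\ref{alg1} (or $\epsilon/8$ with a single Laplace release, as the paper does) fixes this without affecting the asymptotic utility.
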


\begin{proof}
    We first relate the hitting time and graph Laplacian by standard techniques. Fix a vertex $t\in V$. Let $h_{u,t}$ be the hitting time from $u$ to $t$, for any $u\neq t$ and $u\in V$. It is easy to verify that 
    \begin{equation}\label{e.hitting}
        h_{u,t} = 1+ \frac{w_{u,v}}{d(u)}\sum_{v\sim u} h_{v,t}
    \end{equation}
    Now, consider $ \mathbf{h}_{*,t}\in \mathbb{R}^n$ where $ \mathbf{h}_{*,t}[i] = h_{i,t}$. By re-arranging items in \Cref{e.hitting}, we see that $ \mathbf{h}_{*,t}$ satisfies
    \begin{equation}\label{e.hitting2}
        L_G \mathbf{h}_{*,t} = \mathbf{d}^{t,G},
    \end{equation}
    where 
    $$
    \mathbf{d}^{t,G} = 
    \begin{cases}
        d(u) & u \neq t \\
        d(t) - 2\|w\|_1, &\text{otherwise,}
    \end{cases}
    $$
    and $w$ is the weight vector of the graph $G$.
    Since $\mathbf{d}^{t,G} \in \mathbf{1}_n^\perp$, then $\mathbf{h}_{*,t}$ from \Cref{e.hitting2} has solution $\mathbf{h}_{*,t} = \{L_G^\dagger\mathbf{d}^{t,G} + c\mathbf{1}_n, c\in \mathbb{R}\}$. Thus, we design the algorithm to approximate hitting times as follows:
\begin{enumerate}
  \item Given graph $G$, run Algorithm \ref{alg1} on $G$ with privacy budget $\epsilon/8$.
  \item Sample $n$ i.i.d Laplace noise $z_i\sim \texttt{Lap}(4/\epsilon)$.
  \item Publish $\hat{\mathbf{d}}\in \mathbb{R}^n$ as the perturbed degree sequence, where $\hat{\mathbf{d}}[i] = \mathbf d(i)+z_i$. And for any $t\in [n]$, let $\hat{\mathbf{d}}^t = \mathbf{d}^{t,G} + \mathbf{z}^t$, where 
   $$
  \mathbf{z}^t = \left\{
  \begin{aligned}
    &z_u, &\text{if }u\neq t;\\
    &z_t - \sum_{i\in [n]} z_i &\text{otherwise}.
  \end{aligned}
  \right.
  $$
  \item For any $t\in [n]$, compute $\hat{c}_t = -\left(L_{\widehat{G}}^\dagger  \hat{\mathbf{d}}^t\right)[t]$ (which is the $t$-th element of $L_{\widehat{G}}^\dagger  \hat{\mathbf{d}}^t$), and output $$\hat{\mathbf{h}}_{*,t} =L_{\widehat{G}}^\dagger  \hat{\mathbf{d}}^t + \hat{c}_t \mathbf{1}_n.$$
\end{enumerate}

For the privacy part of this algorithm, consider any fixed $t$, $\hat{\mathbf{d}}^t$ can be obtained by $\hat{\mathbf{d}}$. Also, changing the weight of one edge by at most $1$ affects the degrees of two vertices. Thus, by Theorem \ref{t.spectral_privacy} and Lemma \ref{l.laplace}, this algorithm preserves $(\epsilon,0)$-differential privacy. 

We note that $\hat{\mathbf{d}}$ might have negative entries. However, we only use $L_{\widehat{G}}^\dagger  \hat{\mathbf{d}}^t$ to approximate $L_{{G}}^\dagger  {\mathbf{d}}^{t,G}$, instead of considering initiating a random walk on the synthetic graph $\widehat{G}$.

For the utility part, let $t$ be a fixed vertex. Clearly we have that $\mathbf{h}_{*,t} = L_G^\dagger \mathbf{d}^{t,G} + c_t\mathbf{1}_n$, where $c_t$ satisfies that $c_t + (L_G^\dagger \mathbf{d}^{t,G})[t] = 0$. Here, $(L_G^\dagger \mathbf{d}^{t,G})[t]$ is the $t$-th element in $ (L_G^\dagger \mathbf{d}^{t,G})[t]$. By the tail bound of linear combinations of Laplace noise, with probability at least $1-{1\over \text{poly}(n)}$, $$\sum_{i\in [n]} z_i = O\left({\sqrt{n}\log(n) \over \epsilon} \right),$$ 
and for any $i\in [n]$, $|z_i|\leq {\log(n) \over \epsilon}$. Therefore, with a high probability,
$$\|\mathbf{z}^t\|_2 \leq \|\mathbf{z}\|_2 + O\left({\sqrt{n}\over \epsilon}\right),$$
which is at most $O(\sqrt{n}\log(n)/\epsilon)$. Thus,

\begin{equation*}
  \begin{aligned}
    \left\|L_G^\dagger \mathbf{d}^{t,G} - L_{\widehat{G}}^\dagger  \hat{\mathbf{d}}^t\right\|_2 &= \left\|L_G^\dagger \mathbf{d}^{t,G} - L_{\widehat{G}}^\dagger  \mathbf{d}^{t,G} + L_{\widehat{G}}^\dagger\mathbf{z}^t\right\|_2\\
    &\leq  \left\|L_G^\dagger - L_{\widehat{G}}^\dagger \right\|_2 \cdot \|\mathbf{d}^{t,G}\|_2 + \left\|L_{\widehat{G}}^\dagger\right\|_2\cdot \|\mathbf{z}^t\|_2\\
    &\leq  \left\|L_G^\dagger - L_{\widehat{G}}^\dagger \right\|_2 \cdot \|\mathbf{d}^{t,G}\|_2 + \left(\left\|L_{{G}}^\dagger\right\|_2 + \left\|L_{\widehat{G}}^\dagger - L_{{G}}^\dagger \right\|_2\right)\cdot \|\mathbf{z}^t\|_2\\
    &= O\left(\frac{\|w\|_1}{\lambda^2(G)} \cdot \frac{\Delta(G)\log^2(n)}{\epsilon}\right) + O\left({\sqrt{n} \over \lambda^2(G)}\cdot \frac{\Delta(G)\log ^3 n}{\epsilon^2} \right).
  \end{aligned} 
\end{equation*}
Therefore, for any fixed $t\in [n]$,
\begin{equation*}
\begin{aligned}
  \|\mathbf{h}_{*,t} - \hat{\mathbf{h}}_{*,t} \|_2 &= \left\|L_G^\dagger \mathbf{d}^{t,G} + c_t \mathbf{1}_n - L_{\widehat{G}}^\dagger  \hat{\mathbf{d}}^t - \hat{c}_t \mathbf{1}_n \right\|_2\\
  &\leq \left\|L_G^\dagger \mathbf{d}^{t,G} - L_{\widehat{G}}^\dagger  \hat{\mathbf{d}}^t\right\|_2 + (c_t - \hat{c}_t)\cdot \|\mathbf{1}_n\|_2\\
  &\leq (\sqrt{n}+1)\cdot \left\|L_G^\dagger \mathbf{d}^{t,G} - L_{\widehat{G}}^\dagger  \hat{\mathbf{d}}^t\right\|_2\\
  &= O\left(\frac{\|w\|_1}{\lambda^2(G)} \cdot \frac{\sqrt{n}\Delta(G)\log^2(n)}{\epsilon} + {n \over \lambda^2(G)}\cdot \frac{\Delta(G)\log ^3 n}{\epsilon^2}  \right),
\end{aligned}
\end{equation*}
which completes the proof.
 \end{proof}

\paragraph{Comparison of the Laplace mechanism and our spectral approach}

In Section \ref{s.app}, we have showed that our algorithm implies an approximation on all-pairs effective resistances with error at most $\widetilde{O}\left(\Delta(G)/\epsilon\cdot\frac{1}{\lambda^2(G)}\right)$, and we achieve this utility with pure DP. Equivalently, our spectral approximation matches the sensitivity of one-pair effective resistance up to a factor of $\Delta(G)$. Compared to the Laplace mechanism, we are able to release a synthetic graph and reduce the error by a factor of $\Delta(G)/n$, while also achieving pure differential privacy. However, using the Laplace mechanism requires a much-relaxed assumption on the range of $\lambda(G)$, which means that the Laplace mechanism could serve as an alternative to our spectral approach in the case where $\lambda(G)$ fails to satisfy the assumption stated in Theorem \ref{t.app_in_commute}, Theorem \ref{t.app_in_covering_time} and Theorem \ref{t.app_in_hitting_time}.

\section{Differentially Private Cut Approximation}
As mentioned in the introduction, the algorithm for spectral approximation also guarantees cut approximation but has an error that scales with maximum degree. In this section, we describe our algorithm for private cut approximation with optimal error rate. 

\subsection{The Algorithm}
Let $N = {n\choose 2}$, recall that given an undirected weighted graph $G = (V,E, w)$, our goal is to output a synthetic graph $\widehat{G} = (V, \widehat E, \widehat w)$ with $\widehat w \in \mathbb{R}^N$ efficiently and privately, while $\widehat{G}$ approximates $G$ in terms of the sizes of all $(S,T)$-cuts up to an additive error $\widetilde{O}(\sqrt{n|E|}/\epsilon)$. One of the subroutines to achieve this is to use the topology sampler $\texttt{TS}_{\epsilon}$ (Section \ref{s.sampler}), which samples from a distribution over some non-convex set with the constraint on sparsity.

To achieve the optimal error on cut approximation, we combine our topology sampler and private mirror descent (\cite{eliavs2020differentially}). We use private mirror descent as a black box:

\begin{theorem}[\cite{eliavs2020differentially}]\label{t.pmd}
  Fix $0<\epsilon,\delta<1/2$, there is a $(\epsilon,\delta)$-differentially private algorithm such that for any weighted graph $G$ with $n$ vertices, it output a synthetic graph $\widetilde{G} = (V,\widetilde E, \widetilde w)$ with $\widetilde w \in \mathbb{R}_{+}^N$ such that with probability at least $1-\beta$, for all disjoint $S,T\subseteq [n]$,
  $$|\Phi_{G}(S,T) - \Phi_{\widehat{G}}(S,T)| = O\left( \sqrt{\frac{n\|w\|_1}{\beta^2\epsilon}}\log^2\left(\frac{n}{\delta}\right)\right).$$
  Further, this algorithm runs in time $\widetilde{O}(n^7)$.
\end{theorem}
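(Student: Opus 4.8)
The plan is to instantiate the private multiplicative-weights / iterative-database-construction paradigm of Hardt--Rothblum and Gupta--Roth--Ullman — which, over the probability simplex with an entropic regularizer, is exactly private mirror descent — on the family of cut queries. Identify $G$ with its weight vector $w\in\mathbb{R}_+^N$, $N=\binom n2$, indexed by potential edges. Each disjoint pair $(S,T)$ gives a query vector $q_{S,T}\in\{0,1\}^N$ with $(q_{S,T})_e=1$ iff $e$ has exactly one endpoint in $S$ and one in $T$, so that $\Phi_G(S,T)=\langle q_{S,T},w\rangle$; moreover changing one edge weight by at most $1$ changes every $\langle q,w\rangle$ by at most $1$. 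There are at most $3^n$ such queries, so $\log|Q|=O(n)$, while $\log N=O(\log n)$; this family already contains all the cuts we must answer, so we do not even need the $(S,T)$-to-$(S,\bar S)$ reduction from the preliminaries.

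The algorithm first releases a noisy mass $\widehat W=\|w\|_1+\texttt{Lap}(1/\epsilon)$ (sensitivity $1$) and initializes $\widehat w^{(0)}$ to be uniform with $\ell_1$-mass $\widehat W$. It then runs $T$ rounds: in round $t$, (i) the exponential mechanism with budget $\epsilon_0$ and the sensitivity-$1$ score $q\mapsto|\langle q, w-\widehat w^{(t-1)}\rangle|$ selects a cut $(S_t,T_t)$ whose residual error is within $O(\log|Q|/\epsilon_0)$ of the worst, with high probability; (ii) the Laplace mechanism with budget $\epsilon_0$ releases $m_t\approx\Phi_G(S_t,T_t)$; (iii) we perform the entropic update $\widehat w^{(t)}_e\propto\widehat w^{(t-1)}_e\exp\!\big(\eta\,(q_{S_t,T_t})_e\,(m_t-\langle q_{S_t,T_t},\widehat w^{(t-1)}\rangle)/\widehat W\big)$, renormalized to mass $\widehat W$, with an appropriately chosen step size $\eta$. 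We output $\widehat G$ as the graph of the best (or the average) iterate. For privacy, the $2T$ invocations are each $\epsilon_0$-DP, so advanced composition (Lemma~\ref{l.adv_composition}) gives $(\epsilon,\delta)$-DP for $\epsilon_0=\Theta(\epsilon/\sqrt{T\log(1/\delta)})$, and post-processing (Lemma~\ref{l.post_processing}) covers emitting the synthetic graph (the extra $\epsilon$ spent on $\widehat W$ is absorbed).

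For utility I would track the potential $\Psi_t=\DKL{w/\|w\|_1}{\widehat w^{(t)}/\widehat W}$, which is $\le\log N+o(1)$ initially and nonnegative; the standard multiplicative-weights calculation shows that any round whose true residual error exceeds $\alpha$ decreases $\Psi$ by $\Omega(\alpha^2/\|w\|_1^2)$, provided $\alpha$ exceeds a constant times the per-round noise floor $O(\log|Q|/\epsilon_0)=O(n\sqrt{T\log(1/\delta)}/\epsilon)$. Hence at most $O(\|w\|_1^2\log N/\alpha^2)$ rounds can be ``bad'', so choosing $T\asymp\|w\|_1^2\log N/\alpha^2$ and balancing this against $\alpha\asymp n\sqrt{T\log(1/\delta)}/\epsilon$ yields $\alpha=\widetilde O(\sqrt{n\|w\|_1/\epsilon})$; a union bound over the $T$ rounds and over the noise tails produces the claimed $\log^2(n/\delta)$ and $1/\beta^2$ factors, and the guarantee holds simultaneously for all $q\in Q$, hence for all disjoint $S,T$. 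This balancing is precisely where the $\sqrt{\,\cdot\,}$ exponent comes from: plain basic composition would force $\epsilon_0=\epsilon/(2T)$ and give only the weaker $\widetilde O((\|w\|_1^2 n/\epsilon)^{1/3})$, so using advanced composition for the $T$ rounds — thus $\sqrt T$ rather than $T$ — is essential.

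The step I expect to be the real obstacle is (i): the exponential mechanism ranges over exponentially many cuts, and exactly finding the worst cut (the cut norm of the residual matrix) is NP-hard. The fix is to replace it by a polynomial-time constant-factor approximation oracle — approximately maximizing the cut norm of the residual via a Grothendieck-type semidefinite relaxation and rounding — then run the noisy-max selection on top of a (noised) approximately-optimal objective, and argue two things: that this composite selection is still $\epsilon_0$-DP, and that a constant-factor-loose oracle costs only a constant factor in the per-round progress bound and hence in $\alpha$. Granting the standard assumption that edge weights are polynomially bounded (so $\|w\|_1=\mathrm{poly}(n)$ and $T=\mathrm{poly}(n)$), each round is dominated by the $\mathrm{poly}(n)$-time SDP solve and rounding, which over the $T$ rounds totals $\widetilde O(n^7)$, giving the stated running time.
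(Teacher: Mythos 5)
This theorem is cited verbatim from Eli\'a\v{s}--Kapralov--Kulkarni--Lee (SODA 2020) and used here purely as a black box --- the paper explicitly says ``We use private mirror descent as a black box'' right before the statement --- so there is no in-paper proof for you to match; your task was effectively to reconstruct the original argument. Your reconstruction lands in the right family (private multiplicative weights over the KL geometry, which is indeed mirror descent on the simplex), and the balancing you carry out does reproduce the leading scaling: with noise floor $n\sqrt{T\log(1/\delta)}/\epsilon$ and potential budget $T\asymp\|w\|_1^2\log N/\alpha^2$ one gets $\alpha=\widetilde O(\sqrt{n\|w\|_1/\epsilon})$. You also correctly identify the real technical obstacle --- the worst residual cut is a cut-norm maximization, which is NP-hard, so the per-round selection cannot be a straight exponential mechanism and must go through an SDP/Grothendieck relaxation; that SDP solve is exactly what drives the $\widetilde O(n^7)$ running time.

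Two places diverge from what the cited theorem actually says, and one of them is a genuine error in your sketch. First, you claim the union bound ``produces the claimed $\ldots 1/\beta^2$ factors.'' That cannot be right: a union bound over $T=\mathrm{poly}(n)$ events gives $\log(T/\beta)$, i.e.\ polylogarithmic dependence on $1/\beta$, never a polynomial one. The polynomial $1/\beta$ inside the square root is the signature of a Markov-inequality step, and indeed the underlying result of Eli\'a\v{s} et al.\ is an \emph{expectation} bound on the cut norm of the residual boosted to probability $1-\beta$ by Markov --- this is even alluded to in the present paper's Discussion, which contrasts a companion result that ``holds with high probability instead of only in expectation.'' If your union-bound route were carried through it would in fact prove a strictly stronger statement (with $\mathrm{polylog}(1/\beta)$), which is fine for \emph{invoking} the theorem, but the attribution is wrong as written. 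Second, the per-round mechanism in Eli\'a\v{s} et al.\ is not ``exponential mechanism over $3^n$ cuts plus Laplace on the selected answer''; they perturb an SDP-relaxed gradient object for the residual cut norm directly and do the privacy accounting on those perturbed gradients under advanced composition. Your proposed fix --- bolt a noisy-max selection on top of an approximately-optimal SDP objective and then argue the composite step is $\epsilon_0$-DP --- is exactly the part you flag as the obstacle and leave open, and it is precisely where the heavy lifting in the original proof lives: the SDP optimum depends continuously on $w$ in a way that does not have bounded sensitivity per candidate, so neither the exponential mechanism nor report-noisy-max applies off the shelf, and the constant-factor slack from rounding has to be threaded through the potential argument. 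So: right framework, right balancing, right identification of the bottleneck, but the source of the $1/\beta$ factor is misattributed and the privacy of the SDP-based selection --- the crux of the cited proof --- is left as a gap.
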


For a weighted $n$-vertex, $m$-edge graph $G=(V,E,w)$, private mirror descent outputs a synthetic graph privately with an $\widetilde{O}\left(\sqrt{n\|w\|_1 \over \epsilon} \right)$ purely additive error (in the worst case) on approximating the sizes of all $(S,T)$-cuts. To entirely remove the dependency on the edge weights, we observe that if an edge $e\in [N]$ is not chosen by the topology sampler, then with high probability, its weight satisfies that $w_e = O(\log(n)/\epsilon)$. Therefore, the idea is to use a two-step strategy:
\begin{enumerate}
  \item Use $\texttt{TS}_\epsilon$ ensured by Theorem~\ref{t.topology_sampler} to sample and publish a new topology $\widehat{E}$;
  \item Run private mirror descent promised by Theorem~\ref{t.pmd} on the sub-graph containing edges in $E\backslash \widehat{E}$.
\end{enumerate}

From the union bound, with high probability, the sub-graph given to the private mirror descent has maximum weight at most $O(\log(n)/\epsilon)$. For the edges that are chosen for the new edge set, we use the tail bound of independent Laplace random variables to bound the error in cut approximation, which also does not rely on the edge weights. The algorithm is summarized in Algorithm \ref{alg2}. In Algorithm \ref{alg2}, we write $\beta$ as a parameter to be determined.

\begin{algorithm}[h]
	\caption{{Private cut approximation by topology sampler}}\label{alg2}
	\KwIn{A graph $G = (V,E,w)$ with $m$ edges (where $w\in \mathbb{R}^N_{+}$), privacy budgets $\epsilon, \delta$.}
	\KwOut{A synthetic graph $\widehat{G}$.}

    Let $\widehat{m}\leftarrow \min\{N, \lceil m + \texttt{Lap}(1/\epsilon) + \log(1/\beta)/\epsilon \rceil \}$ \;
    Let $\widehat{E} \leftarrow \texttt{TS}_{\epsilon}(\widehat{m},w)$\;
    \For{$e\in \widehat{E}$}{
        Draw an independent Laplace noise $Z \sim \lap(1/\epsilon)$ \;
        $\widehat{w}_e  \leftarrow  w_e + Z$ \;
    }
    Let $\widehat{G}_1 \leftarrow \left(V,\widehat{E}, \{\widehat{w}_e\}_{e\in \widehat{E}}\right)$ \;
    Run private mirror descent (\Cref{t.pmd}) on the sub-graph $\left(V, E\backslash \widehat{E}, \{w_e\}_{e\in  E\backslash \widehat{E}}\right)$ with parameter $(\epsilon, \delta)$ \;
    Let $\widehat{G}_2$ be the output of private mirror descent \;
    \Return{$\widehat{G} = \widehat{G}_1 + \widehat{G}_2$}.

\end{algorithm}

 Next, we introduce the privacy and utility guarantee of Algorithm \ref{alg2}.
\begin{theorem}
  For any $0<\epsilon,\delta<1/2$, Algorithm \ref{alg2} preserves $(5 \epsilon, \delta)$-approximate differential privacy. 
\end{theorem}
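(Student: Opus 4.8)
The plan is to decompose Algorithm~\ref{alg2} into four sub-mechanisms applied in sequence, bound each one (conditioned on the outputs of the earlier ones), and then combine them via adaptive composition (Lemma~\ref{l.adaptive_composition}) and post-processing (Lemma~\ref{l.post_processing}), charging privacy budget $\epsilon$, $2\epsilon$, $\epsilon$, $\epsilon$ respectively. View the algorithm's transcript as $(\widehat m,\ \widehat E,\ \{\widehat w_e\}_{e\in\widehat E},\ \widehat G_2)$; since the returned graph $\widehat G=\widehat G_1+\widehat G_2$ is a deterministic function of this transcript, it suffices by Lemma~\ref{l.post_processing} to bound the privacy of the transcript itself.

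First I would handle the release of $\widehat m=\min\{N,\lceil m+\texttt{Lap}(1/\epsilon)+\log(1/\beta)/\epsilon\rceil\}$: for edge-level-neighboring graphs $\|w-w'\|_0\le 1$, so the edge count $m=\|w\|_0$ has $\ell_1$-sensitivity $1$, and by the Laplace mechanism (Lemma~\ref{l.laplace}) together with post-processing (adding the data-independent constant $\log(1/\beta)/\epsilon$ and applying $\min\{N,\lceil\cdot\rceil\}$) this step is $(\epsilon,0)$-DP. Second, conditioned on any fixed value $k=\widehat m$, the step $\widehat E\leftarrow\texttt{TS}_\epsilon(k,w)$ is $(2\epsilon,0)$-DP in $w$: this is exactly the computation performed in the proof of Theorem~\ref{t.spectral_privacy}, where the ratio $\mathsf{Pr}[\texttt{TS}_\epsilon(k,w)=\widehat E]/\mathsf{Pr}[\texttt{TS}_\epsilon(k,w')=\widehat E]$ is shown to be at most $e^{2\epsilon}$ (the case $k=N$ is trivial since the sampler is then deterministic). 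Third, conditioned on a fixed topology $\widehat E$, the reweighting step outputs $(w_e+Z_e)_{e\in\widehat E}$ with $Z_e\sim\texttt{Lap}(1/\epsilon)$ i.i.d., and the query $w\mapsto(w_e)_{e\in\widehat E}$ has $\ell_1$-sensitivity $\|w-w'\|_1\le\|w-w'\|_0\cdot\|w-w'\|_\infty\le 1$, so this step is $(\epsilon,0)$-DP by Lemma~\ref{l.laplace}.

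The fourth step is the one needing the most care: conditioned on a fixed $\widehat E$, running private mirror descent (Theorem~\ref{t.pmd}) on the sub-graph $(V,\ E\setminus\widehat E,\ \{w_e\}_{e\in E\setminus\widehat E})$ must be shown $(\epsilon,\delta)$-DP in $w$. Here I would split on the single coordinate $e^*$ at which neighboring $w,w'$ differ. If $e^*\in\widehat E$, then $w$ and $w'$ agree on $[N]\setminus\widehat E$, so the two sub-graphs handed to mirror descent are literally identical and the outputs are identically distributed. If $e^*\notin\widehat E$, then the two sub-graphs differ in exactly the edge $e^*$ and by weight at most $1$, hence are edge-level neighbors, and Theorem~\ref{t.pmd} directly gives $(\epsilon,\delta)$-DP. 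Either way this step is $(\epsilon,\delta)$-DP once $\widehat E$ is fixed. Finally, chaining the four steps through adaptive composition — $\widehat m$ is $(\epsilon,0)$-DP, $\widehat E\mid\widehat m$ is $(2\epsilon,0)$-DP, $\{\widehat w_e\}_{e\in\widehat E}\mid\widehat E$ is $(\epsilon,0)$-DP, and $\widehat G_2\mid\widehat E$ is $(\epsilon,\delta)$-DP — yields a $(5\epsilon,\delta)$-DP transcript, and post-processing to $\widehat G=\widehat G_1+\widehat G_2$ preserves this. The main obstacle is precisely the fourth step: one must justify conditioning on the \emph{data-dependent} quantity $\widehat E$ (which adaptive composition legitimizes) and then verify, via the in/out-of-$\widehat E$ case analysis, that for a fixed $\widehat E$ the sub-graph passed to mirror descent changes by at most one edge of weight at most $1$ between neighboring inputs; everything else is routine application of the Laplace mechanism and composition.
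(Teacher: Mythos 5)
Your proof is correct and follows essentially the same composition-based route as the paper's: release $\widehat m$ via Laplace ($\epsilon$), sample $\widehat E$ via $\texttt{TS}_\epsilon$ ($2\epsilon$), reweight via Laplace ($\epsilon$), run private mirror descent ($(\epsilon,\delta)$), and chain by adaptive composition (Lemma~\ref{l.adaptive_composition}) to obtain $(5\epsilon,\delta)$-DP. Your case analysis on whether the differing coordinate $e^*$ lies in $\widehat E$ — which verifies that, conditioned on any fixed $\widehat E$, the sub-graph $(V,\,E\setminus\widehat E,\,\{w_e\}_{e\in E\setminus\widehat E})$ handed to mirror descent changes by at most one edge of weight at most one between neighboring inputs — is a detail the paper's terse proof leaves implicit but is a correct and worthwhile step to spell out.
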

\begin{proof}
  For any fixed $0<\epsilon,\delta<1/2$, note the topology sampler $\texttt{TS}_{\epsilon}$ preserves $(2\epsilon,0)$-pure differential privacy (\Cref{t.spectral_privacy}), and the private mirror descent is $(\epsilon,\delta)$-approximate differentially private (\Cref{t.pmd}). Then, this theorem directly follows from the privacy guarantee of the Laplace mechanism (Lemma \ref{l.laplace}) and the adaptive composition of differential privacy (Lemma \ref{l.adaptive_composition}).
\end{proof}

Before presenting the utility of Algorithm \ref{alg2}, we introduce the tail inequality of independent combinations of Laplace noise:

\begin{lemma}
  [Gupta et al.~\cite{gupta2012iterative}]
  \label{l.tail_of_laplace}Given $k\in \mathbb{N}_{>0}$ and $b>0$. Let $\{Z_i\}_{i\in [k]}$ be i.i.d random variables such that $Z_i\in \texttt{Lap}(b)$, then for any $q_1,\cdots q_k\in [0,1]$, 
  $$\mathsf{Pr}\left[\sum_{i\in [k]} q_iZ_i > \alpha\right] \leq \left\{
      \begin{aligned}
          &\exp\left(-\frac{\alpha^2}{6kb^2}\right), \quad &\alpha \leq kb, \\
          &\exp\left(-\frac{\alpha}{6kb}\right), &\alpha > kb.
      \end{aligned}
      \right.$$
  \end{lemma}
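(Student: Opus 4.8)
The plan is a textbook moment-generating-function (Chernoff) argument exploiting two facts: a Laplace random variable is sub-exponential, and the coefficients $q_i\in[0,1]$ can only shrink its MGF. First recall that for $Z\sim\texttt{Lap}(b)$ and any $|t|<1/b$, a direct integration of the density gives $\mathbb{E}[e^{tZ}]=(1-t^2b^2)^{-1}$. Hence for each $i$, since $0\le q_i\le 1$, we get $\mathbb{E}[e^{tq_iZ_i}]=(1-t^2q_i^2b^2)^{-1}\le(1-t^2b^2)^{-1}$ whenever $|t|<1/b$; writing $S=\sum_{i\in[k]}q_iZ_i$ and invoking independence, $\mathbb{E}[e^{tS}]\le(1-t^2b^2)^{-k}$ on that range of $t$.

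Next I apply Markov's inequality: for every $0<t<1/b$, $\mathsf{Pr}[S>\alpha]\le e^{-t\alpha}(1-t^2b^2)^{-k}$. Restricting to $t\le 1/(2b)$ (so that $x:=t^2b^2\le 1/4$) and using the elementary inequality $-\ln(1-x)\le x+x^2\le\tfrac32 x$, valid on $[0,\tfrac12]$, this becomes $\mathsf{Pr}[S>\alpha]\le\exp\!\big(-t\alpha+\tfrac32 kt^2b^2\big)$. It then remains only to optimize $t$ in the two regimes. \textbf{Small deviations,} $\alpha\le kb$: take $t=\alpha/(3kb^2)$, which satisfies $t^2b^2=\alpha^2/(9k^2b^2)\le 1/9$ and is thus admissible; substituting gives exponent $-\alpha^2/(3kb^2)+\alpha^2/(6kb^2)=-\alpha^2/(6kb^2)$, the claimed bound. \textbf{Large deviations,} $\alpha>kb$: take $t=1/(3b)$ (admissible); then $-t\alpha+\tfrac32 kt^2b^2=-\alpha/(3b)+k/6\le-\alpha/(3b)+\alpha/(6b)=-\alpha/(6b)\le-\alpha/(6kb)$ since $k\ge1$, again matching the statement.

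I expect no genuinely hard step: the only care needed is (i) keeping $t$ strictly below $1/b$ so the Laplace MGF stays finite, and (ii) choosing an elementary bound for $-\ln(1-x)$ precise enough to land on the constant $6$ in the exponent. Note in particular that the factor $k$ in the denominator of the large-deviation branch is slack — the argument actually yields $\alpha/(6b)$ — so reproducing the stated form is automatic. A slicker alternative would be to observe that each $q_iZ_i$ is $(\nu^2,\beta)$-sub-exponential with $\nu^2=2b^2$ and $\beta=b$ and then quote a standard Bernstein-type tail inequality for sums of independent sub-exponential variables, but the self-contained MGF computation above is equally short and keeps every constant explicit.
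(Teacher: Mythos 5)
Your proof is correct. The paper itself gives no argument for this lemma --- it is quoted from Gupta, Roth, and Ullman (``Iterative constructions and private data release'') as a black-box concentration bound --- so there is no ``paper proof'' to compare against, but your self-contained Chernoff-style derivation does reproduce the statement faithfully. I verified each step: the Laplace MGF $\mathbb{E}[e^{tZ}]=(1-t^2b^2)^{-1}$ for $|t|<1/b$; the monotonicity in $q_i\in[0,1]$ giving $\mathbb{E}[e^{tq_iZ_i}]\le(1-t^2b^2)^{-1}$; the elementary bound $-\ln(1-x)\le x+x^2\le\tfrac32 x$ on $[0,\tfrac12]$ (one can check $x+x^2+\ln(1-x)$ vanishes at $0$ and has nonnegative derivative $x(1-2x)/(1-x)$ there); and the two choices $t=\alpha/(3kb^2)$ and $t=1/(3b)$, both comfortably inside the admissible region $t^2b^2\le 1/4$, which land exactly on the exponents $-\alpha^2/(6kb^2)$ and $-\alpha/(6b)\le-\alpha/(6kb)$. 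As you note, the large-deviation branch you actually prove is stronger by a factor of $k$ in the exponent than what the lemma claims, which is harmless. The only cosmetic remark is that the statement as quoted in the paper is a one-sided tail (no absolute value on the sum), and your argument is indeed one-sided, so the two match; had the two-sided version been intended, a union bound with the symmetric $t<0$ choice would double the constant in front, not the exponent.
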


\begin{theorem}\label{t.optimal_error}
  Let $\epsilon \in \left(\frac{1}{n},\frac{1}{2} \right)$ and $0<\delta<\frac{1}{2}$ be the privacy parameters. For any input weighted graph $G$ with $n$ vertices and $m$ edges $(m\geq n)$, with probability at least $1-o(1)$, Algorithm \ref{alg2} outputs a synthetic graph such that for all disjoint $S,T\subseteq [n]$,
  $$\left|\Phi_{G}(S,T) - \Phi_{\widehat{G}}(S,T)\right| = O\left( {\frac{\sqrt{nm}}{\epsilon}}\log^3\left(\frac{n}{\delta}\right)\right).$$
\end{theorem}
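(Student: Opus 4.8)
The plan is to bound the two pieces of the output $\widehat G=\widehat G_1+\widehat G_2$ separately against the corresponding pieces of $G$. Writing $\widehat E$ for the topology sampled in Line~2 and decomposing $G=G|_{\widehat E}+G|_{E\setminus\widehat E}$ (the first summand carries $w_e$ on $E\cap\widehat E$ and is $0$ on $\widehat E\setminus E$), the linearity of the cut function in the edge weights gives, for every disjoint $S,T$,
\[
|\Phi_G(S,T)-\Phi_{\widehat G}(S,T)| \le |\Phi_{G|_{\widehat E}}(S,T)-\Phi_{\widehat G_1}(S,T)| + |\Phi_{G|_{E\setminus\widehat E}}(S,T)-\Phi_{\widehat G_2}(S,T)|.
\]
So it suffices to control each term uniformly over all disjoint $(S,T)$ on a single event of probability $1-o(1)$. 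Throughout I would fix a slowly vanishing $\beta$ — concretely $\beta=1/\sqrt{\log n}$ — and use it simultaneously as the slack parameter in Algorithm~\ref{alg2} and as the failure parameter inside the mirror-descent black box (\Cref{t.pmd}).

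First, conditioning on the Laplace tail, with probability $1-O(\beta)$ we have $m\le\widehat m\le m+O(\log(1/\beta)/\epsilon)=m+O(\log\log n/\epsilon)$; since $\epsilon>1/n$ and $m\ge n$ this gives $\widehat m=O(m\log\log n)$ and $\widehat m\ge m\ge n$. The key structural input is the argument of case~(2) in the proof of \Cref{l.4}, applied to the topology-sampling step of Algorithm~\ref{alg2} (with $\widehat m$ in place of $m$ and $\gamma=1/(c\log n)$): because the sampler retains every heavy edge, with probability $1-n^{-\Omega(1)}$ every edge $e\in E\setminus\widehat E$ has $w_e=O(\log n/\epsilon)$. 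Hence the subgraph $G':=G|_{E\setminus\widehat E}$ passed to mirror descent has $\|w'\|_1\le|E\setminus\widehat E|\cdot\max_e w_e=O(m\log n/\epsilon)$. Plugging this into \Cref{t.pmd} with failure parameter $\beta$ gives, with probability $1-\beta$ and for all disjoint $S,T$,
\[
|\Phi_{G'}(S,T)-\Phi_{\widehat G_2}(S,T)| = O\!\left(\sqrt{\tfrac{n\|w'\|_1}{\beta^2\epsilon}}\,\log^2\!\tfrac n\delta\right) = O\!\left(\tfrac{\sqrt{\log n}}{\beta}\cdot\tfrac{\sqrt{mn}}{\epsilon}\log^2\!\tfrac n\delta\right).
\]

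For the reweighting term, $\Phi_{\widehat G_1}(S,T)-\Phi_{G|_{\widehat E}}(S,T)=\sum_{e\in\widehat E:\,e\text{ crosses }(S,T)} Z_e$ is a $\{0,1\}$-combination of at most $\widehat m$ i.i.d.\ $\texttt{Lap}(1/\epsilon)$ variables. When $m$ exceeds a large constant times $n$ (so $\widehat m\ge m\gg n$), I would apply \Cref{l.tail_of_laplace} with $\alpha=\Theta(\sqrt{\widehat m n}/\epsilon)$, which lies in the sub-Gaussian regime $\alpha\le\widehat m/\epsilon$, obtaining per-cut failure probability $e^{-\Omega(n)}$ and hence $o(1)$ after a union bound over the $\le 3^n$ disjoint pairs; this bounds the term by $O(\sqrt{\widehat m n}/\epsilon)=O(\sqrt{mn\log\log n}/\epsilon)$ uniformly. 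When $m=O(n)$ the concentration is too weak to survive a $3^n$-union bound, so instead I would use the crude estimate $\max_{e\in\widehat E}|Z_e|=O(\log n/\epsilon)$ (union bound over $\le\widehat m$ draws), giving $|\sum Z_e|\le\widehat m\cdot O(\log n/\epsilon)=O(n\log^2 n/\epsilon)$, which already beats the target since $m\ge n$ forces $\sqrt{mn}\ge n$. Adding the two terms, the total error is $O\!\big(\tfrac{\sqrt{\log n}}{\beta}\cdot\tfrac{\sqrt{mn}}{\epsilon}\log^2(n/\delta)\big)$; with $\beta=1/\sqrt{\log n}$ the factor $\sqrt{\log n}/\beta$ becomes $\log n\le\log(n/\delta)$ (using $\delta<1/2$), yielding the claimed $O(\sqrt{mn}/\epsilon\cdot\log^3(n/\delta))$, and a union bound over the failure events (total probability $O(\beta)+n^{-\Omega(1)}$) leaves probability $1-o(1)$.

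I expect the main obstacle to be the bookkeeping around $\beta$: the same parameter controls (a) the probability that $\widehat m\ge m$ (needed so the sampler can cover $E$), (b) the additive $O(\log(1/\beta)/\epsilon)$ inflation of $\widehat m$, and (c) the $1/\beta$ blow-up of the mirror-descent error, and it must be chosen to make all failure probabilities $o(1)$ while keeping the error inside the $\log^3(n/\delta)$ budget; the sparse regime $m=O(n)$, where one cannot rely on concentration of Laplace noise over a cut against the exponential union bound, is the other spot that needs a separate elementary argument. The genuinely load-bearing idea, though, is already isolated in \Cref{l.4}: the topology sampler provably keeps every edge of weight $\omega(\log n/\epsilon)$, so the residual graph fed to mirror descent has $\ell_1$-mass $O(m\log n/\epsilon)$ rather than the uncontrolled $\|w\|_1$ — this is precisely what removes the weight dependence and matches the lower bound in \Cref{t.intro_cut}.
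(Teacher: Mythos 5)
Your proposal is correct and follows the paper's strategy: decompose the error across the sampled topology $\widehat E$, invoke case~(2) of \Cref{l.4} to deduce that every edge dropped by the sampler has weight $O(\log n/\epsilon)$ so that the residual graph handed to mirror descent has $\ell_1$-mass $O(m\log n/\epsilon)$, apply \Cref{t.pmd} to that piece, and control the re-weighting noise on $\widehat E$ via \Cref{l.tail_of_laplace}. The one tactical difference is in this last step. In \Cref{l.topology_is_known} the paper pads the crossing edges of each cut into a set $\tau(\widehat E)$ of size $\min\{\widehat m\lceil\log n\rceil,{n\choose 2}\}$, which places the deviation $\alpha=\Theta(\sqrt{n|\tau(\widehat E)|\log n}/\epsilon)$ in the sub-Gaussian branch for all $m\ge n$ at once and yields a per-cut failure probability $\exp(-\Theta(n\log n))$, comfortably dominating the $2^{2n}$ union bound. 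You instead split on whether $m$ exceeds a large constant multiple of $n$, using the sub-Gaussian tail directly with $\alpha=\Theta(\sqrt{\widehat m n}/\epsilon)$ and a per-cut failure of $e^{-\Omega(n)}$ in the dense regime (where the hidden constant must exceed $\sqrt{6\ln 3}\approx 2.57$ for the union bound to close, and the admissibility condition $\alpha\le\widehat m/\epsilon$ is exactly $\widehat m\ge C^2 n$), and falling back to the coarse estimate $\widehat m\cdot\max_e|Z_e|=O(n\log n\log\log n/\epsilon)$ in the sparse regime, which is fine since $\sqrt{mn}\ge n$. Both reach the target $O(\sqrt{mn}\log^3(n/\delta)/\epsilon)$; the paper's padding avoids the case split and the constant chase, while your version is more elementary but slightly more delicate at the boundary between the two regimes. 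The choice $\beta=1/\sqrt{\log n}$ versus the paper's $1/\log^{0.25}n$ makes no asymptotic difference.
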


\begin{proof}
  Let $G = (V,E,\{w_e\}_{e\in E})$ with $|E| = m$. Suppose the topology sampler $\texttt{TS}_\epsilon$ chooses $\widehat{E} \subseteq [N]$ with size $|\widehat{E}| = \widehat{m}$. Let $G_1 = (V, \widehat{E} , \{w_e\}_{e\in \widehat{E}})$ and $G_2 = (V, E\backslash \widehat{E}, \{w_e\}_{e\in  E\backslash \widehat{E}})$, where $G_2$ is exactly the input of private mirror descent in Algorithm \ref{alg2}. Clearly $G = G_1 + G_2$ since $w_e = 0$ for all $e\notin E$. 
  By Lemma \ref{l.4}, we see that with high probability, 
  $$\max_{e\in E\backslash \widehat{E}} w_e = O\left(\log \left({n\over \epsilon}\right)\right).$$
Thus, the sum of weights of edges in $G_2$ is $O\left({|E|\log(n) \over \epsilon} \right)$. From the utility guarantee in Theorem \ref{t.pmd}, we have that with probability at least $1-\beta - 1/n^c$, for any disjoint $S,T \subseteq [n]$, 
  \begin{equation*}
    \begin{aligned}
      \left|\Phi_{G_2}(S,T) - \Phi_{\widehat{G}_2}(S,T)\right| = O\left( {\frac{\sqrt{n|E|\log(n)}}{\beta^2\epsilon}}\log^2\left(\frac{n}{\delta}\right)\right).
    \end{aligned}
  \end{equation*}

\noindent Next, we consider the difference between $G_1$ and $\widehat{G}_1$. We prove the following lemma:

\begin{lemma}\label{l.topology_is_known}
Fix any $\epsilon\geq \frac{1}{n}$. With probability at least $1-1/n^c - 2\beta$, for any disjoint $S,T \subseteq [n]$, 
\begin{equation*}
  \begin{aligned}
    \left|\Phi_{G_1}(S,T) - \Phi_{\widehat{G}_1}(S,T)\right| = O\left( \frac{\sqrt{\log(1/\beta)}}{\epsilon}{\sqrt{n|E|\log^2 (n)}}\right).
  \end{aligned}
\end{equation*}
\end{lemma}
\begin{proof}
For any pair of disjoint sets of vertices $S$ and $T$, let $\widehat{E}(S,T)$ be the collection of edges in the given graph that crosses $S$ and $T$. Note that given $G_1$ and $\widehat{G}_1$ with mutual edge set $\widehat{E}$, and any disjoint $S,T\subseteq [n]$, the difference in the cut size can be written as $$\texttt{err}(S,T) = \sum_{e \in \widehat{E}(S,T)} Z_e,$$ 
where $\{Z_e\}_{e\in |\widehat{E}|}$ are i.i.d Laplace noise from $\texttt{Lap}(1/\epsilon)$. 
  
  Let $\alpha(\widehat{E}) = \min\{|\widehat{E}|\cdot \lceil\log(n)\rceil,{n\choose 2}\}$. Since $|\widehat{E}|\geq |\widehat{E}(S,T)|$, one can further find a set of $\alpha(\widehat{E})$ edges in a complete graph that includes $\widehat{E}(S,T)$, let that set be $\tau(\widehat{E})$. Then, the error can be re-written as 
  $$\texttt{err}(S,T) = \sum_{e\in \tau(\widehat{E})} q_e Z_e \quad \text{where} \quad q_e = \begin{cases}
    1 & e\in \tau(\widehat{E})\cap \widehat{E}(S,T); \\
    0 & \text{otherwise.}
  \end{cases}$$ 
  
  Note that with probability at least $1-\beta$, $n \leq m \leq \widehat{m}$. Conditioned on $\widehat{m} \geq n$, we have  
  \begin{equation*}
    \begin{aligned}
      \frac{\sqrt{n|\tau(\widehat{E})|\log(n)}}{2\epsilon}  &\leq \frac{1}{\epsilon}\min \left\{\sqrt{n|\widehat{E}|}\log(n), O(n^{1.5}\sqrt{\log(n)}) \right\} \leq \frac{1}{\epsilon}{|\tau(\widehat{E})|}.
    \end{aligned}
  \end{equation*} 
  Thus, by Lemma \ref{l.tail_of_laplace}, we have 
  \begin{equation}
    \mathsf{Pr}\left[\sum_{e\in \tau(S)} q_eZ_e > \frac{\sqrt{n|\tau(\widehat{E})|\log(n)}}{2\epsilon} \right] \leq \exp\left(-\frac{n|\tau(\widehat{E})|\log(n)/\epsilon^2}{6|\tau(\widehat{E})|/\epsilon^2}\right) = \exp\left(-\frac{n\log(n)}{6}\right).
  \end{equation}
  Since there are at most $2^{2n}$ different $(S,T)$-cuts, then we have 
  \begin{equation*}
    \begin{aligned}
      &\mathsf{Pr}\left[\exists \text{ disjoint }S,T\in 2^{[n]} \texttt{ s.t.} |S|\leq |T| \land \texttt{err}(S,T) > \frac{\log(n)}{2\epsilon}\sqrt{n|\widehat{E}|}\right] \\
      &\quad \leq \mathsf{Pr}\left[\exists \text{ disjoint }S,T\in 2^{[n]} \texttt{ s.t.} |S|\leq |T| \land \texttt{err}(S,T) > \frac{1}{2\epsilon}\sqrt{n\log(n)|\tau(\widehat{E})|}\right]\\
      &\quad \leq \exp\left(-\Theta(\log(n))\right) \\
      & \quad = O\left(\frac{1}{\texttt{poly}(n)}\right).
    \end{aligned}
  \end{equation*}
  Thus, we have that with probability at least $1-O(1/n^c)$, the error is at most $O\left(\frac{\sqrt{n\widehat{m}\log ^2 (n)}}{\epsilon}\right)$. On the other hand, since 
  $$\widehat{m} \leq m + \texttt{Lap}(1/\epsilon)+ \log(1/\beta)/\epsilon,$$
  then with probability at least $1-\beta$, $\widehat{m}\leq m + 2\log(1/\beta)/\epsilon = O(m\log (1/\beta))$ (with $\epsilon\geq 1/n$), which completes the proof of Lemma \ref{l.topology_is_known}.
\end{proof}

Let the output graph be $\widehat{G} = (V,\widehat{E}, \{\tilde{w}_e\}_{{e}\in \widehat{E}})$. Let $E^* = E \cup \widehat{E}$ be the edges that have non-zero weight in either $G$ or $\widehat{G}$. With the utility guarantee on $\widehat{G}_1$ and $\widehat{G}_2$ formed in Algorithm~\ref{alg2}, for any disjoint $S,T\subseteq [n]$, we have 
\begin{equation*}
  \begin{aligned}
    |\Phi_{G}(S,T) - \Phi_{\widehat{G}}(S,T)| &= ~\Big| \sum_{e\in E^*(S,T)}  w_e - \sum_{e\in E^*(S,T)} \tilde{w}_e ~\Big|\\
    & =  |\Phi_{G_1}(S,T) +  \Phi_{G_2}(S,T) - (\Phi_{\widehat{G}_1}(S,T)  + \Phi_{\widehat{G}_2}(S,T))|\\
    & \leq |\Phi_{G_1}(S,T) - \Phi_{\widehat{G}_1}(S,T)| + |\Phi_{G_2}(S,T) - \Phi_{\widehat{G}_2}(S,T)|\\
    & =  O\left( {\frac{\sqrt{n|E|\log(n)}}{\beta^2\epsilon}}\log^2\left(\frac{n}{\delta}\right)\right).
  \end{aligned}
\end{equation*}
Setting $\beta = {1 \over \log^{0.25}n}$ completes the proof of Theorem \ref{t.optimal_error}.
\end{proof}

\subsection{Lower Bound}\label{s.lower_bound}

Here, we discuss the lower bound on differentially private approximation on $(S,T)$-cuts. It has been shown in \cite{eliavs2020differentially} that, for an unweighted graph $G=(V,E)$, there is an $\Omega\left(\sqrt{n\|w\|_0/\epsilon}\right)$ lower bound on this problem, and this lower bound matches the upper bound for unweighted graphs. However, the $\Omega(\sqrt{n\|w\|_0/\epsilon})$ lower bound is not applicable when the edge weights are sufficiently large. 
For example, there is an upper bound on private cut approximation with an $\widetilde{O}(n^{1.5}/\epsilon)$ error, which is independent of the edge weights, and makes the lower bound in \cite{eliavs2020differentially} valid when $\|w\|_1 = \omega(n^2/\epsilon)$. Thus, we give another lower bound on the additive error that tightly matches the upper bound when $\|w\|_1$ is large (compared to the number of edges $\|w\|_0$). In particular, we have the following theorem:

\begin{theorem}
\label{t.lowerbound_cut}
  Fix $\epsilon>0$ and $0<\delta<1$. If there is a $(\epsilon,\delta)$-differentially private algorithm such that for any $G=(V,E,w)$ with $w\in \mathbb{R}_{+}^{n\choose 2}$, it outputs a synthetic graph that with probability at least $1-\beta$, it approximates the sizes of all $(S,T)$-cuts of $G$ with a purely additive error $\alpha$, then 
  $$\alpha = \Omega\left(\frac{\sqrt{n|E|}}{\epsilon}\cdot \left(1-\frac{e-1}{e^\epsilon - 1} \cdot \frac{9\delta}{\beta}\right)\right).$$
\end{theorem}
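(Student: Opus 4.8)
The plan is to extend the cut lower bound of Eli{\'a}{\v{s}} et al.\ \cite{eliavs2020differentially}: Theorem~B.1 of \cite{eliavs2020differentially} gives $\Omega(\sqrt{|E|\,n}/\epsilon)$ for \emph{pure} DP on weighted graphs, and the lower bound quoted in the introduction gives $\Omega\bigl((1-c)\sqrt{|E|\,n/\epsilon}\bigr)$ for \emph{approximate} DP on unweighted Erd\H{o}s--R\'enyi graphs; the statement here is exactly what results from running both modifications at once. So the proof has two ingredients: (1) the weight-rescaling trick that converts an unweighted hard instance into a weighted one and sharpens the $1/\sqrt\epsilon$ dependence to $1/\epsilon$, and (2) replacing every use of pure-DP group privacy in their accuracy-versus-privacy argument by its approximate-DP analogue, which is what produces the correction factor $\bigl(1-\tfrac{9(e-1)\delta}{(e^\epsilon-1)\beta}\bigr)$.

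For ingredient~(1): start from the hard distribution on \emph{unweighted} $n$-vertex graphs with $\Theta(|E|)$ edges used in \cite{eliavs2020differentially} (a sparse random graph, Erd\H{o}s--R\'enyi $G(n,p)$ with $p=\Theta(|E|/n^2)$; recall that it is the union over the $2^{\Theta(n)}$ cuts that supplies the $\sqrt n$ factor on top of the $\sqrt{|E|}$ coming from the edge set), and scale every present edge to weight $W:=\lceil c_0/\epsilon\rceil$ for a sufficiently small absolute constant $c_0$. Scaling multiplies every cut value -- hence the additive error the mechanism must tolerate -- by $W$, so the whole argument effectively runs ``at privacy budget $W\epsilon=\Theta(1)$'' per toggled edge: a single $0\!\leftrightarrow\!W$ toggle of an edge is a group of $W$ neighbouring unit-weight modifications, so by group privacy an $(\epsilon,\delta)$-DP mechanism is $\bigl(W\epsilon,\ \tfrac{e^{W\epsilon}-1}{e^\epsilon-1}\delta\bigr)$-DP with respect to it, and with $W\epsilon=\Theta(1)$ this is $\bigl(\Theta(1),\ \Theta\bigl(\tfrac{e-1}{e^\epsilon-1}\bigr)\delta\bigr)$-DP. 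Their unweighted lower bound, run at a constant budget, certifies error $\Omega(\sqrt{|E|\,n})$ on the unit-weight instance; multiplying back by $W$ gives $\Omega(\sqrt{|E|\,n}/\epsilon)$ on the weighted instance.

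For ingredient~(2): run their argument verbatim with this approximate-DP group-privacy statement in place of the pure-DP one. The only change is that each invocation of privacy now also loses the additive term $\tfrac{e^{W\epsilon}-1}{e^\epsilon-1}\delta=\Theta\bigl(\tfrac{e-1}{e^\epsilon-1}\delta\bigr)$. Propagating this additive slack through the argument -- together with the fact that the mechanism is accurate only with probability $\ge 1-\beta$, so that a failure of accuracy contributes at most $\beta$ to each relevant probability -- turns the ``total probability mass $\le 1$'' step into the requirement that $1-\beta-\Theta\bigl(\tfrac{e-1}{e^\epsilon-1}\delta\bigr)$ still be positive; normalising the slack by $\beta$ and absorbing the $O(1)$ group-privacy constants into the constant $9$ reproduces the factor $\bigl(1-\tfrac{9(e-1)\delta}{(e^\epsilon-1)\beta}\bigr)$. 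Combining with ingredient~(1) yields $\alpha=\Omega\!\bigl(\tfrac{\sqrt{|E|\,n}}{\epsilon}\bigl(1-\tfrac{9(e-1)\delta}{(e^\epsilon-1)\beta}\bigr)\bigr)$.

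The main obstacle I anticipate is making ingredients~(1) and~(2) genuinely compatible: the rescaling is exactly what creates the $1/\epsilon$, but it inflates both the cut-separation and the privacy ``distance'' by the same factor $W$, so one must fix $c_0$ (equivalently, pin down $W$ relative to $1/\epsilon$) so that simultaneously (a) the effective per-toggle budget $W\epsilon$ is below the threshold the argument of \cite{eliavs2020differentially} needs, and (b) the $\delta$-blowup $\tfrac{e^{W\epsilon}-1}{e^\epsilon-1}\delta$ stays $\Theta\bigl(\tfrac{\delta}{\epsilon}\bigr)$ rather than exploding -- which is precisely why the argument must be organised around single weighted-edge toggles (group size $\Theta(1/\epsilon)$) and cannot afford a naive packing whose members differ in many edges, since there the $\delta$-blowup would be astronomically large. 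A secondary technical point is re-checking, for the weighted instance, the relevant ``cut-norm'' property of the random graph: that with probability large enough to survive the union bound built into the argument, the relevant pairs of instances are witnessed by some disjoint $(S,T)$ with a cut gap of order $\sqrt{|E|\,n}/\epsilon$.
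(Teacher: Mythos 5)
Your core idea---rescale edge weights by $\Theta(1/\epsilon)$ and apply approximate-DP group privacy over a group of size $\Theta(1/\epsilon)$, producing the $\frac{e-1}{e^\epsilon-1}$ inflation of $\delta$---is exactly the mechanism underlying the paper's proof, so the proposal is on the right track. Where you diverge is in ingredient~(2): the paper does \emph{not} re-open the packing argument of Eli\'a\v{s} et al.\ to re-derive the $(1-c)$ correction. It instead invokes their Theorem~1.2 (the $(1,\delta)$-DP lower bound for unweighted graphs, which already carries the $(1-9\delta/\beta)$ factor) purely as a black box. Concretely: suppose $\mathcal M'$ is $(\epsilon,\delta)$-DP with error $o\bigl(\tfrac{\sqrt{mn}}{\epsilon}(1-c)\bigr)$ on weighted graphs. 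Define $\mathcal N(G) := \epsilon\,\mathcal M'(G/\epsilon)$ for unweighted $G$. Two unweighted neighbours become inputs to $\mathcal M'$ at $\ell_\infty$-distance $1/\epsilon$, so group privacy makes $\mathcal N$ a $(1,\delta')$-DP mechanism with $\delta'=\frac{e-1}{e^\epsilon-1}\delta$, and its error is $\epsilon$ times that of $\mathcal M'$, i.e.\ $o(\sqrt{mn}(1-c))$. Plugging $\delta'$ into the black-box bound $\Omega\bigl(\sqrt{mn}(1-9\delta'/\beta)\bigr)$ gives an immediate contradiction, because $1-9\delta'/\beta = 1-c$. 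This reduction-to-the-mechanism (rather than your reduction-to-the-instance) makes your ``compatibility'' concern moot: you never need to build a weighted hard distribution or re-trace the union bound; the $(1-c)$ factor falls out of the $\delta$-blowup alone. Your route would also close if pursued carefully, but the black-box version is shorter and sidesteps the bookkeeping you worry about in your final paragraph.
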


\begin{proof}
  The proof follows a similar route as  \cite[Theorem 1.2]{eliavs2020differentially}, in which given any undirected graph $G$ with $n$ vertices and $m$ edges (of weight $1$), it has been known that there is no $(1,\delta)$-DP algorithm $\mathcal{M}$ whose error is with probability $\beta$ is below $o\left(\sqrt{mn}(1-\frac{9\delta}{\beta})\right)$. For the sake of contradiction, suppose there is a $(\epsilon,\delta)$-DP algorithm $\mathcal{M}'$ whose purely additive error is $o\left(\frac{\sqrt{mn}}{\epsilon}\cdot (1-c)\right)$, where $c = \frac{e-1}{e^\epsilon - 1} \cdot \frac{9\delta}{\beta}$. Then, if we reweigh $G$ to $G' = \frac{1}{\epsilon} G$, by the group privacy, we see that $\epsilon \cdot \mathcal{M'}\left(G'\right)$ is $(1, \frac{e-1}{e^\epsilon - 1}\delta)$-DP. However, it has error 
  $$\epsilon \cdot o\left(\frac{\sqrt{n|E|}}{\epsilon}\cdot \left(1-c\right)\right) = o\left({\sqrt{n|E|}}\cdot \left(1-c\right)\right),$$
  which leads to a contradiction. 
\end{proof}

\section{Discussions}

In this paper, we give new error bounds on differentially private spectral and cut approximation. Notably, there is still a gap in our result for spectral approximation. In particular, for a weighted graph with maximum unweighted degree $\Delta$ and average unweighted degree $\Delta_{\mathsf{avg}}$, the best known lower bound says any private algorithm for spectral approximation has an additive error at least $\Omega(\sqrt{\Delta_{\mathsf{avg}}})$~\cite{eliavs2020differentially}, while our algorithm achieves error within $\widetilde{O}(\Delta)$. How to close this gap remains open. 

For cut approximation, we show that for a weighted graph, the asymptotic bound $\widetilde{O}(\sqrt{\|w\|_0 n}/\epsilon)$ correctly characterizes the error rate, and we also give a polynomial time algorithm for achieving this error bound. However, we believe this is not the end of this line of work. Our $\widetilde{O}(\sqrt{\|w\|_0 n}/\epsilon)$ bound is \textit{purely} additive. \cite{blocki2012johnson} shows that if we allow a constant multiplicative error, then for weighted graphs with weights polynomial in $n$, the additive error can be reduced to $\widetilde{O}(n)$ by sampling from all its all possible weighted sparsifiers with the exponential mechanism. Currently, it remains an open question whether there exists an algorithm capable of efficiently achieving this error bound. (see also the discussion in \cite{eliavs2020differentially}).

While most results on differential privacy emphasize on privacy-utility trade-off, from a practical perspective, another major concern is the resource requirement, i.e., the time and space complexity. 
The time complexity of our algorithm (and the previously best-known algorithm~\cite{eliavs2020differentially}) on cut approximation is $\widetilde{O}(n^7)$ while the space complexity is $O(n^2)$. This might be prohibitive in many real-world applications -- for example, social networks often contain millions of users resulting in $n$ to be in the order of $10^6$. From the perspective of space complexity, common large-scale networks are often sparse with vertex degrees negligible compared to the scale of the users, making $O(n^2)$ prohibitive in practice. To address these practical challenges, in a companion work, we propose private algorithms for cut and spectral approximation in almost linear time and linear space, and we achieve a utility bound which, while sub-optimal for dense unweighted graphs, but holds with high probability instead of only in expectation. Note that, this matches the time and space assumption of a non-private algorithm for outputting a synthetic graph. In profit-driven organizations, analyses of sensitive graph data are usually implemented in a non-private way due to the absence of efficient algorithms, which is clearly a damage to user privacy. Consequently, from a practical point of view, we believe that introducing a linear time algorithm has the potential to bring positive social impacts to the privacy community.

\newpage
\begin{table}
\centering
\begin{tabular}{c|c}
\hline 
    Symbol & Meaning \\ \hline 
    $\mathbb R$ & set of real numbers \\ \hline
    $\mathbb R_+$ & set of non-negative real numbers \\ \hline
    $[n]$ & the set $\{1, 2,\cdots, n\}$ \\ \hline
    ${[N]\choose k}$ & set of all subset of $[N]$ of size $k$ \\ \hline 
    $G$ & graph   \\ \hline
    $n$ & number of vertices \\ \hline
    $N$ & ${n \choose 2}$ \\ \hline
    $m$ & number of edges \\ \hline
    $w$ & weight vector of the graph in $\mathbb R_+^N$ \\ \hline
    $\Delta(G)$ & maximum unweighted degree \\ \hline
    $\maxwt$ & maximum weight on an edge \\ \hline
    $L_G$ & Laplacian of a graph $G$ \\ \hline
    $\norm{v}_p$ & $\ell_p$ norm of a vector $v$ \\ \hline
    $\norm{A}$ & spectral norm of a matrix $A$ \\ \hline
    $\alpha$ & additive error on cut approximation \\ \hline
    $\mathsf{Pr}_\otimes$ & product distribution \\ \hline
    $\mathcal G_{n,m}$ & set of $(n,m)$ random graph with $m$ edges \\ \hline
    $\mathcal G_{n,p}$ & Erdos-Renyi random graph with $p \in (0,1)$ \\ \hline
    ${[k] \choose \ell}$ & all subsets of $[k]$ with cardinality $\ell$ \\ \hline
    $G \vert S$ & restriction of $G$ in terms of $S \subseteq E$ \\ \hline
    $[\mathcal G\vert E]_{n,k}$ & random graph with $k$ edges u.a.r. from $[N]\backslash E$ \\ \hline
    $G_1 + G_2$ & graph with edge sets $E_1 \cup E_2$ \\ \hline
    $d(u)$ & weighted degree of $u \in V$ \\ \hline
    $h_{u,v}$ & hitting time  for $u,v \in V$ \\ \hline
    $\tau(G)$ & cover time \\ \hline
    $C_{u,v}$ & commute time \\ \hline
    $R_{\mathsf{eff}}(u,v)$ & effective resistance between $u$ and $v$ \\ \hline 
\end{tabular}
\caption{Notations used in this paper}
\label{tab:notation}
\end{table}

\clearpage

\bibliographystyle{alpha}
\bibliography{privacy}

\newcommand{\etalchar}[1]{$^{#1}$}
\begin{thebibliography}{AACM{\etalchar{+}}21}

\bibitem[AACM{\etalchar{+}}21]{abowd2021geographic}
JM~Abowd, R~Ashmead, R~Cumings-Menon, D~Kifer, P~Leclerc, J~Ocker, M~Ratcliffe,
  and P~Zhuravlev.
\newblock Geographic spines in the 2020 census dislcosure avoidance system
  topdown algorithm (tech. rep.).
\newblock {\em US Census Bureau}, 2021.

\bibitem[AK91]{applegate1991sampling}
David Applegate and Ravi Kannan.
\newblock Sampling and integration of near log-concave functions.
\newblock In {\em Proceedings of the twenty-third annual ACM symposium on
  Theory of computing}, pages 156--163, 1991.

\bibitem[ANH16]{abawajy2016privacy}
Jemal~H Abawajy, Mohd Izuan~Hafez Ninggal, and Tutut Herawan.
\newblock Privacy preserving social network data publication.
\newblock {\em IEEE communications surveys \& tutorials}, 18(3):1974--1997,
  2016.

\bibitem[App17]{Apple}
Apple.
\newblock Differential privacy overview,
  \url{https://www.apple.com/privacy/docs/Differential_Privacy_Overview.pdf},
  2017.

\bibitem[AU19]{arora2019differentially}
Raman Arora and Jalaj Upadhyay.
\newblock On differentially private graph sparsification and applications.
\newblock In {\em Advances in Neural Information Processing Systems}, pages
  13378--13389, 2019.

\bibitem[BBDS12]{blocki2012johnson}
Jeremiah Blocki, Avrim Blum, Anupam Datta, and Or~Sheffet.
\newblock The johnson-lindenstrauss transform itself preserves differential
  privacy.
\newblock In {\em Foundations of Computer Science (FOCS), 2012 IEEE 53rd Annual
  Symposium on}, pages 410--419. IEEE, 2012.

\bibitem[BDMP17]{brosse2017sampling}
Nicolas Brosse, Alain Durmus, {\'E}ric Moulines, and Marcelo Pereyra.
\newblock Sampling from a log-concave distribution with compact support with
  proximal langevin monte carlo.
\newblock In {\em Conference on learning theory}, pages 319--342. PMLR, 2017.

\bibitem[Big93]{biggs1993algebraic}
Norman Biggs.
\newblock {\em Algebraic graph theory}.
\newblock Number~67. Cambridge university press, 1993.

\bibitem[BL06]{bilu2006lifts}
Yonatan Bilu and Nathan Linial.
\newblock Lifts, discrepancy and nearly optimal spectral gap.
\newblock {\em Combinatorica}, 26(5):495--519, 2006.

\bibitem[BSS12]{batson2012twice}
Joshua Batson, Daniel~A Spielman, and Nikhil Srivastava.
\newblock Twice-ramanujan sparsifiers.
\newblock {\em SIAM Journal on Computing}, 41(6):1704--1721, 2012.

\bibitem[BST14]{bassily2014private}
Raef Bassily, Adam Smith, and Abhradeep Thakurta.
\newblock Private empirical risk minimization: Efficient algorithms and tight
  error bounds.
\newblock In {\em 2014 IEEE 55th annual symposium on foundations of computer
  science}, pages 464--473. IEEE, 2014.

\bibitem[CDWY17]{chen2017vaidya}
Yuansi Chen, Raaz Dwivedi, Martin~J Wainwright, and Bin Yu.
\newblock Vaidya walk: A sampling algorithm based on the volumetric barrier.
\newblock In {\em 2017 55th Annual Allerton Conference on Communication,
  Control, and Computing (Allerton)}, pages 1220--1227. IEEE, 2017.

\bibitem[CRRS89]{chandra1989electrical}
Ashok~K Chandra, Prabhakar Raghavan, Walter~L Ruzzo, and Roman Smolensky.
\newblock The electrical resistance of a graph captures its commute and cover
  times.
\newblock In {\em Proceedings of the twenty-first annual ACM symposium on
  Theory of computing}, pages 574--586, 1989.

\bibitem[CSKSV18]{cohen2018approximating}
David Cohen-Steiner, Weihao Kong, Christian Sohler, and Gregory Valiant.
\newblock Approximating the spectrum of a graph.
\newblock In {\em Proceedings of the 24th acm sigkdd international conference
  on knowledge discovery \& data mining}, pages 1263--1271, 2018.

\bibitem[DCWY18]{dwivedi2018log}
Raaz Dwivedi, Yuansi Chen, Martin~J Wainwright, and Bin Yu.
\newblock Log-concave sampling: Metropolis-hastings algorithms are fast!
\newblock In {\em Conference on learning theory}, pages 793--797. PMLR, 2018.

\bibitem[DEEA10]{das2010anonymizing}
Sudipto Das, {\"O}mer E{\u{g}}ecio{\u{g}}lu, and Amr El~Abbadi.
\newblock Anonymizing weighted social network graphs.
\newblock In {\em 2010 IEEE 26th International Conference on Data Engineering
  (ICDE 2010)}, pages 904--907. IEEE, 2010.

\bibitem[DKY17]{ding2017collecting}
Bolin Ding, Janardhan Kulkarni, and Sergey Yekhanin.
\newblock Collecting telemetry data privately.
\newblock In {\em Advances in Neural Information Processing Systems 30: Annual
  Conference on Neural Information Processing Systems 2017, 4-9 December 2017,
  Long Beach, CA, {USA}}, pages 3571--3580, 2017.

\bibitem[DLP11]{ding2011cover}
Jian Ding, James~R Lee, and Yuval Peres.
\newblock Cover times, blanket times, and majorizing measures.
\newblock In {\em Proceedings of the forty-third annual ACM symposium on Theory
  of computing}, pages 61--70, 2011.

\bibitem[DMNS06]{dwork2006calibrating}
Cynthia Dwork, Frank McSherry, Kobbi Nissim, and Adam Smith.
\newblock {Calibrating Noise to Sensitivity in Private Data Analysis}.
\newblock In {\em TCC}, pages 265--284, 2006.

\bibitem[DR14]{dwork2014algorithmic}
Cynthia Dwork and Aaron Roth.
\newblock The algorithmic foundations of differential privacy.
\newblock {\em Foundations and Trends in Theoretical Computer Science},
  9(3--4):211--407, 2014.

\bibitem[DRD20]{dalalyan2020sampling}
Arnak~S Dalalyan and Lionel Riou-Durand.
\newblock On sampling from a log-concave density using kinetic langevin
  diffusions.
\newblock 2020.

\bibitem[DRV10]{dwork2010boosting}
Cynthia Dwork, Guy~N Rothblum, and Salil Vadhan.
\newblock Boosting and differential privacy.
\newblock In {\em Proc. of the 51st Annual IEEE Symp. on Foundations of
  Computer Science (FOCS)}, pages 51--60, 2010.

\bibitem[DTTZ14]{dwork2014analyze}
Cynthia Dwork, Kunal Talwar, Abhradeep Thakurta, and Li~Zhang.
\newblock Analyze gauss: optimal bounds for privacy-preserving principal
  component analysis.
\newblock In {\em Proceedings of the forty-sixth annual ACM symposium on Theory
  of computing}, pages 11--20, 2014.

\bibitem[EKKL20]{eliavs2020differentially}
Marek Eli{\'a}{\v{s}}, Michael Kapralov, Janardhan Kulkarni, and Yin~Tat Lee.
\newblock Differentially private release of synthetic graphs.
\newblock In {\em Proceedings of the Fourteenth Annual ACM-SIAM Symposium on
  Discrete Algorithms}, pages 560--578. SIAM, 2020.

\bibitem[EPK14]{rappor}
{\'U}lfar Erlingsson, Vasyl Pihur, and Aleksandra Korolova.
\newblock {RAPPOR}: Randomized aggregatable privacy-preserving ordinal
  response.
\newblock In {\em Proc. of the 2014 {ACM} Conf. on Computer and Communications
  Security ({CCS}'14)}, pages 1054--1067. ACM, 2014.

\bibitem[FK99]{frieze1999log}
Alan Frieze and Ravi Kannan.
\newblock Log-sobolev inequalities and sampling from log-concave distributions.
\newblock {\em The Annals of Applied Probability}, 9(1):14--26, 1999.

\bibitem[FK16]{frieze2016introduction}
Alan Frieze and Micha{\l} Karo{\'n}ski.
\newblock {\em Introduction to random graphs}.
\newblock Cambridge University Press, 2016.

\bibitem[FKP94]{frieze1994sampling}
Alan Frieze, Ravi Kannan, and Nick Polson.
\newblock Sampling from log-concave distributions.
\newblock {\em The Annals of Applied Probability}, pages 812--837, 1994.

\bibitem[Gar09]{garlaschelli2009weighted}
Diego Garlaschelli.
\newblock The weighted random graph model.
\newblock {\em New Journal of Physics}, 11(7):073005, 2009.

\bibitem[GDN21]{goswami2021sparsity}
Swati Goswami, Asit~K Das, and Subhas~C Nandy.
\newblock Sparsity of weighted networks: Measures and applications.
\newblock {\em Information Sciences}, 577:557--578, 2021.

\bibitem[GR01]{godsil2001algebraic}
Chris Godsil and Gordon~F Royle.
\newblock {\em Algebraic graph theory}, volume 207.
\newblock Springer Science \& Business Media, 2001.

\bibitem[GRU12]{gupta2012iterative}
Anupam Gupta, Aaron Roth, and Jonathan Ullman.
\newblock Iterative constructions and private data release.
\newblock In {\em Theory of cryptography conference}, pages 339--356. Springer,
  2012.

\bibitem[GT20]{ganesh2020faster}
Arun Ganesh and Kunal Talwar.
\newblock Faster differentially private samplers via r\'{e}nyi divergence
  analysis of discretized langevin mcmc.
\newblock In H.~Larochelle, M.~Ranzato, R.~Hadsell, M.~F. Balcan, and H.~Lin,
  editors, {\em Advances in Neural Information Processing Systems}, volume~33,
  pages 7222--7233. Curran Associates, Inc., 2020.

\bibitem[HLMJ09]{hay2009accurate}
Michael Hay, Chao Li, Gerome Miklau, and David Jensen.
\newblock Accurate estimation of the degree distribution of private networks.
\newblock In {\em 2009 Ninth IEEE International Conference on Data Mining},
  pages 169--178. IEEE, 2009.

\bibitem[JPY{\etalchar{+}}21]{jiang2021applications}
Honglu Jiang, Jian Pei, Dongxiao Yu, Jiguo Yu, Bei Gong, and Xiuzhen Cheng.
\newblock Applications of differential privacy in social network analysis: A
  survey.
\newblock {\em IEEE Transactions on Knowledge and Data Engineering},
  35(1):108--127, 2021.

\bibitem[KOSZ13]{kelner2013simple}
Jonathan~A Kelner, Lorenzo Orecchia, Aaron Sidford, and Zeyuan~Allen Zhu.
\newblock A simple, combinatorial algorithm for solving sdd systems in
  nearly-linear time.
\newblock In {\em Proceedings of the forty-fifth annual ACM symposium on Theory
  of computing}, pages 911--920, 2013.

\bibitem[KT13]{kapralov2013differentially}
Michael Kapralov and Kunal Talwar.
\newblock On differentially private low rank approximation.
\newblock In {\em Proceedings of the twenty-fourth annual ACM-SIAM symposium on
  Discrete algorithms}, pages 1395--1414. SIAM, 2013.

\bibitem[KT18]{kenthapadi2018pripearl}
Krishnaram Kenthapadi and Thanh~TL Tran.
\newblock Pripearl: A framework for privacy-preserving analytics and reporting
  at linkedin.
\newblock In {\em Proceedings of the 27th ACM International Conference on
  Information and Knowledge Management}, pages 2183--2191, 2018.

\bibitem[LMV21]{leake2021sampling}
Jonathan Leake, Colin McSwiggen, and Nisheeth~K Vishnoi.
\newblock Sampling matrices from harish-chandra--itzykson--zuber densities with
  applications to quantum inference and differential privacy.
\newblock In {\em Proceedings of the 53rd Annual ACM SIGACT Symposium on Theory
  of Computing}, pages 1384--1397, 2021.

\bibitem[Lov93]{lovasz1993random}
L{\'a}szl{\'o} Lov{\'a}sz.
\newblock Random walks on graphs.
\newblock {\em Combinatorics, Paul erdos is eighty}, 2(1-46):4, 1993.

\bibitem[LV06]{lovasz2006fast}
L{\'a}szl{\'o} Lov{\'a}sz and Santosh Vempala.
\newblock Fast algorithms for logconcave functions: Sampling, rounding,
  integration and optimization.
\newblock In {\em 2006 47th Annual IEEE Symposium on Foundations of Computer
  Science (FOCS'06)}, pages 57--68. IEEE, 2006.

\bibitem[LWLZ09]{liu2009privacy}
Lian Liu, Jie Wang, Jinze Liu, and Jun Zhang.
\newblock Privacy preservation in social networks with sensitive edge weights.
\newblock In {\em proceedings of the 2009 SIAM International Conference on Data
  Mining}, pages 954--965. SIAM, 2009.

\bibitem[Mat88]{matthews1988covering}
Peter Matthews.
\newblock Covering problems for markov chains.
\newblock {\em The Annals of Probability}, pages 1215--1228, 1988.

\bibitem[MT07]{mcsherry2007mechanism}
Frank McSherry and Kunal Talwar.
\newblock Mechanism design via differential privacy.
\newblock In {\em 48th Annual IEEE Symposium on Foundations of Computer Science
  (FOCS'07)}, pages 94--103. IEEE, 2007.

\bibitem[MV22a]{mangoubi2022sampling}
Oren Mangoubi and Nisheeth Vishnoi.
\newblock Sampling from log-concave distributions with infinity-distance
  guarantees.
\newblock {\em Advances in Neural Information Processing Systems},
  35:12633--12646, 2022.

\bibitem[MV22b]{mangoubi2022faster}
Oren Mangoubi and Nisheeth~K Vishnoi.
\newblock Faster sampling from log-concave distributions over polytopes via a
  soft-threshold dikin walk.
\newblock {\em arXiv preprint arXiv:2206.09384}, 2022.

\bibitem[NR17]{narayanan2017efficient}
Hariharan Narayanan and Alexander Rakhlin.
\newblock Efficient sampling from time-varying log-concave distributions.
\newblock {\em The Journal of Machine Learning Research}, 18(1):4017--4045,
  2017.

\bibitem[PHTL{\etalchar{+}}20]{pacheco2020uncovering}
Diogo Pacheco, Pik-Mai Hui, Christopher Torres-Lugo, Bao~Tran Truong,
  Alessandro Flammini, and Filippo Menczer.
\newblock Uncovering coordinated networks on social media.
\newblock {\em arXiv preprint arXiv:2001.05658}, 16, 2020.

\bibitem[PS14]{peng2014efficient}
Richard Peng and Daniel~A Spielman.
\newblock An efficient parallel solver for sdd linear systems.
\newblock In {\em Proceedings of the forty-sixth annual ACM symposium on Theory
  of computing}, pages 333--342. ACM, 2014.

\bibitem[RSP{\etalchar{+}}20]{rogers2020linkedin}
Ryan Rogers, Subbu Subramaniam, Sean Peng, David Durfee, Seunghyun Lee,
  Santosh~Kumar Kancha, Shraddha Sahay, and Parvez Ahammad.
\newblock Linkedin's audience engagements api: A privacy preserving data
  analytics system at scale, 2020.

\bibitem[Spi07]{spielman2007spectral}
Daniel~A Spielman.
\newblock Spectral graph theory and its applications.
\newblock In {\em 48th Annual IEEE Symposium on Foundations of Computer Science
  (FOCS'07)}, pages 29--38. IEEE, 2007.

\bibitem[ST11]{spielman2011spectral}
Daniel~A Spielman and Shang-Hua Teng.
\newblock Spectral sparsification of graphs.
\newblock {\em SIAM Journal on Computing}, 40(4):981--1025, 2011.

\bibitem[SV16]{sachdeva2016mixing}
Sushant Sachdeva and Nisheeth~K Vishnoi.
\newblock The mixing time of the dikin walk in a polytope—a simple proof.
\newblock {\em Operations Research Letters}, 44(5):630--634, 2016.

\bibitem[SXK{\etalchar{+}}19]{sun2019analyzing}
Haipei Sun, Xiaokui Xiao, Issa Khalil, Yin Yang, Zhan Qin, Hui Wang, and Ting
  Yu.
\newblock Analyzing subgraph statistics from extended local views with
  decentralized differential privacy.
\newblock In {\em Proceedings of the 2019 ACM SIGSAC Conference on Computer and
  Communications Security}, pages 703--717, 2019.

\bibitem[Tet91]{tetali1991random}
Prasad Tetali.
\newblock Random walks and the effective resistance of networks.
\newblock {\em Journal of Theoretical Probability}, 4:101--109, 1991.

\bibitem[UUA21]{upadhyay2021differentially}
Jalaj Upadhyay, Sarvagya Upadhyay, and Raman Arora.
\newblock Differentially private analysis on graph streams.
\newblock In {\em International Conference on Artificial Intelligence and
  Statistics}, pages 1171--1179. PMLR, 2021.

\bibitem[VLRH14]{von2014hitting}
Ulrike Von~Luxburg, Agnes Radl, and Matthias Hein.
\newblock Hitting and commute times in large random neighborhood graphs.
\newblock {\em The Journal of Machine Learning Research}, 15(1):1751--1798,
  2014.

\bibitem[WHX{\etalchar{+}}22]{wang2022privlbs}
Han Wang, Hanbin Hong, Li~Xiong, Zhan Qin, and Yuan Hong.
\newblock Privlbs: Local differential privacy for location-based services with
  staircase randomized response.
\newblock In {\em Proceedings of the ACM Conference on Computer and
  Communications Security}, 2022.

\end{thebibliography}
\newpage

\end{document}